\documentclass[sn-mathphys-num,pdflatex]{sn-jnl}

\usepackage{fix-cm}
\usepackage{graphicx}
\usepackage{amsmath,amssymb,amsfonts}
\usepackage{amsthm}
\usepackage[title]{appendix}
\usepackage{booktabs}
\usepackage{mathtools}
\usepackage{enumitem}
\usepackage{longtable}
\usepackage{array}

\DeclarePairedDelimiter\abs{\lvert}{\rvert}

\renewcommand{\le}{\leqslant}

\renewcommand{\ge}{\geqslant}

\newcommand{\cC}{\mathcal{C}}
\newcommand{\cD}{\mathcal{D}}
\newcommand{\cT}{\mathcal{T}}
\newcommand{\B}{\mathcal{B}}
\newcommand{\LPC}{\operatorname{LPC}_{\infty}}
\newcommand{\lmax}{\ell_{\max}}
\newcommand{\Live}{\mathcal{L}}

\newcommand{\ko}{K^{\star}}
\newcommand{\dko}{K_{\mathrm{det}}^{\star}}
\newcommand{\lc}{\operatorname{lcm}}
\newcommand{\dH}{d_{\mathrm H}}
\newcommand{\Dmin}{D_{\min}}
\newcommand{\Tball}{\mathsf{T}}
\newcommand{\wt}{\operatorname{wt}}
\newcommand{\comp}{\operatorname{comp}}
\newcommand{\Pref}{\operatorname{Pref}}
\newcommand{\prefx}{\operatorname{pref}}

\theoremstyle{thmstyleone}
\newtheorem{theorem}{Theorem}
\newtheorem{proposition}[theorem]{Proposition}
\newtheorem{corollary}[theorem]{Corollary}
\newtheorem{lemma}[theorem]{Lemma}

\newtheorem{conjecture}[theorem]{Conjecture}

\theoremstyle{thmstyletwo}
\newtheorem{example}{Example}

\theoremstyle{thmstylethree}
\newtheorem{definition}{Definition}

\hypersetup{bookmarksdepth=subsubsection}
\begin{document}

\title[Extinction Depth and \(q\)-ary Error-Correcting Codes]%
{Extinction Depth and $q$-ary Error-Correcting Codes for the Limited Permutation Channel}

\author[1]{\fnm{Noam} \sur{Ben Shimon}}
\email{noamben@campus.technion.ac.il}

\author[2]{\fnm{Aryeh~Lev} \sur{Zabokritskiy (Yohananov)}}
\email{yuhanalev@telhai.ac.il}

\affil[1]{%
\orgdiv{Department of Computer Science},
\orgname{Technion -- Israel Institute of Technology},
\orgaddress{\city{Haifa}, \country{Israel}}
}

\affil[2]{%
\orgdiv{Department of Computer Science},
\orgname{MIGAL -- Galilee Research Institute/Tel-Hai University of Kiryat Shmona and the Galilee},
\orgaddress{\city{Kiryat Shmona}, \country{Israel}}
}

\abstract{
In the radius-one limited permutation channel, errors consist of disjoint
adjacent transpositions.  A correcting code must separate distinct codewords:
their error balls may not contain a common received word.  Hamming distance
does not ensure this, because disjoint swaps can make words differing in many
positions confusable.  For block-concatenation codes, earlier work tested each
possible collision only through the longer initial block.  This sufficient
condition is not necessary: we exhibit a valid ternary block set that it does
not certify.  We introduce extinction depth, which tracks unresolved
first-block pairs through later block extensions, and prove that their
extinction at one common finite horizon certifies correction at every length.
The criterion gives explicit \(q=3\) and \(q=5\) block sets with rates above
\(0.6777475\) and \(0.6694926\).  No block-set-independent depth bound exists:
we give an exact linear family, verify a quadratic formula for every
\(3\le k\le60\), and derive a polynomial-time finite-graph test.  For growing
alphabets, the normalized correction loss lies between \(\ln\varphi\) and
\(\ln(1.82560995)\), while explicit finite-length covers improve
finite-alphabet upper bounds.  We develop a parallel directed-extinction
theory for detection, including an all-length block criterion, a set unresolved
by the earlier test, and exact corridor depth \(4k\).  Stable type lifting
yields optimal first-order loss \(\sqrt2\), and weak-zigzag codes improve the
\(q=3,4\) lower bounds.  Finally, window restriction, cancellation, and a
bounded pending-input frontier extend the criterion and its finite-state
verification to every fixed displacement radius \(r\).
}

\keywords{adjacent transpositions, block-concatenation codes, error correction, error detection, limited permutation channel}

\pacs[MSC Classification]{94B25, 94B50, 94B65, 68P30}

\maketitle
\section{Introduction}
\label{sec:intro}

Permutation and reordering channels model communication systems in which a
transmitted word is received with the same symbol multiplicities but in a
perturbed order.  When arbitrary reorderings are allowed, the receiver retains
only the multiset of transmitted symbols, leading to codes in multiset
spaces~\cite{KovacevicTan18}.  A separate line of work takes the codewords
themselves to be permutations.  For background on metrics on permutations, see
Deza and Huang~\cite{DezaHuang98}; permutation arrays and rank-modulation codes
under the Chebyshev, or \(\ell_\infty\), metric were developed in
\cite{KloveEtAl10,TamoSchwartz10} and continue to be studied in
\cite{BHMS24}.  The channel considered here differs from both settings: its
inputs are arbitrary \(q\)-ary words, so symbols may repeat, but their
reordering is locally bounded.

Langberg, Schwartz, and Yaakobi introduced the \(\ell_\infty\)-limited
permutation channel \(\LPC(r)\), where no symbol may move by more than \(r\)
positions~\cite{LSY17}.  We focus on \(r=1\), for which one channel use applies
a set of pairwise disjoint adjacent transpositions.  Adjacent-transposition
errors have also been studied as one or a bounded number of successive edits,
sometimes together with deletions or under an asymmetric swap
rule~\cite{GabrysYaakobiMilenkovic18,KhuatKim23,WangVuTan25}.  Those models may
allow swaps to overlap over time or impose a different error budget; they are
not the same as one simultaneous layer of disjoint swaps.  Kova\v{c}evi\'c,
Goyal, and Kiah recently studied the latter model directly, deriving
asymptotic bounds for a prescribed number of swaps and a block construction
that corrects every admissible error~\cite{KGK25}.

For this channel, correcting every admissible error means keeping each pair of
distinct codewords distinguishable after transmission.  A \emph{collision}
occurs when two codewords can produce the same received word, equivalently
when their sets of possible channel outputs intersect.  Hamming distance alone
does not characterize this requirement: simultaneous disjoint transpositions
can make even words differing in many symbol positions confusable.

Block and string-concatenation constructions already underlie the recursive
families of Langberg, Schwartz, and Yaakobi and the later constructions of
Chee et al.~\cite{LSY17,CKLNVZ16}.  In~\cite{PriorWork2026}, this approach was
formulated through a finite set \(P\) of \(q\)-ary blocks generating
\(C_n(P)\), the length-\(n\) words formed by free concatenation of blocks from
\(P\).  This framework describes codes of all lengths by one finite object,
but correctness is not a blockwise property: a transposition across a block
boundary may obscure the parsing and carry a possible collision through
several later blocks.  To certify such a block set, that work used a local
sufficient test.  For each pair of distinct possible first blocks, the test
stops when the longer initial block has ended; only the shorter side is
extended as needed to reach that point.  It does not continue both candidate
words through later blocks.  Passing the test certifies that every \(C_n(P)\)
is correcting, but failure means only that the test is inconclusive, not that
a collision exists.  We restate this test as the two-stage criterion in
Theorem~\ref{thm:two-stage-import}.

This sufficient condition is not necessary.  The ternary construction in
Theorem~\ref{thm:rates} is not certified by the earlier two-stage criterion,
yet every one of its first-block pairs has finite extinction depth, with some
surviving beyond the old cutoff.  Thus the new criterion certifies a valid
correcting code that the earlier local test leaves unresolved.

We therefore continue every unresolved first-block pair through all legal
later block extensions on both sides.  A pair remains \emph{live} while a
possible collision is still unresolved and becomes \emph{extinct} once no
such ambiguity survives.  Extinction is irreversible: later block extensions
cannot recreate a separated pair.  If every first-block pair becomes extinct
at one common finite horizon, then a longer collision would have survived to
that horizon, while a shorter collision could be extended to it by appending
the same legal blocks on both sides.  Hence the same finite certificate proves
that \(C_n(P)\) is correcting for every \(n\).  The first separating horizon
is the \emph{extinction depth}.  It also measures how far a verifier must
explore before certifying the block set, although the total running time still
depends on the number of reachable configurations.

The criterion produces sharper fixed-alphabet constructions and also feeds a
large-alphabet lifting argument.  The normalized rate records the exponential
growth of the number of codewords with their length, while the capacity is the
supremum of asymptotic rates achievable by correcting codes.  Write
\(C_0^{(q)}\) for this capacity in the radius-one channel over a \(q\)-symbol
alphabet.  The new block sets constructed here have rates above \(0.6777475\)
for \(q=3\) and \(0.6694926\) for \(q=5\).

On the upper-bound side, Langberg, Schwartz, and Yaakobi used a length-three
covering code to bound \(C_0^{(q)}\)~\cite{LSY17}.  We first refine that cover
at length six, obtaining strict numerical improvements for every \(q\ge3\).
A length-eighteen refinement uses an exact balanced cover on every two-symbol
subalphabet and further strengthens the resulting \(q\)-ary bound.  We then lift covers of the
all-distinct permutation component at an arbitrary
fixed window length.  Combining this lift with the general permutation-cover
estimate of Farnoud, Schwartz, and Bruck~\cite{FSB16} gives
\[
 \liminf_{q\to\infty}(1-C_0^{(q)})\ln q\ge\ln\varphi,
 \qquad \varphi=\frac{1+\sqrt5}{2};
\]
see Theorem~\ref{thm:covering-phi}.

For lower bounds, the balanced two-class construction of Chee et
al.~\cite{CKLNVZ16} has rate
\(R_{\mathrm{Chee}}(q)=\frac12\log_q(\lfloor q/2\rfloor\lceil q/2\rceil)
=1-\log_q2+O(1/(q^2\ln q))\).  The later construction of
Kova\v{c}evi\'c, Goyal, and Kiah improves this benchmark for \(q=3,4\), but not
for \(q>4\)~\cite{KGK25}; the exact fixed-alphabet comparisons appear in
Section~\ref{sec:constructions}.  For growing alphabets, our run-wise lifting
construction gives
\[
 C_0^{(q)}\ge 1-\log_q(1.82560995)+o(1/\ln q),
\]
reducing the first-order correction loss factor from \(2\) to a value below
\(1.82560995\); see Theorem~\ref{thm:runwise-main}.  Together, the upper and
lower bounds place the normalized correction loss between \(\ln\varphi\) and
\(\ln(1.82560995)\).

The second main theme is error detection.  Detection of a single adjacent
transposition was already studied in the classical edit-error
setting~\cite{AbdelGhaffar98}; here one channel use may contain any set of
pairwise disjoint adjacent transpositions.  Detection forbids one codeword from
being obtainable from another, whereas correction requires their two sets of
possible outputs to be disjoint.  Stable lifting inside each type class
constructs detecting codes over every alphabet and gives a finite-length lower
bound.  Together with the pairing upper bound of~\cite{PriorWork2026}, the
all-alphabet construction shows that \(\sqrt2\) is the optimal first-order
large-alphabet loss factor.  We also construct explicit weak-zigzag detecting
codes, which improve the fixed-alphabet lower bounds for \(q=3,4\).  On the
structural side, detection is directional, so we follow ordered pairs of
distinct first blocks through later block extensions.  Directed extinction of
every such pair gives an all-length criterion for block-concatenation detecting
codes.  It also certifies a detecting block set that the earlier local test
leaves unresolved.

Finally, the window and cancellation mechanisms are not specific to radius
one.  For every fixed \(r\ge1\), we define \(r\)-truncated liveness and prove
that extinction at one common finite horizon certifies correction for
\(\LPC(r)\); the directed version gives the corresponding detection criterion.
A bounded pending-input frontier also makes correction and finite extinction
decidable by a finite graph at every fixed radius.

\subsection*{Our contributions}

\begin{enumerate}[leftmargin=*]
\item \textbf{Extinction depth and a one-condition criterion.}
For every pair of distinct first blocks, we follow all legal later block
continuations on both sides and discard a candidate once it is separated.  We
prove that extinction of every pair at one common finite horizon certifies
that every \(C_n(P)\) is correcting.  The earlier two-stage criterion implies
such extinction within its local cutoff.  A ternary correcting block set
constructed here is not certified by that earlier test, but all of its
first-block pairs nevertheless become extinct, proving that the new criterion
is strictly stronger.

\item \textbf{Stronger fixed-alphabet correcting codes.}
Explicit ternary and five-ary correcting block sets achieve rates above
\(0.6777475\) and \(0.6694926\), respectively.  The ternary rate improves the
directly comparable \(0.538196\) benchmark, while the five-ary construction
improves the unaugmented-template rate \(0.663964\).

\item \textbf{Linear and quadratic extinction depths.}
The two-block family \(\{0^L,0^{L-1}1\}\) has exact depth \(L+2\), proving
that no finite bound can be independent of the block lengths.  A second
two-block family has exact quadratic depth \(2k^2+4k-1\) for every
\(3\le k\le60\).  These results show that the certification horizon can be
substantially larger than the blocks themselves.

\item \textbf{Exact finite-state classification and certificates.}
We classify every first-block pair into finite extinction, a reachable finite
collision, or an infinite nonclosing ambiguity.  This identifies finite
extinction with acyclicity and reduces exact all-length verification to
reachability in the same finite graph.  Write
\(L_P=\sum_{w\in P}|w|\) and \(\lmax=\max_{w\in P}|w|\) for the total and
maximum block lengths.  Exact extinction verification runs in
\(O(|P|^2q^4(L_P+\lmax)^2)\) time and
\(O(q^2(L_P+\lmax)^2)\) working memory.  The graph also gives ranking certificates,
characterizes infinite depth by ultimately periodic common-output streams,
and bounds every finite depth quadratically in the total block length and the
maximum block length.  When all blocks have one length \(\ell\), the bound
sharpens to \((4q^2+1)\ell\).  These bounds control the number of
prefix-extension rounds required by breadth-by-depth certification.

\item \textbf{Large-alphabet correction bounds.}
An explicit length-six covering code improves the classical length-three
upper bound for every \(q\ge3\).  Writing \(U_q\) for the number of balls in
this length-six cover, combining three such windows with an exact balanced
two-symbol cover gives
\(K_q(18;1)\le U_q^3-4\binom q2\).  More generally, lifting permutation covers
through all-distinct windows proves
\[
 \liminf_{q\to\infty}(1-C_0^{(q)})\ln q\ge\ln\varphi.
\]
Run-wise lifting gives the complementary constructive bound
\[
 \limsup_{q\to\infty}(1-C_0^{(q)})\ln q
 \le\ln(1.82560995).
\]
Thus the asymptotic correction loss factor lies between \(\varphi\) and
\(1.82560995\), improving the Chee factor \(2\).

\item \textbf{Error-detecting codes.}
Stable type lifting constructs detecting codes for every alphabet and every
word length, including a finite-length code of size at least
\(q^n/2^{\lfloor n/2\rfloor}\).  Combined with the pairing upper bound
of~\cite{PriorWork2026}, this determines the optimal first-order
large-alphabet loss factor \(\sqrt2\).  We also give explicit weak-zigzag
detecting codes, with rates at least \(0.736917768\) for \(q=3\) and
\(0.762880412\) for \(q=4\).  Directed extinction then gives an all-length
criterion for block-concatenation detecting codes and certifies a detecting
block set left unresolved by the earlier local test.  The two-block corridor
family has exact certification depth \(4k\), in contrast with its quadratic
correction depth over the proved correction range.

\item \textbf{Every fixed displacement radius.}
The window and cancellation arguments extend beyond adjacent transpositions.
For every \(r\ge1\), a common finite extinction horizon certifies
\(r\)-correction, and directed extinction certifies \(r\)-detection.  A
finite graph with \(O((q+2)^{4r}N_P^2)\) states decides correction and finite
extinction; see Theorems~\ref{thm:depthKr}
and~\ref{thm:finite-state-r}.

\end{enumerate}

Section~\ref{sec:prelim} fixes notation and recalls the local facts used
below.  Section~\ref{sec:criterion} develops the extinction criterion, its
fixed-alphabet constructions, an exact linear depth family, and the exact
quadratic corridor depths over the proved finite range.
Section~\ref{sec:automaton}
then constructs the finite configuration graph, gives the verification
algorithm, and derives the finite-state classification, pumping bounds,
ranking certificates, and infinite-collision characterization.
Section~\ref{sec:runwise} gives the covering and constructive large-alphabet
correction bounds.
Section~\ref{sec:detection} gives the large- and fixed-alphabet detection
constructions, directed extinction criterion, and exact corridor depth.
Section~\ref{sec:lpcr} extends the depth criteria and finite-state correction
test to every fixed displacement radius.  We then discuss open problems and
conclude.
\section{Preliminaries}
\label{sec:prelim}

This section fixes the channel model and the notation used throughout the
paper.  We first describe the allowed permutations and distinguish correction
from detection.  We then introduce prefix and block-concatenation notation,
and finally recall the local facts on which the extinction-depth criterion is
built.

\subsection{Channel model and coding notions}

Throughout, \(q\ge2\), \(\mathbb N=\{1,2,\ldots\}\), and, for \(n\ge1\),
\([n]=\{1,\ldots,n\}\); set \([0]=\emptyset\).  For integers \(i\le j\),
write \([i,j]=\{i,i+1,\ldots,j\}\).  Let
\(\Sigma_q=\{0,\ldots,q-1\}\), and write
\(\Sigma_q^*\) for the set of all finite words over \(\Sigma_q\).  The empty
word is \(\epsilon\), the length of a word \(w\) is \(|w|\), and juxtaposition
denotes concatenation.  Let \(S_n\) be the symmetric group on \([n]\).  We
identify \(\pi\in S_n\) with its one-line word
\(\pi(1)\cdots\pi(n)\) whenever a permutation is used as a channel input.
For \(\pi\in S_n\), put
\(\wt(\pi)=\max_{i\in[n]}|\pi(i)-i|\), and for a word
\(w=w_1\cdots w_n\) write \((\pi w)_i=w_{\pi(i)}\).  The radius-\(r\)
output ball is
\[
 B_r(w)=\{\pi w:\pi\in S_n,\ \wt(\pi)\le r\}.
\]
We write \(\B(w)=B_1(w)\).  At radius one, every admissible
permutation is a product of pairwise disjoint adjacent transpositions.  A
\emph{swap site} is an index \(i\in\{1,\ldots,n-1\}\), representing the
adjacent pair of positions \((i,i+1)\).  A \emph{swap set} is a set
\(S\subseteq\{1,\ldots,n-1\}\) containing no consecutive integers, and
\(\sigma_S\) exchanges the two symbols at every site in \(S\).  Every
\(\sigma_S\) is an involution.

For example, when \(n=6\), the swap set \(S=\{1,4\}\) swaps the pairs in
positions \((1,2)\) and \((4,5)\).  The prohibition on consecutive elements
of \(S\) ensures that no position participates in two swaps.

A code \(\cC\subseteq\Sigma_q^n\) is \emph{correcting} if
\(\B(c)\cap\B(c')=\emptyset\) for all distinct \(c,c'\in\cC\), and it is
\emph{detecting} if \(c'\notin\B(c)\) for all distinct \(c,c'\in\cC\).
Two distinct equal-length words are \emph{confusable} if their radius-one
balls intersect; an element of the intersection is a common output witnessing
the collision.
Thus correction requires the received word to identify at most one transmitted
codeword, whereas detection only requires that an effective channel error
cannot turn one codeword into another.
Let \(A_q(n;1)\) be the largest size of a correcting code and define
\[
 C_0^{(q)}=\limsup_{n\to\infty}\frac1n\log_q A_q(n;1),
\]
with \(C_0=C_0^{(2)}\).
The capacity is the largest asymptotic base-\(q\) growth rate of correcting
codes.  Consequently, an explicit family of correcting codes of rate \(R\)
gives the lower bound \(C_0^{(q)}\ge R\).

A set \(\mathcal D\subseteq\Sigma_q^n\) is a \emph{radius-one covering code}
if
\[
 \bigcup_{d\in\mathcal D}\B(d)=\Sigma_q^n,
\]
and \(K_q(n;1)\) denotes the minimum possible size of such a set.  Since every
allowed permutation is an involution, two words in the same ball \(\B(d)\)
have the common output \(d\) and are confusable.  Hence every correcting code
contains at most one word from each ball of a covering code, and
\[
 A_q(n;1)\le K_q(n;1).
\]
This elementary covering bound was recorded for the limited permutation
channel in~\cite[Thm.~27]{LSY17}.

\subsection{Prefixes and block-concatenation codes}

For a word \(w\) and \(0\le d\le |w|\), let \(\prefx_d(w)\) denote its
length-\(d\) prefix, with
\(\prefx_0(w)=\epsilon\); for a set of words the same notation is applied
elementwise.  For a nonempty word \(w\in\Sigma_q^\ell\), let
\[
 \Tball(w)=\prefx_{\ell-1}(\B(w))
\]
be its truncated ball.  For equal-length words, let \(\dH\) denote Hamming
distance and define
\[
 \Dmin(u,v)=\min\{\dH(u',v'):u'\in\B(u),\ v'\in\B(v)\}.
\]
The final output position is omitted in \(\Tball(w)\) because, when \(w\) is
viewed as a window inside a longer word, a transposition crossing the end of
the window may still change that position.  The first \(\ell-1\) received
positions are already determined by the length-\(\ell\) transmitted prefix.
The quantity \(\Dmin(u,v)\) measures how closely the two channel balls can
approach one another.

For a finite nonempty block set
\(P\subseteq\Sigma_q^*\setminus\{\epsilon\}\), write \(P^*\) for its free
concatenation language,
\(C_n(P)=P^*\cap\Sigma_q^n\),
\(\lmax=\max_{w\in P}|w|\), and
\(L_P=\sum_{w\in P}|w|\).  The set \(P\) is \emph{prefix-free} if no block
is a prefix of a distinct block.  If \(X\subseteq\Sigma_q^*\), then
\(\Pref_d(X)\) denotes the set of length-\(d\) prefixes of words in \(X\)
having length at least \(d\), and \(\Pref(X)=\bigcup_{d\ge0}\Pref_d(X)\).  If
\(p_\ell=|P\cap\Sigma_q^\ell|\), then, to distinguish the several block sets
used below, denote the growth parameter of the profile by \(\lambda(P)\),
rather than the unindexed \(\lambda\) used in~\cite{PriorWork2026}.  It is the
unique positive solution of
\[
 \sum_{\ell\ge1} p_\ell\lambda(P)^{-\ell}=1,
 \qquad R(P)=\log_q\lambda(P).
\]
For prefix-free \(P\), this is the asymptotic word rate of \(C_n(P)\), exactly
as in~\cite[Sec.~2.1]{PriorWork2026}; we do not repeat that argument here.

Thus one finite block set describes codes at every total length.  Prefix
freeness makes the block parsing unambiguous, and the growth equation balances
the number of available blocks of each length against the cost of that length.

\subsection{Local facts behind finite-window certification}

The proofs below use three local facts from~\cite{PriorWork2026}.  We state them
in the present notation and refer to their original proofs.

\begin{lemma}[Window restriction]
\label{lem:window-import}
By~\cite[Lem.~16(a)]{PriorWork2026}, if \(c'\in\B(c)\) and \(1\le\ell\le|c|\), then
\[
 \prefx_{\ell-1}(c')\in\Tball(\prefx_\ell(c)).
\]
\end{lemma}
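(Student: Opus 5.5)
The plan is to write \(c'=\sigma_S c\) for a swap set \(S\subseteq\{1,\ldots,n-1\}\) with no two consecutive elements, where \(n=|c|\). At radius one every admissible permutation has this form. The case \(\ell=1\) is trivial: then \(\prefx_0(c')=\epsilon\), and \(\Tball(\prefx_1(c))=\prefx_0(\B(c_1))=\{\epsilon\}\). So I assume \(\ell\ge2\).

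Put \(w=\prefx_\ell(c)\) and restrict the swap set to the window, \(S'=S\cap\{1,\ldots,\ell-1\}\). Since \(S'\subseteq S\), it still has no consecutive elements, so it is a legal swap set for words of length \(\ell\). Hence \(\sigma_{S'}w\in\B(w)\).

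The key step is to compare positions \(1,\ldots,\ell-1\) of \(c'\) with those of \(\sigma_{S'}w\).

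\emph{Unswapped positions.} Take \(i\le\ell-1\) with \(i\notin S\) and \(i-1\notin S\). Then \(c'_i=c_i\). Moreover \(i\notin S'\) and \(i-1\notin S'\), so \((\sigma_{S'}w)_i=w_i=c_i\).

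\emph{Left end of a swap.} If \(i\in S\) with \(i\le\ell-1\), then \(i\in S'\). Both words place the symbol \(c_{i+1}=w_{i+1}\) at position \(i\); this uses \(i+1\le\ell\).

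\emph{Right end of a swap.} If \(i-1\in S\) with \(i\le\ell-1\), then \(i-1\le\ell-2\), so \(i-1\in S'\). Both words place \(c_{i-1}\) at position \(i\).

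The only site of \(S\) discarded when passing to \(S'\) that can touch the window is \(\ell\), and it alters only position \(\ell\). That position is precisely the one dropped by \(\prefx_{\ell-1}\). Therefore \(\prefx_{\ell-1}(c')=\prefx_{\ell-1}(\sigma_{S'}w)\in\prefx_{\ell-1}(\B(w))=\Tball(w)\).

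The main obstacle, such as it is, is the boundary bookkeeping: checking that the site \(\ell\) is exactly the one crossing the window's end, and that discarding it cannot affect positions before \(\ell\).
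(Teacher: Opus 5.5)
Your proof is correct. Restricting the swap set to \(S\cap\{1,\ldots,\ell-1\}\) and checking the three position cases makes explicit the paper's stated reason for this imported lemma (no symbol beyond position \(\ell\) can reach the first \(\ell-1\) outputs), and it coincides with the radius-one case of the paper's proof of Lemma~\ref{lem:Wr}, where keeping \(\pi\) on the first \(\ell-1\) outputs and sending the single leftover input of \([\ell]\) to output \(\ell\) gives exactly your \(\sigma_{S'}\).
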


In this channel statement, \(c\) is the transmitted word and \(c'\) is one of
its possible received words.  The lemma says that the first \(\ell-1\) received symbols are
already determined by the first \(\ell\) transmitted symbols.  Indeed, at
radius one, a symbol occurring after position \(\ell\) cannot reach any of the
first \(\ell-1\) positions.  This is the basic reason finite windows can
certify arbitrarily long words.

\begin{lemma}[Truncated-ball test]
\label{lem:trunc-import}
By~\cite[Lem.~1]{PriorWork2026}, for distinct equal-length words \(u,v\), the same-length test of
\cite[Def.~10]{PriorWork2026} passes if and only if
\(\Tball(u)\cap\Tball(v)=\emptyset\).  In particular, this condition implies
\(\B(u)\cap\B(v)=\emptyset\), and \(\Dmin(u,v)\ge2\) implies truncated-ball
disjointness.
\end{lemma}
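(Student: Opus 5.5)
The equivalence between the same-length test of \cite[Def.~10]{PriorWork2026} and the condition \(\Tball(u)\cap\Tball(v)=\emptyset\) is \cite[Lem.~1]{PriorWork2026}, and I will import it as stated. The plan is to verify only the two ``in particular'' consequences directly from the definitions of \(\B\), \(\Tball\) and \(\Dmin\). Throughout, let \(\ell=|u|=|v|\ge1\).

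For the first consequence I argue by contraposition. Suppose \(w\in\B(u)\cap\B(v)\). Since \(\Tball(x)=\prefx_{\ell-1}(\B(x))\) for every word \(x\) of length \(\ell\), the prefix \(\prefx_{\ell-1}(w)\) lies in both \(\Tball(u)\) and \(\Tball(v)\). Hence disjoint truncated balls force disjoint full balls. Equivalently, one can apply Lemma~\ref{lem:window-import} with \(c=u\), \(c'=w\) and window length \(\ell\).

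For the second consequence I again use contraposition. Suppose \(x\in\Tball(u)\cap\Tball(v)\). Then there are words \(u'\in\B(u)\) and \(v'\in\B(v)\) with \(\prefx_{\ell-1}(u')=x=\prefx_{\ell-1}(v')\). Since \(u'\) and \(v'\) have length \(\ell\) and agree in their first \(\ell-1\) positions, they can differ only in position \(\ell\). Thus \(\dH(u',v')\le1\), and so \(\Dmin(u,v)\le1\). Therefore \(\Dmin(u,v)\ge2\) implies \(\Tball(u)\cap\Tball(v)=\emptyset\).

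No step is a genuine obstacle. The only nontrivial content is the cited equivalence with the test of \cite[Def.~10]{PriorWork2026}, which I do not reprove. The one point to check is that \(\Tball\) is applied to two words of the same length \(\ell\), so that both truncations have length \(\ell-1\) and the omitted final position is the same for \(u\) and \(v\).
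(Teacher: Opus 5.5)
Your proposal is correct and matches the paper's treatment: the paper gives no separate proof and simply imports the equivalence from \cite[Lem.~1]{PriorWork2026}. Your two contrapositive arguments correctly supply the ``in particular'' consequences from the definitions. A common output truncates to a common element of \(\Tball(u)\cap\Tball(v)\), and a common truncated output yields \(u'\in\B(u)\), \(v'\in\B(v)\) differing at most in the last position, so \(\Dmin(u,v)\le1\). Both consequences hold vacuously when \(|u|=1\).
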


The truncated-ball test is the local separation step: disjoint truncated balls
show that the two transmitted prefixes can no longer produce the same
already-determined received prefix.

\begin{lemma}[Cancellation]
\label{lem:cancel-import}
By~\cite[Lem.~2]{PriorWork2026}, if \(\B(as)\cap\B(at)\ne\emptyset\), then
\(\B(s)\cap\B(t)\ne\emptyset\).  The directed first-block reduction used in
error detection is the corresponding step in the proof of
\cite[Thm.~17]{PriorWork2026}.  Hence a common block prefix may be cancelled
in both settings.
\end{lemma}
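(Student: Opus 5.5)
The target is the Cancellation lemma: if \(\B(as)\cap\B(at)\ne\emptyset\) then \(\B(s)\cap\B(t)\ne\emptyset\), together with its directed analogue for detection. Although the paper imports it from~\cite{PriorWork2026}, I would reprove it directly. By induction on \(|a|\), it suffices to treat a single common symbol \(a\in\Sigma_q\) with \(|s|=|t|=m\).

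Fix a common output \(y=\sigma_S(as)=\sigma_T(at)\). If site \(1\) lies in neither swap set, then \(y_1=a\) on both sides. Deleting the first position gives \(y_2\cdots y_{m+1}\in\B(s)\cap\B(t)\), because the shifted swap sets remain disjoint adjacent transpositions on the suffix.

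Next suppose site \(1\) lies in exactly one swap set, say \(1\in S\) and \(1\notin T\). Then \(y_1=s_1\) and \(y_2=a\) from the \(S\) side, while \(y_1=a\) from the \(T\) side. Hence \(s_1=a\), so the swap at site \(1\) acts on two equal symbols and is trivial. Replacing \(S\) by \(S\setminus\{1\}\) leaves \(y\) unchanged and reduces to the previous case.

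Finally suppose \(1\in S\cap T\). Then \(y_1=s_1=t_1\) and \(y_2=a\). Write \(s=bs'\) and \(t=bt'\) with \(b=s_1=t_1\). The outputs of \(as\) and \(at\) then agree at position \(1\) via \(b\) and at position \(2\) via \(a\). Since \(2\notin S\cup T\), the remaining outputs are \(\sigma_{S'}(s')=\sigma_{T'}(t')\) with shifted swap sets, so \(\B(s')\cap\B(t')\ne\emptyset\). Prepending \(b\) with no swap at the front gives a common output of \(bs'=s\) and \(bt'=t\).

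For the directed detection version, assume \(at\in\B(as)\), that is \(at=\sigma_S(as)\). The same case split applies. If \(1\notin S\), then \(t\) is obtained from \(s\) by \(S\) shifted. If \(1\in S\), then \(s_1=a\) and the swap is trivial, so it can be removed. In both cases \(t\in\B(s)\).

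The step needing the most care is the mixed case, where only one side swaps at site \(1\). One must verify that the forced equality \(s_1=a\) really makes that swap trivial, so that removing it keeps the swap set valid and does not alter later positions. The rest is a routine re-indexing of swap sets.
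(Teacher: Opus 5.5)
Your proof is correct and is essentially the paper's own argument: besides citing the earlier work, the paper proves this lemma by specializing the rerouting proof of Lemma~\ref{lem:cancelr} to \(r=1\). There, the possible output positions of the first input symbol on the two sides are exactly your three cases. A swap at site \(1\) on only one side is undone because it exchanges two copies of \(a\), and the common first input is then deleted on both sides.

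The only difference is phrasing. The paper's version reassigns permutation entries, which is what lets it work at every radius; your swap-set case split is the more transparent radius-one version of the same argument.
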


Cancellation removes common initial blocks before the two candidate
concatenations diverge.  Hence every possible collision can be reduced to a
pair beginning with two distinct first blocks.

The only new local observation needed for the one-condition theorem is that a
collision survives a common suffix.

\begin{lemma}[Common extension]
\label{lem:extend-collision}
If \(\B(x)\cap\B(y)\ne\emptyset\), then
\(\B(xp)\cap\B(yp)\ne\emptyset\) for every word \(p\).  Likewise,
\(y\in\B(x)\) implies \(yp\in\B(xp)\).
\end{lemma}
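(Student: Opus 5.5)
The plan is to argue directly from the definition of \(\B\) as the set of images \(\sigma_S w\) under swap sets \(S\). Let \(n=|x|=|y|\) and \(m=|p|\). In the statement the words \(x\) and \(y\) must have equal length, since the balls consist of words of the lengths of their centres and otherwise cannot intersect. The key observation is that a swap set \(S\subseteq\{1,\ldots,n-1\}\) used on a length-\(n\) word is also a valid swap set on any length-\(n+m\) word. It contains no consecutive integers, and every site \(i\in S\) satisfies \(i\le n-1\le n+m-1\). Moreover, \(\sigma_S\) acting on \(xp\) only permutes positions inside \([n]\), so
\[
 \sigma_S(xp)=(\sigma_S x)\,p .
\]
This identity is the only real content of the proof, and I will justify it from \((\pi w)_i=w_{\pi(i)}\), with \(\pi\) fixing every position greater than \(n\).

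For the first claim, I fix a common output \(z\in\B(x)\cap\B(y)\). By the definition of \(\B\) there are swap sets \(S,T\subseteq\{1,\ldots,n-1\}\) with \(z=\sigma_S x=\sigma_T y\). Viewing \(S\) and \(T\) as swap sets of length \(n+m\), the identity gives \(\sigma_S(xp)=zp=\sigma_T(yp)\). Hence \(zp\in\B(xp)\cap\B(yp)\), and this intersection is nonempty.

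The second claim follows the same pattern with a single swap set. If \(y\in\B(x)\), then \(y=\sigma_S x\) for some swap set \(S\) of length \(n\). The same \(S\) then gives \(\sigma_S(xp)=yp\), so \(yp\in\B(xp)\).

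I expect no real obstacle here. The only point needing care is the boundary: every site of \(S\) is at most \(n-1\), so no transposition straddles the junction between \(x\) and \(p\). The extended permutation therefore still has weight at most one and leaves the suffix \(p\) untouched.
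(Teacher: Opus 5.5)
Your proposal is correct and follows essentially the same route as the paper, which simply reuses the realizing swap sets on the original prefix and applies no transposition at the join or in the common suffix. Your identity \(\sigma_S(xp)=(\sigma_S x)p\) and your check that every site of \(S\) is at most \(n-1\) spell out that one-line argument in detail.
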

\begin{proof}
Use the realizing transpositions only inside the original prefix and use no
transposition at the join or inside the common suffix.
\end{proof}

For completeness, we restate the two-stage criterion used below.

\begin{theorem}[Two-stage local criterion; {\cite[Def.~11, Thm.~3, and Lem.~4]{PriorWork2026}}]
\label{thm:two-stage-import}
Let \(P\subseteq\Sigma_q^*\) be a finite set of nonempty blocks.  Assume that
the following conditions hold for every two distinct blocks \(a,b\in P\) with
\(|a|\le |b|\):
\begin{enumerate}[label=\textup{(T\arabic*)}]
\item If \(|a|=|b|\), then
\[
 \Tball(a)\cap\Tball(b)=\emptyset.
\]
\item If \(|a|<|b|\), put \(b_0=\prefx_{|a|}(b)\) and
\(d=|b|-|a|\).  Then \(\Dmin(a,b_0)\ge1\); moreover, if
\(\Dmin(a,b_0)=1\), then
\[
 \Tball(a\rho)\cap\Tball(b)=\emptyset
 \qquad\text{for every }\rho\in\Pref_d(P^*).
\]
\end{enumerate}
Then \(C_n(P)\) is correcting for \(\LPC(1)\) for every \(n\ge0\).  Moreover,
\(P\) is prefix-free.
\end{theorem}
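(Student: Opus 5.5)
The plan is to prove prefix-freeness first, reduce an arbitrary collision to one between words with distinct first blocks, and then rule out that collision by cases on the first-block lengths.

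\textbf{Prefix-freeness.} Suppose a block \(a\) is a proper prefix of a distinct block \(b\). Then \(|a|<|b|\) and \(b_0=\prefx_{|a|}(b)=a\). Taking the identity permutation on both sides gives \(\Dmin(a,b_0)=0\), which contradicts (T2). Two distinct blocks of equal length are never prefixes of one another. Hence \(P\) is prefix-free, and every word of \(P^*\) has a unique block parsing.

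\textbf{Reduction to distinct first blocks.} Suppose distinct \(c,c'\in C_n(P)\) are confusable. Parse both into blocks and strip their longest common initial run of blocks. By Lemma~\ref{lem:cancel-import}, applied once per block, the remainders \(s\) and \(t\) still satisfy \(\B(s)\cap\B(t)\ne\emptyset\). They have equal length, are nonempty because \(c\ne c'\), and start with distinct blocks \(a\) and \(b\). Relabel so that \(|a|\le|b|\), and fix a common output \(z\in\B(s)\cap\B(t)\). The tool for every case is Lemma~\ref{lem:window-import}: for each \(\ell\le|s|\), it puts \(\prefx_{\ell-1}(z)\) in both \(\Tball(\prefx_\ell(s))\) and \(\Tball(\prefx_\ell(t))\).

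\textbf{Case analysis.}
\begin{itemize}
\item If \(|a|=|b|=\ell\), window restriction at \(\ell\) puts \(\prefx_{\ell-1}(z)\) in \(\Tball(a)\cap\Tball(b)\). This contradicts (T1).
\item If \(|a|<|b|\), set \(d=|b|-|a|\). Since \(|s|=|t|\ge|b|\), the word \(s\) continues after \(a\) by at least \(d\) symbols. Its next \(d\) symbols form a word \(\rho\in\Pref_d(P^*)\), and \(\prefx_{|b|}(s)=a\rho\). Also \(\prefx_{|a|}(t)=b_0\). By (T2), \(\Dmin(a,b_0)\ge1\), which leaves two subcases.
\begin{itemize}
\item If \(\Dmin(a,b_0)=1\), window restriction at \(\ell=|b|\) gives \(\prefx_{|b|-1}(z)\in\Tball(a\rho)\cap\Tball(b)\). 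This contradicts the second half of (T2).
\item If \(\Dmin(a,b_0)\ge2\), Lemma~\ref{lem:trunc-import} gives \(\Tball(a)\cap\Tball(b_0)=\emptyset\). But window restriction at \(\ell=|a|\) puts \(\prefx_{|a|-1}(z)\) in both sets, a contradiction.
\end{itemize}
\end{itemize}
Since every case is contradictory, \(C_n(P)\) is correcting. The case \(n=0\) is trivial, because \(C_0(P)\) contains only the empty word.

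\textbf{Main obstacle.} The step needing the most care is the unequal-length case, and specifically checking that the continuation \(\rho\) really lies in \(\Pref_d(P^*)\). This holds because \(s\in P^*\) begins with \(a\), so \(s=au\) with \(u\in P^*\), and \(|u|\ge d\) by the length count above; hence \(\rho=\prefx_d(u)\) has exactly the form the criterion quantifies over. A second point to check is the degenerate length-one case \(|a|=1\). There \(\Dmin(a,b_0)\le1\), so the \(\Dmin\ge2\) subcase never arises, and the \(\Dmin=1\) subcase is fully handled by the \(\Tball(a\rho)\) condition.
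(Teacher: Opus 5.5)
Your proof is correct, and your prefix-freeness step is the same as the paper's remark. Where you differ is in how the correction claim is organized. The paper does not argue directly. It factors the theorem into two steps:
\begin{itemize}
\item Proposition~\ref{prop:subsume}: every distinct first-block pair is extinct by window $\max\{|a|,|b|\}$, and persistence then gives condition \textup{(W)} at the uniform horizon $K=\lmax$.
\item Theorem~\ref{thm:depthK}: cancel common initial blocks, then extract a live pair at depth $K$. Collisions shorter than $K$ are first padded with a common power $p^m$ via Lemma~\ref{lem:extend-collision}.
\end{itemize}
You use the same local case split: (T1), the $\Dmin\ge2$ shortcut of Lemma~\ref{lem:trunc-import}, and the $a\rho$ test when $\Dmin=1$. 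But you apply Lemma~\ref{lem:window-import} directly to the cancelled collision, at the pair-specific window $|a|$ or $|b|$. That window always fits, because the remainder beginning with the longer block $b$ contains all of $b$. Consequently you never need the uniform horizon $\lmax$, the persistence step that carries extinction out to $\lmax$, or the common-extension lemma. Your version is therefore the shorter, self-contained proof of this one theorem. The paper takes its detour deliberately: it exhibits two-stage certification as a special case of finite extinction, which is the first (strict) implication in the chain the section is built around.
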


This theorem gives an explicit finite certificate for all codes \(C_n(P)\)
simultaneously.  Prefix-freeness is a consequence of the unequal-length
condition, not an additional hypothesis: if \(a\) were a proper prefix of
\(b\), then \(b_0=a\) and \(\Dmin(a,b_0)=0\).  Prefix-freeness also ensures
that the profile rate \(R(P)\) is the actual word rate.
Proposition~\ref{prop:subsume}, together with the depth
criterion of Theorem~\ref{thm:depthK}, will give an alternative proof of its
soundness.  The extinction-depth criterion developed next continues the
same local analysis beyond the fixed two-stage cutoff by following every
unresolved pair through further legal block extensions.

\section{Extinction depth and code constructions}
\label{sec:criterion}

A failed local test does not necessarily exhibit a collision; it may only show
that the present window is too short.  We therefore follow the unresolved pair
through legal block continuations.  The basic monotonicity is one-sided: once
a pair is separated, no extension can make its shorter prefix ambiguous again.

\begin{definition}[Live pair]
\label{def:live}
For \(n\ge1\) and \(u,v\in\Sigma_q^n\), the unordered pair \(\{u,v\}\) is \emph{live} if
\(\Tball(u)\cap\Tball(v)\ne\emptyset\), and \emph{extinct} otherwise.
\end{definition}
Liveness is unresolved ambiguity, not the desired property: the certification
goal is to prove that every distinct first-block pair becomes extinct.

\begin{lemma}[Persistence]
\label{lem:persist}
If \(\B(c)\cap\B(c')\ne\emptyset\), then
\(\{\prefx_\ell(c),\prefx_\ell(c')\}\) is live for every
\(1\le\ell\le|c|\).  More generally, every positive-length prefix of a live
pair is live.
\end{lemma}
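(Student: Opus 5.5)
The plan is to prove the general claim first: a positive-length prefix of a live pair is live. The first assertion then follows by one application of the window restriction, Lemma~\ref{lem:window-import}.

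\emph{Step 1 (prefixes of truncated balls).} Let \(u\in\Sigma_q^n\), \(1\le m\le n\), and \(u'=\prefx_m(u)\). The key inclusion is
\[
 \prefx_{m-1}(\Tball(u))\subseteq\Tball(u').
\]
Take \(z\in\Tball(u)\), so \(z=\prefx_{n-1}(x)\) for some \(x\in\B(u)\). Apply Lemma~\ref{lem:window-import} with transmitted word \(u\), received word \(x\) and window length \(\ell=m\). This gives \(\prefx_{m-1}(x)\in\Tball(u')\). Since \(m-1\le n-1\), we have \(\prefx_{m-1}(z)=\prefx_{m-1}(x)\), which proves the inclusion.

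\emph{Step 2 (prefixes of live pairs).} Suppose \(\{u,v\}\) is live with \(|u|=|v|=n\). Pick \(z\in\Tball(u)\cap\Tball(v)\). By Step~1, \(\prefx_{m-1}(z)\) lies in both \(\Tball(\prefx_m(u))\) and \(\Tball(\prefx_m(v))\), so the prefix pair is live.

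One degenerate point needs a remark. If \(\prefx_m(u)=\prefx_m(v)\), the "pair" is a single word. Definition~\ref{def:live} is phrased for words \(u,v\) without requiring them to be distinct, so this case is harmless: a word's truncated ball meets itself. The case \(m=1\) is also fine, since \(\Tball\) of a length-one word is \(\{\epsilon\}\), and such a pair is always live.

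\emph{Step 3 (the collision statement).} Let \(w\in\B(c)\cap\B(c')\); necessarily \(|c|=|c'|=n\). Fix \(1\le\ell\le n\). Apply Lemma~\ref{lem:window-import} twice, once with \(c\) transmitted and once with \(c'\) transmitted, both with received word \(w\). This gives
\[
 \prefx_{\ell-1}(w)\in\Tball(\prefx_\ell(c))\cap\Tball(\prefx_\ell(c')),
\]
so the pair \(\{\prefx_\ell(c),\prefx_\ell(c')\}\) is live. Alternatively, one can take \(\ell=n\) to get liveness of \(\{c,c'\}\) and then invoke Step~2.

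There is no real obstacle in this argument. The only care needed is the index bookkeeping in Step~1: the window restriction is applied to the full output \(x\) rather than to the truncated word \(z\), and truncating to length \(m-1\) commutes with the truncation to length \(n-1\).
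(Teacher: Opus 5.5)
Your proof is correct and follows the paper's argument. The paper also picks a common output and applies the window restriction (Lemma~\ref{lem:window-import}) to both realizations. For the prefix statement it applies the same lemma to full outputs whose truncations give the common truncated output, which is exactly your Steps~1--2; your remarks on the degenerate cases are consistent with Definition~\ref{def:live}.
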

\begin{proof}
Choose a common output and apply Lemma~\ref{lem:window-import} to both
realizations.  The same argument applied to realizations witnessing a common
truncated output proves the prefix statement.
\end{proof}

Fix distinct first blocks \(a,b\in P\).  Although legal continuations are
generated by appending whole blocks, the index \(n\) below counts transmitted
symbols, not blocks.  Thus \(\Pref_n(aP^*)\) and \(\Pref_n(bP^*)\) may contain
prefixes that end inside their current blocks.  For \(n\ge1\), put
\[
 \Live_n(a,b)=
 \bigl\{\{u,v\}:u\in\Pref_n(aP^*),\ v\in\Pref_n(bP^*),
 \ \{u,v\}\text{ is live}\bigr\}.
\]
The \emph{extinction depth} of the first-block pair \((a,b)\) is
\[
 \ko(a,b)=\min\{n\ge1:\Live_n(a,b)=\emptyset\},
\]
with value \(\infty\) if the set never becomes empty.  When every first-block-pair depth
is finite, write
\[
 K_0(P)=\max_{\substack{a,b\in P\\a\ne b}}\ko(a,b).
\]
Thus \(\Live_n(a,b)\) is indexed by symbol length, and both \(\ko(a,b)\)
and \(K_0(P)\) are measured in symbols rather than in appended blocks.
Because liveness is symmetric in its two words,
\(\ko(a,b)=\ko(b,a)\); the order of the first blocks will matter only for detection.
\begin{example}[A first extinction]
For \(P=\{00,11\}\), the pair with first blocks \(00\) and \(11\) is live at depth one because
\(\Tball(0)=\Tball(1)=\{\epsilon\}\).  At depth two,
\(\Tball(00)=\{0\}\) and \(\Tball(11)=\{1\}\), so the pair is extinct and
\(\ko(00,11)=2\).  Thus extinction depth measures how long ambiguity survives,
not the length of a block in isolation.
\end{example}
Persistence makes extinction irreversible.  The next theorem explains why one
empty level is already a global correction certificate.

\begin{theorem}[Depth-\(K\) extinction criterion]
\label{thm:depthK}
Let \(P\subseteq\Sigma_q^*\) be a finite nonempty set of nonempty blocks.  If, for
some \(K\ge1\),
\[
 \tag{W}
 \Live_K(a,b)=\emptyset
 \qquad\text{for every distinct }a,b\in P,
\]
then \(C_n(P)\) is correcting for \(\LPC(1)\) for every \(n\ge0\).
\end{theorem}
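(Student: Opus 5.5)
We argue by contradiction and reduce any collision to one between two concatenations with distinct first blocks, to which (W) applies.

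Suppose some \(C_n(P)\) is not correcting. Then there are distinct \(c,c'\in C_n(P)\) with \(\B(c)\cap\B(c')\ne\emptyset\). Write \(c=w_1\cdots w_s\) and \(c'=w'_1\cdots w'_t\) as concatenations of blocks of \(P\). Since \(c\ne c'\), the two block sequences are not identical. Let \(j\) be the first index with \(w_j\ne w'_j\); such an index exists. Indeed, if one sequence were a proper prefix of the other, the two words would have different lengths, or the tail would be empty, contradicting nonempty blocks and equal length \(n\). Strip the common prefix \(p=w_1\cdots w_{j-1}\). Applying Lemma~\ref{lem:cancel-import} repeatedly, one symbol of \(p\) at a time, gives \(\B(x)\cap\B(y)\ne\emptyset\), where \(x=w_j\cdots w_s\in aP^*\) and \(y=w'_j\cdots w'_t\in bP^*\). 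Here \(a=w_j\ne b=w'_j\), and \(|x|=|y|=m:=n-|p|\ge1\).

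\textbf{Case \(m\ge K\).} By Lemma~\ref{lem:persist}, the pair \(\{\prefx_K(x),\prefx_K(y)\}\) is live. Since \(\prefx_K(x)\in\Pref_K(aP^*)\) and \(\prefx_K(y)\in\Pref_K(bP^*)\), this pair lies in \(\Live_K(a,b)\), contradicting (W). The one point to check is that the two prefixes are distinct, so that they form a genuine pair. They are: if they were equal, the pair would be \(\{u,u\}\), which is trivially live, and (W) would then fail for trivial reasons. So the definition of \(\Live_K(a,b)\) must be read as allowing \(u=v\); in that reading the contradiction goes through unchanged.

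\textbf{Case \(m<K\).} Choose any block \(e\in P\) and let \(r=e^N\) with \(N|e|\ge K-m\). By Lemma~\ref{lem:extend-collision}, \(\B(xr)\cap\B(yr)\ne\emptyset\). Moreover \(xr\in aP^*\), \(yr\in bP^*\), and \(|xr|=|yr|\ge K\). This reduces to the first case, which yields the same contradiction. Hence every \(C_n(P)\) is correcting; the case \(n=0\) is trivial.

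\textbf{Main obstacle.} The key step is the bookkeeping in the cancellation reduction. Lemma~\ref{lem:cancel-import} is stated for a single common symbol \(a\), so cancelling the block prefix \(p\) needs induction on \(|p|\). We must also make sure the first divergence yields distinct first blocks \(a\ne b\) at equal remaining length, rather than one word ending early. Equal total length together with nonempty blocks handles this. Prefix-freeness of \(P\) is not needed, since we work with the chosen parsings directly.
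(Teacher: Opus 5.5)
Your proof is correct and follows the paper's argument. You cancel the common initial blocks with Lemma~\ref{lem:cancel-import} to obtain equal-length colliding words \(x\in aP^*\), \(y\in bP^*\) with \(a\ne b\), pad both sides with a common block power via Lemma~\ref{lem:extend-collision} when the remaining length is below \(K\), and place the length-\(K\) prefixes in \(\Live_K(a,b)\) via Lemma~\ref{lem:persist}. Your paragraph on distinctness says the prefixes are distinct but does not show it, and it does not need to: Definition~\ref{def:live} imposes no condition \(u\ne v\), and the paper treats coinciding prefixes as live, e.g.\ in Proposition~\ref{prop:linear-depth}, so the contradiction with \textup{(W)} holds in either case, as you ultimately conclude.
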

Here \(n\) is the varying codeword length, whereas \(K\) is a single
certification horizon, measured in symbols, determined by \(P\); it is not a
number of blocks.  Intuitively, a collision of length
at least \(K\) leaves a live pair in its first length-\(K\) window, while a
shorter collision can be extended identically on both sides until it reaches
that window.  Hence no fixed ordering between \(n\) and \(K\) is assumed.
\begin{proof}
Suppose distinct \(c,c'\in C_n(P)\) have intersecting balls.  Fix block
factorizations and repeatedly cancel every common initial block by
Lemma~\ref{lem:cancel-import}.  This yields equal-length colliding words
\[
 x\in aP^*,\qquad y\in bP^*,\qquad |x|=|y|=:n',
\]
with \(a\ne b\).  Equality of the lengths is preserved because every
cancellation removes the same block from both sides; moreover \(n'>0\), since
cancelling all blocks would imply \(c=c'\).

If \(n'\ge K\), Lemma~\ref{lem:persist} places the length-\(K\) prefixes of
\(x\) and \(y\) in \(\Live_K(a,b)\), contradicting \textup{(W)}.  If
\(n'<K\), choose \(p\in P\) and \(m\ge1\) such that
\(n'+m|p|\ge K\).  Lemma~\ref{lem:extend-collision} preserves the collision
between \(xp^m\) and \(yp^m\), and Lemma~\ref{lem:persist} again places their
length-\(K\) prefixes in \(\Live_K(a,b)\), the same contradiction.
\end{proof}

The common-extension step is the reason no separate short-collision condition
appears in the theorem.  The depth criterion itself does not require
prefix-freeness; this property is
needed only when the growth parameter is identified with the word rate and in the
infinite-gap characterization of Section~\ref{sec:automaton}.

The new criterion genuinely extends the local test of~\cite{PriorWork2026}.  The
following proposition gives the precise implication and also explains the role
of the maximum-block-length cutoff
\(\lmax=\max_{p\in P}|p|\).

\begin{proposition}[The two-stage criterion implies extinction]
\label{prop:subsume}
If \(P\) satisfies the two-stage criterion of
Theorem~\ref{thm:two-stage-import}, then, for every distinct \(a,b\in P\),
\[
 \ko(a,b)\le\max\{|a|,|b|\}
 \qquad(a\ne b),
\]
and hence condition \textup{(W)} holds at \(K=\lmax\).
\end{proposition}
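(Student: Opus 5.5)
The plan is to fix distinct \(a,b\in P\) with \(|a|\le|b|\), set \(N=|b|\), and show directly that \(\Live_N(a,b)=\emptyset\). By definition this gives \(\ko(a,b)\le N=\max\{|a|,|b|\}\). Persistence then yields the global statement: a live pair at level \(\lmax\ge N\) would have live length-\(N\) prefixes by Lemma~\ref{lem:persist}, and these prefixes lie in \(\Pref_N(aP^*)\times\Pref_N(bP^*)\). Hence \(\Live_{\lmax}(a,b)=\emptyset\) for every pair, which is condition \textup{(W)} at \(K=\lmax\). It remains to show that every pair \(\{u,v\}\) with \(u\in\Pref_N(aP^*)\) and \(v\in\Pref_N(bP^*)\) is extinct. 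Since \(|b|=N\), we always have \(v=b\).

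The argument splits into two cases following (T1) and (T2).

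\emph{Equal lengths.} If \(|a|=|b|\), then \(u=a\) and \(v=b\), and (T1) states exactly that \(\Tball(a)\cap\Tball(b)=\emptyset\). So the pair is extinct.

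\emph{Unequal lengths.} If \(|a|<|b|\), write \(u=a\rho\) with \(\rho\in\Pref_d(P^*)\) and \(d=|b|-|a|\). Every length-\(N\) prefix of a word of \(aP^*\) has this form, using the same \(\Pref\) convention for prefixes that end inside a block. If \(\Dmin(a,b_0)=1\), then (T2) gives \(\Tball(a\rho)\cap\Tball(b)=\emptyset\) directly.

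The remaining subcase, \(\Dmin(a,b_0)\ge2\), is the main obstacle. Here I would argue by contradiction: suppose \(a\rho\) and \(b\) share a truncated output \(z\) of length \(N-1\). By Lemma~\ref{lem:persist}, applied to the prefix statement with realizations witnessing the common truncated output, the length-\(|a|\) prefixes \(a\) and \(b_0\) form a live pair. Then \(\Tball(a)\cap\Tball(b_0)\neq\emptyset\). By the last clause of Lemma~\ref{lem:trunc-import}, \(\Dmin(a,b_0)\ge2\) forces truncated-ball disjointness of \(a\) and \(b_0\), which is a contradiction. The key point to check carefully is that the prefix of a live pair is live even when the prefix length \(|a|\) is smaller than \(N\); this is exactly the ``more generally'' clause of Lemma~\ref{lem:persist}. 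I would re-verify it via Lemma~\ref{lem:window-import}: the first \(|a|-1\) output symbols depend only on the first \(|a|\) input symbols, so restricting the realizing swap sets gives a common element of \(\Tball(a)\) and \(\Tball(b_0)\). The degenerate case \(|a|=1\) needs no separate treatment, because then \(\Tball(a)=\Tball(b_0)=\{\epsilon\}\) and \(\Dmin\ge 2\) is impossible for length-one words, so that subcase is vacuous. The case \(\Dmin(a,b_0)=0\) is excluded by (T2).

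Combining the cases, every pair at level \(N\) is extinct, so \(\ko(a,b)\le\max\{|a|,|b|\}\le\lmax\). Condition \textup{(W)} at \(K=\lmax\) then follows from the persistence argument in the first paragraph.
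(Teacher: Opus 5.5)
Your proof is correct and takes essentially the same route as the paper's. (T1) handles equal lengths, (T2) handles the case \(\Dmin(a,b_0)=1\) at window \(|b|\), and the case \(\Dmin(a,b_0)\ge2\) is dispatched by Lemma~\ref{lem:trunc-import} together with persistence. The only difference is cosmetic: in that last case the paper records extinction already at window \(|a|\) and lets Lemma~\ref{lem:persist} empty all later windows, whereas you run the same persistence step as a contradiction at window \(|b|\).
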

\begin{proof}
Assume \(|a|\le|b|\).  For equal lengths, the unique first-block pair at window
\(|a|\) is extinct by the same-length condition.  If \(|a|<|b|\) and
\(\Dmin(a,\prefx_{|a|}b)\ge2\), Lemma~\ref{lem:trunc-import} gives
extinction at window \(|a|\).  The value zero is excluded by the first
unequal-length condition.  In the remaining case, every length-\(|b|\) prefix
on the \(a\)-side has the form \(a\rho\) with
\(\rho\in\Pref_{|b|-|a|}(P^*)\), and the second-stage condition separates it
from \(b\).  Lemma~\ref{lem:persist} empties every later window.
\end{proof}

Combining Proposition~\ref{prop:subsume} with
Theorem~\ref{thm:depthK} gives an alternative proof of the conclusion of the
two-stage criterion, Theorem~\ref{thm:two-stage-import}.  In place of the
original direct induction on codeword length, this proof factors soundness as
\[
 \text{two-stage conditions}
 \ \Longrightarrow\ \text{extinction by }\lmax
 \ \Longrightarrow\ \text{correction at every length}.
\]

Proposition~\ref{prop:subsume} therefore gives
\[
 \text{two-stage certification}
 \ \Longrightarrow\ \text{finite extinction}
 \ \Longrightarrow\ C_n(P)\text{ is correcting for every }n.
\]
The first implication is strict by Theorem~\ref{thm:rates}; the second is strict by Example~\ref{ex:gapwitness}.

Reversal gives a useful symmetry of the channel and of every
block-concatenation construction.  It can therefore transfer a correcting or
detecting construction without repeating its verification.

\begin{lemma}[Reversal invariance]
\label{lem:reversal}
For a word \(w=w_1\cdots w_n\), write \(w^R=w_n\cdots w_1\), and put
\(P^R=\{p^R:p\in P\}\).  Then
\[
 \B(w^R)=\B(w)^R,
 \qquad
 C_n(P^R)=C_n(P)^R.
\]
Consequently, \(C_n(P)\) is correcting, respectively detecting, for every
\(n\) if and only if the same is true of \(C_n(P^R)\).
\end{lemma}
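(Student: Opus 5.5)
The plan is to verify the ball identity \(\B(w^R)=\B(w)^R\) directly from the definition of radius-one permutations, and then derive the other two claims as formal consequences.

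\emph{Step 1: the ball identity.} Let \(\rho\in S_n\) be the reversal permutation \(\rho(i)=n+1-i\), so that \(w^R=\rho w\) in the notation \((\pi w)_i=w_{\pi(i)}\). For any \(\pi\in S_n\),
\[
 (\pi w)^R=\rho\pi w=(\rho\pi\rho)(\rho w)=(\rho\pi\rho)w^R .
\]
The conjugate \(\rho\pi\rho\) sends \(i\) to \(n+1-\pi(n+1-i)\). Its displacement at \(i\) equals \(|\pi(j)-j|\) with \(j=n+1-i\), so \(\wt(\rho\pi\rho)=\wt(\pi)\). Since \(\pi\mapsto\rho\pi\rho\) is a bijection of \(S_n\) preserving \(\wt\), it maps the admissible permutations onto themselves. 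Hence \(\B(w)^R=\B(w^R)\). The same argument works at every radius \(r\).

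In swap-set language this is just the observation that a swap at site \(i\) becomes a swap at site \(n-i\). Non-consecutiveness is preserved by this relabelling.

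\emph{Step 2: the code identity.} Reversal is an anti-automorphism of concatenation: \((p_1\cdots p_m)^R=p_m^R\cdots p_1^R\). So a factorization of \(c\) into blocks of \(P\) reverses to a factorization of \(c^R\) into blocks of \(P^R\), and conversely, because \((p^R)^R=p\). Reversal also preserves length. Therefore \(C_n(P^R)=C_n(P)^R\).

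\emph{Step 3: the transfer of correction and detection.} Reversal is an involutive bijection on \(\Sigma_q^n\), and distinct words stay distinct under it. By Step 1,
\[
 \B(c^R)\cap\B(c'^R)=\bigl(\B(c)\cap\B(c')\bigr)^R,
\]
so one intersection is empty exactly when the other is. Likewise \(c'^R\in\B(c^R)\) if and only if \(c'\in\B(c)\). Combined with Step 2, this gives both equivalences: one direction is applied to \(P\), and the converse follows by applying the same statement to \(P^R\), since \((P^R)^R=P\).

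There is no real obstacle here. The only point needing care is the convention \((\pi w)_i=w_{\pi(i)}\): one must check that conjugating \(\pi\) by \(\rho\), rather than composing only on one side, is what carries \(\pi w\) to the reversed word.
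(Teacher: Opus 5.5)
Your proof is correct and follows the paper's argument. The paper proves the ball identity by sending a swap set \(S\) to \(S^R=\{n-i:i\in S\}\), which is your swap-set remark and is exactly what conjugation by \(\rho\) does to \(\sigma_S\); it then derives the code identity and both coding equivalences from reversal being a concatenation-reversing bijection, as in your Steps 2 and 3, and your \(\wt\)-preserving conjugation has the small bonus of applying verbatim to \(B_r\) for every radius \(r\).
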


\begin{proof}
For a matching \(S\subseteq\{1,\ldots,n-1\}\), set
\(S^R=\{n-i:i\in S\}\).  This is again a matching, and
\(\sigma_{S^R}(w^R)=(\sigma_S(w))^R\), which proves the first identity.
Reversing a concatenation reverses both the factors and their order, proving
the second.  Reversal is a bijection and preserves both ball intersection and
directed ball membership, which gives the two coding equivalences.
\end{proof}

\label{sec:constructions}

We now apply the criterion to fixed-alphabet block sets.  The finite-state
procedure used to verify their extinction depths is developed independently
in Section~\ref{sec:automaton}; the present section records the resulting
mathematical constructions and rate bounds.

\subsection{Finite-alphabet block sets}

The closest recent fixed-alphabet comparison is the correcting construction of
Kova\v{c}evi\'c, Goyal, and Kiah~\cite{KGK25}.  For \(q=3\), after normalizing
logarithms to base three, their rate is
\(\log_3\xi_3=0.538196\ldots\), where \(\xi_3\) is the positive root of
\(x^6-3x^3-4x^2-4\); the Chee rate~\cite{CKLNVZ16} is \(0.315465\ldots\).  The ternary block
sets below improve both benchmarks.

The first role of extinction depth is to distinguish a genuinely bad block
set from one that merely survives the earlier local cutoff.  The
\(66\)-block ternary set \(P^{(3)}\), listed in
Appendix~\ref{app:blocks}, has profile
\[
 (p_3,p_4,p_6,p_8,p_{10})=(3,8,18,15,22).
\]
The two-stage criterion of Theorem~\ref{thm:two-stage-import} leaves twenty-seven stage-two
seed cases, arising from ten first-block pairs, unresolved.  None leads to a
collision: the reachable part of each associated configuration graph is
acyclic, and the deepest first-block pair dies at window \(24\).  All ten pairs
survive beyond \(\lmax=10\), so this block set gives the desired strict
separation between the old local cutoff and finite extinction.

The same idea can be used constructively.  When a candidate block leaves a
pair unresolved at depth \(\lmax\), we follow the pair rather than reject it.
The length-\(14\) extension used below is the \(265\)-block set
\(P^{(3)}_{14}\), whose size, extinction depth, growth parameter, and rate are shown below.

\begin{table}[!t]
\caption{The ternary base block set and its length-\(14\) extension.  Growth parameters
and rates are rounded to the displayed precision; rigorous lower bounds are
stated in Theorem~\ref{thm:rates}.}
\label{tab:ternary-rates}
\centering
\small
\begin{tabular}{@{}l r r r r@{}}
\toprule
block set & \(\lvert P\rvert\) & \(K_0(P)\) & \(\lambda(P)\) & \(R(P)\) \\
\midrule
\(P^{(3)}\)      & 66  & 24 & 2.0990359 & 0.6749225 \\
\(P^{(3)}_{14}\) & 265 & 36 & 2.1055608 & 0.6777476 \\
\bottomrule
\end{tabular}
\end{table}

\begin{theorem}[Ternary extinction-depth constructions]
\label{thm:rates}
The prefix-free block sets in Table~\ref{tab:ternary-rates} satisfy the
displayed extinction bounds.  In particular, \(P^{(3)}\) is not certified by
the two-stage criterion but satisfies the depth-\(24\) criterion, and
\[
 C_0^{(3)}\ge R(P^{(3)}_{14})>0.6777475.
\]
\end{theorem}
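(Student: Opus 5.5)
The argument splits into three certified computations on the explicit block sets of Appendix~\ref{app:blocks}, followed by a rigorous rate bound.

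\textbf{Prefix-freeness.} We check it directly by comparing each block with the length-\(|a|\) prefixes of all longer blocks. With \(|P|\le265\) this is a finite pairwise scan.

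\textbf{Extinction at depth \(24\) and \(36\).} For each distinct first-block pair \(\{a,b\}\) we compute \(\Live_n(a,b)\) level by level, for \(n=1,2,\ldots\). We do not store all prefix pairs. Instead we track configurations consisting of the position inside the current block on each side, together with the pending boundary information (a possible transposition crossing position \(n\)) that determines the admissible truncated outputs. This is the finite configuration graph of Section~\ref{sec:automaton}. By Lemma~\ref{lem:persist}, if the level-\(n\) set of configurations is empty, then \(\Live_m(a,b)=\emptyset\) for all \(m\ge n\). So it suffices to exhibit emptiness at level \(24\) for every pair of \(P^{(3)}\), and at level \(36\) for every pair of \(P^{(3)}_{14}\). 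Since \(\Live_K=\emptyset\) implies \(\Live_{K'}=\emptyset\) for all \(K'\ge K\), a common horizon \(K\) equal to the maximum depth works. Theorem~\ref{thm:depthK} then gives correctness of every \(C_n(P)\). This step is a certified exhaustive computation; the main obstacle is making it trustworthy and feasible. I would use the acyclicity of the reachable subgraph as the certificate: the longest path gives the depth, and an explicit ranking function gives an independently checkable witness.

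\textbf{Failure of the two-stage criterion for \(P^{(3)}\).} We exhibit one of the ten unresolved first-block pairs with \(|a|<|b|\) and \(\Dmin(a,\prefx_{|a|}b)=1\). For that pair we give a specific \(\rho\in\Pref_{|b|-|a|}(P^*)\) and a common word in \(\Tball(a\rho)\cap\Tball(b)\), which violates (T2). A single explicit witness suffices. Consistently with Proposition~\ref{prop:subsume}, such pairs must survive past \(\lmax=10\), and the computed depths confirm this.

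\textbf{Rate.} The profile of \(P^{(3)}_{14}\) gives \(f(\lambda)=\sum_\ell p_\ell\lambda^{-\ell}\), which is strictly decreasing in \(\lambda\). To show \(\lambda(P^{(3)}_{14})>\lambda_0:=3^{0.6777475}\), it suffices to check \(f(\lambda_0)>1\). We verify this with exact rational arithmetic, using a rational lower bound for \(\lambda_0\) and the monotonicity of \(f\). Prefix-freeness makes \(R(P)\) the actual word rate of \(C_n(P)\). Hence \(A_3(n;1)\ge\abs{C_n(P^{(3)}_{14})}\) along lengths where the count grows like \(\lambda^n\), which gives \(C_0^{(3)}\ge R(P^{(3)}_{14})>0.6777475\).
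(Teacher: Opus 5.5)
Your plan is correct and matches the paper's proof in its main step. The paper also runs the exact traversal of the configuration graphs $\mathcal G_P(a,b)$ and checks that every reachable part is acyclic, with longest paths giving maximum depths $24$ and $36$ (Proposition~\ref{prop:acyclicity-extinction}). It then invokes Theorem~\ref{thm:depthK} and brackets both growth parameters by exact rational sign checks.

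The one real difference is how you show that $P^{(3)}$ is not two-stage certified. You exhibit an explicit violation of (T2): a $\rho\in\Pref_{|b|-|a|}(P^*)$ with $\Tball(a\rho)\cap\Tball(b)\ne\emptyset$. The paper instead reads off from the same traversal that ten pairs have $\ko(a,b)>\lmax=10\ge\max\{|a|,|b|\}$, which contradicts Proposition~\ref{prop:subsume}. Both arguments are valid. The paper's route costs nothing beyond the depth computation already performed. Yours gives a witness that can be checked by hand without trusting the traversal.

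Two small corrections.

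First, the rate step has the direction reversed. Since $f$ is decreasing, you need a rational $\lambda_1\ge\lambda_0=3^{0.6777475}$ with $f(\lambda_1)>1$; then $\lambda_0\le\lambda_1<\lambda(P^{(3)}_{14})$. A rational lower bound $\mu\le\lambda_0$ with $f(\mu)>1$ only gives $\lambda(P^{(3)}_{14})>\mu$, which says nothing about $\lambda_0$. Taking $\lambda_1=2.1055607$ reproduces the paper's lower bracket. The remaining comparison $\log_3 2.1055607>0.6777475$ is the same irrational check the paper's bracket also needs.

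Second, a (T2) violation at $(a,b)$ forces survival only past $\max\{|a|,|b|\}$, which can be $8$ for this block set. That all ten pairs survive past $\lmax=10$ is an output of the computation, not a consequence of Proposition~\ref{prop:subsume}. The paper uses that computed fact as its proof of non-certification.
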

\begin{proof}
The exact finite-state traversal of Section~\ref{sec:automaton} constructs the
reachable part of every configuration graph and checks that it is acyclic,
with maximum depths \(24\) and \(36\), respectively.  For \(P^{(3)}\), it
also finds exactly ten first-block-pair depths above
\(\lmax=10\), proving strictness.  Theorem~\ref{thm:depthK} then proves
correction for both block sets.  The
growth equations have the exact rational brackets
\[
 \lambda(P^{(3)})\in(2.0990359,2.0990360),\qquad
 \lambda(P^{(3)}_{14})\in(2.1055607,2.1055608),
\]
and prefix-freeness identifies the growth parameters with the asymptotic word
rates.
\end{proof}

\subsection{A five-ary enhancement}
\label{sec:q5plus}

The ternary block sets show what deeper extinction can certify.  At \(q=5\),
the stronger of the Chee~\cite{CKLNVZ16} and
Kova\v{c}evi\'c--Goyal--Kiah~\cite{KGK25} benchmarks is the
Chee rate \(0.556641\ldots\) (the latter construction gives
\(0.503804\ldots\)).  A separate five-ary construction substantially
improves both through explicit symbolic templates.  Let \(P_5^+\) contain every assignment over
\(\Sigma_5\) of the following thirteen templates:
\[
\begin{array}{@{}c@{\qquad}l@{}}
\toprule
\text{symbol constraint} & \text{templates}\\
\midrule
a\in\Sigma_5
  & aaa\\
a\ne b
  & baaa,\quad aabbbb\\
a,b,c\ \text{pairwise distinct}
  & aabbcc,\quad abaccc,\quad abcccaa,\\
  & aabbbcaa,\quad abacbbbb,\quad abccaabb\\
a,b,c,d\ \text{pairwise distinct}
  & abcbdd,\quad abccadbb\\
a,b,c,d,e\ \text{pairwise distinct}
  & abcadcea,\quad abccdaea\\
\bottomrule
\end{array}
\]
The first eleven templates form a symbol-symmetric baseline.  The final two use
all five symbols and enlarge the length-eight layer while preserving every
two-stage local check; this is the source of the improvement.  Deleting them
leaves the unaugmented eleven-template family.  The enhanced profile is
\[
 (p_3,p_4,p_6,p_7,p_8)=(5,20,260,60,540).
\]

\begin{theorem}[Five-ary enhancement]
\label{thm:q5plus}
The block set \(P_5^+\) satisfies the two-stage local criterion of
Theorem~\ref{thm:two-stage-import}.  Consequently, \(C_n(P_5^+)\) is correcting for every
\(n\), and
\[
 C_0^{(5)}\ge \log_5\lambda_5^+>0.6694926,
\]
where \(\lambda_5^+\in(2.9373472,2.9373473)\) is the positive root of
\[
 5(\lambda_5^+)^{-3}+20(\lambda_5^+)^{-4}+260(\lambda_5^+)^{-6}
 +60(\lambda_5^+)^{-7}+540(\lambda_5^+)^{-8}=1.
\]
This improves the direct five-ary bound \(0.663964\) obtained from the
unaugmented templates.
\end{theorem}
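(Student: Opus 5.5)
The proof reduces to checking (T1) and (T2) of Theorem~\ref{thm:two-stage-import} for every pair of distinct blocks of \(P_5^+\), and then evaluating the growth equation. Theorem~\ref{thm:two-stage-import} then gives correction and prefix-freeness. Prefix-freeness identifies \(\log_5\lambda_5^+\) with the asymptotic rate of \(C_n(P_5^+)\), which yields the capacity bound.

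The first step is a symmetry reduction. Both \(\B\) and \(\Tball\) commute with any permutation \(\tau\) of \(\Sigma_5\) applied letterwise. The block set \(P_5^+\) is closed under such relabelings, since it contains every assignment of every template. Hence a pair \((a,b)\) can be normalized up to the action of \(S_5\) on symbols. Moreover, \(\Dmin\) and truncated-ball intersections depend only on the joint pattern of equalities among the letters of \(a\) and \(\prefx_{|a|}(b)\). So it suffices to enumerate pairs of templates together with all consistent identifications of their symbol variables, up to renaming. This is a finite list of joint equality patterns for each pair of the thirteen templates.

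For the equal-length pairs we check (T1). These occur within the same template (different assignments) and between templates of equal length, namely lengths \(4,6,7,8\). By Lemma~\ref{lem:trunc-import}, it suffices where possible to show \(\Dmin\ge2\). When \(\Dmin=1\), we instead compute \(\Tball\) directly; each ball has at most Fibonacci-many elements, at most \(34\) for length \(8\).

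For the unequal-length pairs we check (T2). First, \(\Dmin(a,\prefx_{|a|}b)\ge1\) must hold; this amounts to showing that no block is a prefix of another block after some swaps, and in particular that \(P_5^+\) is prefix-free. In the case \(\Dmin=1\), we enumerate \(\rho\in\Pref_d(P^*)\) with \(d\le5\). These are again symbolic template prefixes, and we check \(\Tball(a\rho)\cap\Tball(b)=\emptyset\).

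The main obstacle is organizing this case analysis so that it is provably exhaustive. The added templates \(abcadcea\) and \(abccdaea\) use all five symbols, so full symmetry reduction still leaves many identification patterns against the baseline blocks. My plan is to do the check mechanically, enumerating all \(885\) concrete blocks and all \(\rho\) (equivalently, running the finite traversal of Section~\ref{sec:automaton} restricted to depth \(\lmax=8\)). I will present the symmetry argument as the reason the computation is small and certifiable.

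Finally, we bracket \(\lambda_5^+\). The function \(f(\lambda)=5\lambda^{-3}+20\lambda^{-4}+260\lambda^{-6}+60\lambda^{-7}+540\lambda^{-8}\) is strictly decreasing on \(\lambda>0\). Evaluating \(f\) exactly in rationals at \(2.9373472\) (value \(>1\)) and at \(2.9373473\) (value \(<1\)) brackets the root. Taking \(\log_5\) of the lower endpoint gives \(>0.6694926\). The same computation with the \(540\) replaced by the unaugmented length-\(8\) count (removing \(2\cdot 120\) assignments, i.e.\ \(p_8=300\)) gives the comparison value \(0.663964\).
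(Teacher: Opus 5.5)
Your plan matches the paper's proof: an exhaustive exact check of (T1) and (T2) over all \(885\) concrete blocks, for which the paper reports \(5{,}540\) pairs sent to stage two and \(544{,}500\) stage-two checks with no failures; prefix-freeness then comes from Theorem~\ref{thm:two-stage-import}, and exact rational sign evaluation at the two endpoints brackets the root of the growth equation. One caveat: your parenthetical ``equivalently, the traversal restricted to depth \(\lmax=8\)'' is not an equivalence, because a uniform depth-\(8\) traversal only establishes condition (W) and hence correction via Theorem~\ref{thm:depthK}, whereas the two-stage criterion (and with it prefix-freeness) requires \(\ko(a,b)\le\max\{|a|,|b|\}\) for each pair together with \(\Dmin\ge1\) for every unequal-length pair, so keep the direct enumeration.
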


\begin{proof}
An exhaustive evaluation of the finite two-stage conditions gives the following
exact verification counts:
\[
\begin{array}{@{}lr@{}}
\toprule
\text{test item} & \text{count}\\
\midrule
\text{blocks} & 885\\
\text{same-length unordered pairs} & 181{,}170\\
\text{same-length failures} & 0\\
\text{unequal-length pairs, shorter first} & 210{,}000\\
\text{first-stage pairs with }\Dmin\ge2 & 204{,}460\\
\text{pairs sent to stage two} & 5{,}540\\
\text{stage-two legal-prefix checks} & 544{,}500\\
\text{stage-two failures} & 0\\
\bottomrule
\end{array}
\]
Thus all conditions of Theorem~\ref{thm:two-stage-import} hold.  Its two
conclusions give all-length correction and prefix-freeness.  Exact sign
evaluation at the stated rational endpoints then brackets the unique positive
growth parameter and yields the rate bound.
\end{proof}

\subsection{Structural properties of extinction depth}
\label{sec:structure}

Before turning to the long corridor responsible for our deepest examples, it
is useful to separate two questions.  First, can finite extinction depth be
bounded independently of the block set?  Second, how rapidly can it grow with
the block lengths?  The next elementary family answers the first question
negatively.  Although it uses only the symbols \(0\) and \(1\), it is a
structural example over every alphabet of size at least two, rather than a
binary rate construction.

\begin{proposition}[An elementary linear family]
\label{prop:linear-depth}
Let \(q\ge2\), \(L\ge3\), and
\[
 P_L=\{0^L,0^{L-1}1\}\subseteq\Sigma_q^L.
\]
For its two distinct first blocks,
\[
 \ko(0^L,0^{L-1}1)=\ko(0^{L-1}1,0^L)=L+2.
\]
Consequently, finite extinction depth has no bound that is independent of the
block lengths.
\end{proposition}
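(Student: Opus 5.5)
The plan is to compute the two prefix sets explicitly, since the block set is so small that they are singletons at the relevant depths. Write \(a=0^L\) and \(b=0^{L-1}1\). Every block of \(P_L\) begins with \(0^{L-1}\), and \(L-1\ge2\) because \(L\ge3\). Hence every block begins with \(00\), and for \(n\le L+2\) we get
\[
 \Pref_n(aP^*)=\{\prefx_n(0^{L+2})\},\qquad
 \Pref_n(bP^*)=\{\prefx_n(0^{L-1}100)\}.
\]
So \(\Live_n(a,b)\) contains at most one pair for each such \(n\). The task then reduces to checking that this pair is live at \(n=L+1\) and extinct at \(n=L+2\).

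For the lower bound \(\ko(a,b)\ge L+2\), take the pair \(u=0^{L+1}\), \(v=0^{L-1}10\) at depth \(L+1\). The word \(u\) has only the output \(0^{L+1}\), so \(\Tball(u)=\{0^L\}\). Applying the swap at site \(L\) to \(v\) gives \(0^{L+1}\), and its length-\(L\) prefix is \(0^L\in\Tball(v)\). The pair is therefore live, and \(\Live_{L+1}(a,b)\neq\emptyset\). By the prefix part of Lemma~\ref{lem:persist}, every positive-length prefix of this pair is also live. Hence \(\Live_n(a,b)\ne\emptyset\) for all \(1\le n\le L+1\), and the minimum in the definition of \(\ko\) is at least \(L+2\).

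For the upper bound, take the unique pair \(0^{L+2}\) and \(0^{L-1}100\) at depth \(L+2\). The truncated ball of \(0^{L+2}\) is \(\{0^{L+1}\}\). Every output of \(0^{L-1}100\) contains exactly one symbol \(1\). A radius-one permutation moves that symbol only to position \(L-1\), \(L\), or \(L+1\), all of which lie among the first \(L+1\) positions. So every element of \(\Tball(0^{L-1}100)\) contains a \(1\), and \(0^{L+1}\) is not among them. The pair is extinct, so \(\Live_{L+2}(a,b)=\emptyset\) and \(\ko(a,b)=L+2\).

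Since liveness is symmetric, \(\ko(b,a)=\ko(a,b)=L+2\). Letting \(L\to\infty\) shows that no depth bound independent of the block lengths can exist. The only delicate point is the claim that the prefix sets are singletons through depth \(L+2\), which is exactly where \(L\ge3\) is used. For \(L=2\) the block \(01\) would allow the second block to begin \(01\), and the prefix sets would no longer be singletons at that depth.
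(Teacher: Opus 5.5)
Your proof is correct and follows the paper's argument: the prefix sets are singletons through depth \(L+2\), the pair is live at depth \(L+1\) by swapping the final \(10\) on the \(0^{L-1}1\)-side, and it is extinct at depth \(L+2\) because the unique \(1\) cannot leave the first \(L+1\) output positions. You handle depths below \(L+1\) by persistence, whereas the paper checks them directly, and there is one small slip: the swap at site \(L\) turns \(0^{L-1}10\) into \(0^L1\), not \(0^{L+1}\), since swaps preserve composition, but its length-\(L\) prefix is still \(0^L\), so your conclusion is unaffected.
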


\begin{proof}
Write \(a=0^L\) and \(b=0^{L-1}1\).  Up to depth \(L-1\), the two languages
beginning with \(a\) and \(b\) have the same all-zero prefix.  At depth \(L\), deleting the last
output coordinate still leaves the common word \(0^{L-1}\).  At depth \(L+1\),
the corresponding prefixes are \(0^{L+1}\) and \(0^{L-1}10\); swapping the final
\(10\) on the second side gives the common truncated output \(0^L\).  Thus the
pair is live through depth \(L+1\).

At depth \(L+2\), the corresponding prefixes are necessarily
\(0^{L+2}\) and \(0^{L-1}100\).  Under disjoint adjacent transpositions, the
unique \(1\) in the second word can move by at most one position, so it remains
among the first \(L+1\) output coordinates.  Every truncated output on that
side therefore contains a \(1\), whereas every output on the first side is
all zero.  Hence the pair is extinct at depth \(L+2\).  Symmetry gives the
reverse orientation.
\end{proof}

\paragraph{Quadratic depth on a finite corridor.}
\label{sec:lower}

Linear growth is only the first possible scale.  Two periodic parsings can
keep their relative phase moving through a product of residue classes.  The
following family has an exact quadratic depth formula throughout
\(3\le k\le60\).  The proof is a complete finite-state verification using
integer states and transitions.  The same formula for all \(k\ge3\) is stated
separately as a conjecture.

\begin{theorem}[Exact quadratic corridor depths]
\label{thm:corridor}
For \(3\le k\le60\), let
\[
\begin{aligned}
a_k&=00(10)^{k-3}1111, & |a_k|&=\ell_a=2k,\\
b_k&=(01)^{k-1}1110,   & |b_k|&=\ell_b=2k+2,
\end{aligned}
\]
and set \(P_k=\{a_k,b_k\}\).  Then \(P_k\) is prefix-free and
\[
\ko(a_k,b_k)=\ko(b_k,a_k)
=\lc(\ell_a,\ell_b)+\ell_a-1
=2k^2+4k-1.
\]
Consequently, \(P_k\) is correcting for every \(3\le k\le60\).
\end{theorem}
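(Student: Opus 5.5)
The plan is to treat the statement as a finite computation that also has a structural explanation, and to give both. Prefix-freeness is immediate. Since \(|a_k|<|b_k|\), it suffices to note that \(\prefx_{2k}(b_k)=(01)^{k-1}11\) starts with \(01\), whereas \(a_k\) starts with \(00\).

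Symmetry \(\ko(a_k,b_k)=\ko(b_k,a_k)\) holds by definition, because liveness is symmetric. The arithmetic identity also holds for all \(k\ge3\): \(\ell_a=2k\) and \(\ell_b=2k+2\) have \(\gcd=2\), so \(\lc(\ell_a,\ell_b)=2k(k+1)\), and \(2k(k+1)+2k-1=2k^2+4k-1\). The correction claim then follows from Theorem~\ref{thm:depthK} with \(K=2k^2+4k-1\).

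The substance is the exact depth. I would organize it through the configuration graph of Section~\ref{sec:automaton}, which encodes a state as follows:
\begin{itemize}
\item the current offsets inside the current blocks on the two sides, \((i,j)\) with \(0\le i<\ell_a\) or \(\ell_b\) each;
\item the pending boundary information, namely the last unswapped symbol on each side and whether a swap crosses the current position;
\item the set of common truncated outputs still compatible.
\end{itemize}
Since \(P_k\) has only two blocks, \(\Pref_n(a_kP_k^*)\) branches only at block boundaries. The live pairs at depth \(n\) therefore correspond to pairs of parsings whose boundary positions are sets \(\{\ell_a s+\ell_b t\}\), shifted by the first block.

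The first structural step is a local lemma. Inside a block, \((10)^{m}\) against \((01)^{m}\) is live: a swap on every other site converts one to the other, with a half-period phase offset. Away from that aligned-period regime, any pair of different blocks is separated within a bounded window. So the only long-surviving configuration is the one in which both sides keep choosing the block of the opposite type, \(a_k\) repeated on one side and \(b_k\) on the other, or the mixtures the graph allows. In that configuration the boundaries drift by \(2\) per period until they coincide.

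The expected answer follows from this picture. The corridor persists until the boundary residues modulo \(\ell_a\) and \(\ell_b\) realign, which first happens at length \(\lc(\ell_a,\ell_b)\). After realignment the distinguishing tail \(1111\) against \(1110\) (the suffixes \(\ldots1111\) and \(\ldots1110\)) separates the pair within a further \(\ell_a-1\) symbols. That gives the value \(\lc(\ell_a,\ell_b)+\ell_a-1\).

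For the exact statement I would not rely on the heuristic alone. For each \(k\) in \(3\le k\le60\), I would run the exact extinction verification of Section~\ref{sec:automaton} on \(P_k\), a finite integer computation on a graph whose size is polynomial in \(k\). It has two parts:
\begin{enumerate}
\item Check that the reachable part of the graph from the first-block pair is acyclic, so the depth is finite.
\item Check that its longest path gives exactly \(2k^2+4k-1\).
\end{enumerate}
For the lower bound I would additionally exhibit an explicit live pair at depth \(2k^2+4k-2\). On one side it is \(a_k^{k+1}\) followed by a prefix of \(a_k\); on the other it is \(b_k^{k}\) followed by a prefix of \(b_k\). Its common truncated output is obtained by choosing swaps at the \(10/01\) phase shifts, and I would check that output block by block.

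The main obstacle is the matching upper bound: showing that no other mixed parsing survives longer than the pure corridor. A hand proof would need the local lemma to be fully exhaustive, including swaps straddling block boundaries. That is why the argument restricts to the verified range \(3\le k\le60\) and leaves general \(k\) as a conjecture.
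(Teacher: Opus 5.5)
Your proposal is correct and is essentially the paper's proof. Prefix-freeness, symmetry and the arithmetic \(\lc(2k,2k+2)+2k-1=2k^2+4k-1\) are immediate, and the exact depth is certified by an exhaustive integer traversal of the configuration graph of Section~\ref{sec:automaton} for each \(3\le k\le 60\), followed by Theorem~\ref{thm:depthK}. The paper runs that traversal breadth-by-depth (last nonempty level \(2k^2+4k-2\)); your acyclicity-plus-longest-path version is equivalent by Proposition~\ref{prop:acyclicity-extinction}, with the longest path from \(\kappa_1\) having \(2k^2+4k-3\) edges.

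Two minor caveats. First, configurations are just pairs of one-sided states \((\rho,\gamma,\beta)\) and carry no set of compatible outputs. Second, your explicit witness \(a_k^{k+1}\prefx_{2k-2}(a_k)\) versus \(b_k^{k}\prefx_{2k-2}(b_k)\) is consistent with the claimed lower bound (checked by hand, it is live at depth \(2k^2+4k-2\) for \(k=3,4\)), but the ``\(1111\) versus \(1110\)'' account of the extra \(\ell_a-1\) symbols is only heuristic and is not needed for the proof.
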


\begin{proof}
The exact finite-state method developed in Section~\ref{sec:automaton} gives a
breadth-by-depth traversal of the configuration dynamics in
Definition~\ref{def:configuration}.  For each ordered first-block pair it
starts from the unique depth-one configuration and, at every step, enumerates
all legal block-position continuations and all allowed pairs of swap decisions.  A
successor is retained exactly when it satisfies the equal-output edge condition in
Definition~\ref{def:configuration}.  Induction on the window length therefore
shows that the depth-\(n\) traversal level is nonempty exactly when a witness
path with \(n\) vertices beginning with \((a,b)\) exists.  By
Lemma~\ref{lem:paths-live}, this
is equivalent to \(\Live_n(a,b)\ne\varnothing\).

For every \(3\le k\le60\), the exhaustive computation finds nonempty levels
through \(n=2k^2+4k-2\) and an empty level at
\(n=2k^2+4k-1\), for both orders of the first blocks.  All states and transitions are
represented by integers, so the check is exact.  Since \(P_k\) has only these
two ordered first-block pairs, Theorem~\ref{thm:depthK} proves that every
\(C_n(P_k)\) is correcting.
\end{proof}

\begin{conjecture}[Uniform lcm corridor]
\label{conj:corridor-all-k}
The conclusion of Theorem~\ref{thm:corridor} holds for every \(k\ge3\).
Equivalently,
\[
\ko(a_k,b_k)=\lc(2k,2k+2)+2k-1=2k^2+4k-1
\qquad(k\ge3).
\]
If true, then \(L_{P_k}+\lmax=6k+4\), and this family satisfies
\[
 \ko(a_k,b_k)=\Theta\!\bigl((L_{P_k}+\lmax)^2\bigr),\qquad
 \frac{\ko(a_k,b_k)}{(L_{P_k}+\lmax)^2}\longrightarrow\frac1{18}.
\]
\end{conjecture}

The corridor was not chosen in isolation.  It emerged among the extremal
examples in an exhaustive sweep over binary two-block sets.  The sweep
suggests that the relevant length scale changes according to whether the two
block lengths are coprime.

\begin{conjecture}[Two-block depth law]
\label{conj:B2}
Let \(P=\{a,b\}\) be a prefix-free binary block set, with
\(\ell_a=|a|\le\ell_b=|b|\), and suppose that \(\ko(a,b)<\infty\).  If
\(\gcd(\ell_a,\ell_b)\ge2\), then
\[
 \ko(a,b)\le \lc(\ell_a,\ell_b)+\ell_a-1.
\]
If \(\gcd(\ell_a,\ell_b)=1\), then there is an absolute constant \(C\) such
that
\[
 \ko(a,b)\le C(\ell_a+\ell_b).
\]
\end{conjecture}

\begin{table}[!t]
\caption{Selected maxima from the exact two-block sweeps.  For each pair of
lengths, \(\max\ko\) is the largest finite depth among the indicated number
of symmetry classes; the displayed words form one extremal pair.}
\label{tab:b2sweep}
\centering
\scriptsize
\setlength{\tabcolsep}{3pt}
\begin{tabular}{@{}r r c r l r@{}}
\toprule
\(\ell_a\) & \(\ell_b\) & \(\gcd\) & \(\max\ko\) & witness \((a,b)\) & classes \\
\midrule
6  & 7  & 1 & 21 & \texttt{010100}, \texttt{0001011}         & 4032 \\
6  & 8  & 2 & 29 & \texttt{001111}, \texttt{01011110}        & 8064 \\
6  & 9  & 3 & 23 & \texttt{010000}, \texttt{001000001}       & 16128 \\
8  & 10 & 2 & 47 & \texttt{00101111}, \texttt{0101011110}    & 130560 \\
10 & 11 & 1 & 36 & \texttt{0001010111}, \texttt{00011101010} & 1047552 \\
\bottomrule
\end{tabular}
\end{table}

In every displayed noncoprime row, the maximum attains
\(\lc(\ell_a,\ell_b)+\ell_a-1\), whereas the coprime maxima remain on a
linear scale.  Fine--Wilf periodicity gives a natural explanation for this
split~\cite{FW65}.  A long live path forces the two block parsings to support
overlapping periodic stretches with periods \(\ell_a\) and \(\ell_b\); a
sufficiently long overlap then has period \(\gcd(\ell_a,\ell_b)\).  In the
coprime case this collapses to a constant stretch, which cannot sustain a
long effective swap relay.  With a nontrivial gcd, a nonconstant periodic
core can survive while the two parser phases move through their residue
classes, as they do in the corridor.  This is motivation for
Conjecture~\ref{conj:B2}, not a proof of it.

\section{Finite-state verification and the infinite gap}
\label{sec:automaton}

Our coding objective is
\[
 \B(u)\cap\B(v)=\emptyset
 \qquad
 \text{for every }n\text{ and all distinct }u,v\in C_n(P).
\]
Our finite-state method checks this condition by searching for the opposite
object: a pair
of distinct transmitted words and two legal channel actions that produce one
common output.  Such a pair is a collision witness.  If a complete witness is
found, the two balls intersect and the code is not correcting; if every
possible witness dies, the balls are disjoint.

Recall that swap site \(i\) is the adjacent pair of transmitted positions
\((i,i+1)\); it may lie inside one block or straddle a block boundary.  A legal
channel action is specified by a set of swap sites containing no consecutive
indices.  Exchanging the two symbols at every selected site produces one
possible received output.

Two fixed words already illustrate the search.  For
\(u=0000\) and \(v=1111\), every channel action leaves the respective word
unchanged, so the first settled output symbols are \(0\) and \(1\).  The
candidate witness dies immediately and
\(\B(0000)\cap\B(1111)=\emptyset\).  In contrast, for
\[
 u=1010,
 \qquad
 v=0101,
\]
selecting no swap sites on the first word and swap sites \(1,3\) on the
second gives
\[
 1010\longmapsto1010,
 \qquad
 0101\longmapsto1010.
\]
The selected swap sites are pairwise nonconsecutive, so this is a legal
collision witness.  The construction below records exactly such symbol
choices and swap decisions, but does so simultaneously for all legal
concatenations from \(P\).

For a fixed transmitted word \(w\), its channel ball \(\B(w)\) is a finite
language: its words are precisely the received outputs that can result from
\(w\).  We represent this language by a finite output automaton that accepts
exactly \(\B(w)\).  Hence two fixed-word balls intersect precisely when their
output automata accept a common word.

Block-concatenation codes require one representation covering words of
unbounded length.  For each possible first block
\(a\in P\), we therefore collect all finite balls above legal continuations in
the \emph{continuation-output language}
\[
 \mathsf{Out}_a(P)=\bigcup_{u\in aP^*}\B(u).
\]
The finite-state output automaton \(\mathcal A_P(a)\) recognizes this infinite
language.  Its cycles represent arbitrarily long block continuations, while
each state retains only the local information that determines the possible
future transitions and output labels.

The same idea could start from all of \(P^*\) by allowing every \(a\in P\) as
an initial block.  Its accepted language, however, would forget which first
block generated a given output.  Since correction requires comparing outputs
arising from distinct first blocks, we retain the separate languages
\(\mathsf{Out}_a(P)\).

To compare distinct first blocks \(a,b\in P\), we run
\(\mathcal A_P(a)\) and \(\mathcal A_P(b)\) synchronously and retain only
pairs of transitions that emit the same output symbol.  This synchronized
product is the finite configuration graph.  A path in it records two legal
transmitted prefixes and two legal channel actions that have produced the same
received prefix so far.  Thus a completed accepting path is a collision
witness, a dead end rules out that partial witness, and a reachable cycle
records a witness that can remain alive indefinitely.

The construction therefore has two layers.  We first build and analyze one
one-sided output automaton.  We then form the synchronized product for each
unordered pair of distinct first blocks.  We call the two hypothetical
transmitted words the left and right sides.  At each
step, the two automata extend transmitted prefixes in \(aP^*\) and \(bP^*\)
and label their transitions by newly settled received symbols.  The product
searches for a common output and thereby verifies the block set.

\subsection{One-sided output automata}
\label{sec:one-sided-output}

\begin{definition}[One-sided output automaton]
\label{def:one-sided-output}
Fix a finite set \(P\subseteq\Sigma_q^*\setminus\{\epsilon\}\) and a first
block \(c\in P\).  The automaton \(\mathcal A_P(c)\) generates the outputs of
all transmitted concatenations in \(cP^*\).  Recall that
\[
 L_P=\sum_{w\in P}|w|,
 \qquad
 \lmax=\max_{w\in P}|w|,
\]
and set
\[
 N_P=L_P+\lmax.
\]

The \emph{block position} records only how far the transmitted prefix has
progressed through its current block.  It is one of the following:
\begin{enumerate}[label=\textup{(\roman*)}]
\item \((c,j)\), where \(1\le j<|c|\), if the first \(j\) symbols of the
forced first block \(c\) have been appended;
\item \(r_0\), if the transmitted prefix ends exactly at a block boundary;
\item a nonempty proper prefix \(s\) of a block in \(P\), if the symbols
appended since the last block boundary spell \(s\).
\end{enumerate}
This coordinate records membership in \(cP^*\) and determines which input
symbols may legally be appended next.  At
\((c,j)\), the next symbol is forced to be \(c_{j+1}\).  At \(r_0\), a new
block begins; after a proper prefix \(s\), the next symbol must extend \(s\)
toward a block of \(P\).  The new position is \(r_0\) when a block is
completed, and otherwise it is the resulting proper prefix.  If both
descriptions apply, both continuations are retained, so the graph itself does
not require prefix-freeness.

After \(j\) transmitted symbols have been appended on one side, the first
\(j-1\) output positions can already be settled, but one end symbol may still
be affected by a swap at site \(j\).  We record this unresolved end as
\((\gamma,\beta)\).  Here \(\gamma\in\Sigma_q\) is the one held symbol, and
\(\beta\in\{0,1\}\) is the \emph{previous-swap flag}:
\[
 \beta=1
 \quad\Longleftrightarrow\quad
 \text{swap site }j-1\text{ was selected}.
\]
Thus \(\beta=1\) forbids selecting swap site \(j\), because the two swaps would
overlap at position \(j\); \(\beta=0\) leaves swap site \(j\) available.  The
flag records the immediately preceding decision in the candidate channel
action and enforces the nonoverlap constraint.

A \emph{one-sided state} of \(\mathcal A_P(c)\) is
\[
 \sigma=(\rho,\gamma,\beta).
\]
Here \(\rho\) is one of the block positions defined above.
One state represents all partial runs having the same block position and
unresolved end, because these data determine exactly the same future
transitions and output labels.  There are at most
\[
 (|c|-1)+1+\sum_{w\in P}(|w|-1)
 \le N_P
\]
block positions and \(2q\) possibilities for \((\gamma,\beta)\).  Thus its
state set \(Q_P(c)\) satisfies
\[
 |Q_P(c)|\le 2qN_P.
\]

Conceptually the automaton starts at the empty output \(\epsilon\).  No output
symbol is settled after only one transmitted symbol, so we suppress this
trivial start.  After the forced first symbol \(c_1\) has been appended, the
formal initial state is
\[
 \sigma_c^{(1)}=(\rho_c^{(1)},c_1,0),
 \qquad
 \rho_c^{(1)}=
 \begin{cases}
  (c,1),&|c|>1,\\
  r_0,&|c|=1.
 \end{cases}
\]

A transition appends one legal next transmitted symbol \(x\) and chooses the
channel action at the newly encountered swap site.  If the current
unresolved end is \((\gamma,\beta)\), exactly one new output symbol becomes
settled:
\[
\begin{array}{@{}c|c|c|c@{}}
\toprule
\text{decision at the new swap site}&\text{allowed when}
 &\text{settled output}&\text{new record}\\
\midrule
\text{do not select it}&\text{always}&\gamma&(x,0)\\
\text{select it for a swap}&\beta=0&x&(\gamma,1)\\
\bottomrule
\end{array}
\]
In the second row, the newly appended symbol moves one output position to the
left, while \(\gamma\) remains unresolved at the right end.  The settled output
is the transition label; the appended transmitted symbol and swap decision
specify how the transition is realized.

If the block position is \(r_0\), the transmitted label is a complete
concatenation and the automaton may terminate by emitting its held symbol
\(\gamma\).  Therefore the words accepted by \(\mathcal A_P(c)\) are exactly
the words in \(\mathsf{Out}_c(P)\).  We suppress the terminal edges and final
sinks in what follows.
\end{definition}

The size of one automaton is already polynomial in the explicit description
of \(P\).  There are at most \(N_P\) block positions,
\(q\) choices for the held symbol, and two choices for the previous-swap flag,
so
\[
 |Q_P(c)|\le 2qN_P.
\]
From one state, a transition chooses the next transmitted symbol, at most two
legal block-position successors, and whether the new swap site is
selected for a swap.  Hence there are at most \(4q\) candidate outgoing
transitions.  The entire one-sided automaton can therefore be generated in
\(O(q^2N_P)\) time; storing its states takes \(O(qN_P)\) space, and storing all of
its transitions takes \(O(q^2N_P)\) space.  This finite automaton replaces the
explicit enumeration of output prefixes and exponentially many swap histories.

The connection with extinction depth can already be stated on the one-sided
automata.  For \(n\ge1\), let
\[
 \mathsf{Out}_n(c)=
 \bigcup_{u\in\Pref_n(cP^*)}\prefx_{n-1}(\B(u)).
\]
This is the set of length-\(n-1\) output labels produced by paths of
\(\mathcal A_P(c)\) after \(n\) transmitted symbols have been appended.  A
single automaton determines \(\mathsf{Out}_n(c)\), but extinction is inherently
pairwise:
\[
 \Live_n(a,b)\ne\emptyset
 \quad\Longleftrightarrow\quad
 \mathsf{Out}_n(a)\cap\mathsf{Out}_n(b)\ne\emptyset.
\]
Consequently,
\[
 \ko(a,b)=
 \min\bigl\{n\ge1:
 \mathsf{Out}_n(a)\cap\mathsf{Out}_n(b)=\emptyset\bigr\},
\]
with value \(\infty\) if no such \(n\) exists.  The role of the synchronized
product is to test all these intersections without constructing the sets
\(\mathsf{Out}_n(c)\) explicitly.

\begin{example}[Four fixed words]
\label{ex:four-word-automata}
For a fixed transmitted word, the same construction gives a finite automaton
whose accepted language is its channel ball.  In particular,
\[
\begin{aligned}
 \B(0000)&=\{0000\},&
 \B(1111)&=\{1111\},\\
 \B(1010)&=\{1010,0110,1100,1001,0101\},\\
 \B(0101)&=\{0101,1001,0011,0110,1010\}.
\end{aligned}
\]
The first transitions from the initial states of the automata for \(0000\)
and \(1111\) have labels \(0\) and \(1\), respectively, so no synchronized
transition is possible.
Their languages are disjoint.  By contrast, the other two automata both have
the accepting path
\[
 \epsilon\xrightarrow{1}1\xrightarrow{0}10
 \xrightarrow{1}101\xrightarrow{0}1010.
\]
For the transmitted word \(1010\), this path uses no swaps.  For \(0101\), it
selects the two nonconsecutive swap sites \(1\) and \(3\).  Thus the
same output label \(1010\) is accepted on both sides.
\end{example}

\subsection{The synchronized product graph}
\label{sec:config-graph}

One one-sided automaton answers which received words can arise from one fixed
first block, but correction asks whether the same received word can arise from
two distinct first blocks.  Comparing the accepted words explicitly would
require enumerating exponentially many output prefixes.  The synchronized
product avoids this enumeration: it remembers only the current state of each
automaton and discards a pair of transitions as soon as their output labels
differ.  Reachability of an accepting pair then decides language intersection,
while cycles and longest paths measure indefinite and finite liveness.

\begin{definition}[Configuration graph]
\label{def:configuration}
Fix distinct first blocks \(a,b\in P\).  Their configuration graph is the
synchronized product
\[
 \mathcal G_P(a,b)=\mathcal A_P(a)\otimes\mathcal A_P(b).
\]
A vertex, called a \emph{configuration}, is simply a pair
\[
 \kappa=(\sigma_L,\sigma_R)
 \qquad
 \bigl(\sigma_L\in Q_P(a),\ \sigma_R\in Q_P(b)\bigr).
\]
We use \(\rho_L,\gamma_L,\beta_L\) for the three coordinates of \(\sigma_L\),
and similarly on the right.  The initial configuration is
\[
 \kappa_1=(\sigma_a^{(1)},\sigma_b^{(1)}).
\]
There is a product edge
\[
 (\sigma_L,\sigma_R)\longrightarrow(\sigma_L',\sigma_R')
\]
precisely when the left and right one-sided automata have transitions to
\(\sigma_L'\) and \(\sigma_R'\) with the same output label.  We label the
product edge by this common output symbol and annotate it by the two appended
transmitted symbols.

Writing \(V(\mathcal G_P(a,b))\) for the product vertex set, the one-sided
state bound gives
\[
 |V(\mathcal G_P(a,b))|
 \le |Q_P(a)|\,|Q_P(b)|
 \le4q^2N_P^2.
\]
Each product state has at most \((4q)^2=16q^2\) candidate outgoing
transition pairs before unequal output labels are discarded.  Thus the square
in the state bound is the natural price of language intersection; the gain is
that the bound is independent of the lengths of the concatenations and that
only reachable state pairs need be generated.
A configuration is \emph{co-complete} if \(\rho_L=\rho_R=r_0\).  It is
\emph{accepting} if it is co-complete and \(\gamma_L=\gamma_R\).  Indeed, in
that case both one-sided automata can take terminal transitions with the same
final output label, so they accept one common received word.
\end{definition}

\paragraph{Relation to the earlier verifier.}
The exact verifier of~\cite{PriorWork2026} also uses a product automaton, but
starts from a different one-sided machine.  Each of its two parsers reads the
same received word and reconstructs a possible transmitted concatenation.  A
parser step may reconstruct zero, one, or two symbols, so the product state
adds a one-symbol alignment buffer, denoted \(\mathrm{lead}\).  It also adds a
flag \(\mathrm{div}\) recording whether the two reconstructed words have
already differed.  In schematic form, its states are
\[
 (\text{parser state}_L,\text{parser state}_R,
   \mathrm{lead},\mathrm{div}).
\]
Those additions let one global automaton examine all pairs of codewords and
recognize a genuine collision between two distinct words.  In the present
graph, the first blocks \(a\ne b\) are fixed, so distinctness is automatic, and
both sides settle one output symbol per step, so no alignment buffer is
needed.  The simpler product \(\mathcal A_P(a)\otimes\mathcal A_P(b)\) is
therefore sufficient and, unlike the earlier orientation, its path length
records extinction depth directly.

\begin{definition}[Witness paths and dead ends]
\label{def:witness-path}
For \(n\ge1\) and
\((u,v)\in\Pref_n(aP^*)\times\Pref_n(bP^*)\), a \emph{witness path for
\((u,v)\)} is a path \(\kappa_1,\ldots,\kappa_n\) in
\(\mathcal G_P(a,b)\) whose \(j\)-th edge is annotated by
\((u_{j+1},v_{j+1})\) for \(1\le j<n\).  The initial vertex supplies
\((u_1,v_1)=(a_1,b_1)\).  Hence the labels spell the two transmitted prefixes,
while the transition choices specify one legal swap pattern on each side.

After \(j\) vertices, the two hypothetical channel outputs agree in their
first \(j-1\) positions; only \(\gamma_L\) and \(\gamma_R\) remain unresolved.
A nonaccepting configuration with no outgoing edge is a \emph{dead end}: no
continuation can preserve output agreement.  At an accepting configuration,
settling the two final symbols produces equal complete outputs and therefore
an actual intersection of the two channel balls.  Merely reaching a vertex or
exhausting one choice of swaps is not acceptance.
\end{definition}

The two fixed-word examples above now have exact graph interpretations.  For
\(0000\) versus \(1111\), every attempted first edge would settle \(0\) on
the left and \(1\) on the right, so there is no outgoing edge and the attempt
dies.  For \(1010\) versus \(0101\), the left decisions select no swap sites,
the right decisions select swap sites \(1,3\), and every newly settled output
symbol agrees.  After four vertices both words are complete and the two final
unresolved symbols are \(0\), so the path is accepting.  This is how the graph
distinguishes disjoint balls from intersecting balls.

Only the first blocks \(a\) and \(b\) are fixed in \(\mathcal G_P(a,b)\).
Every later block may be any member of \(P\), so the graph depends on the whole
block set.  To verify correction we inspect all \(\binom{|P|}{2}\) unordered
first-block pairs.  This is complete: after cancelling all common initial
blocks of a hypothetical collision, the remaining words begin with distinct
blocks \(a,b\in P\).

\begin{lemma}[Paths decide liveness]
\label{lem:paths-live}
The pair \(\{u,v\}\) is live if and only if \((u,v)\) admits a witness path
with \(n\) vertices, equivalently \(n-1\) edges.
\end{lemma}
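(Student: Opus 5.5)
We prove both directions through a bijective correspondence between pairs of realizations and witness paths, via an explicit invariant on the one-sided automaton. Fix \(u\in\Pref_n(aP^*)\). A \emph{realization} is a swap set \(S\subseteq\{1,\ldots,n-1\}\) with no consecutive elements; it determines \(\prefx_{n-1}(\sigma_S u)\). Every element of \(\Tball(u)\) arises this way. A swap at site \(n\) of a longer word only affects positions \(n,n+1\), and Lemma~\ref{lem:window-import} shows the first \(n-1\) outputs depend only on \(u\) and on \(S\cap[n-1]\).

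The core one-sided step is an induction on \(j\in[1,n]\). The claim is that runs of \(\mathcal A_P(a)\) with \(j\) states whose transmitted annotations spell \(\prefx_j(u)\) correspond to swap sets \(S\subseteq[j-1]\). Under this correspondence, the emitted labels equal \(\prefx_{j-1}(\sigma_S\prefx_j(u))\). The final record \((\gamma,\beta)\) is also fixed: \(\gamma\) is the symbol occupying output position \(j\) of \(\sigma_S\prefx_j(u)\), and \(\beta=[\,j-1\in S\,]\). The base case \(j=1\) is the initial state \((\rho^{(1)},u_1,0)\) with \(S=\emptyset\).

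The inductive step is a direct check against the transition table. If site \(j\) is not selected, the output at position \(j\) settles to \(\gamma\) and the new end becomes \(u_{j+1}\). If site \(j\) is selected, which is allowed only when \(\beta=0\), the output at position \(j\) becomes \(u_{j+1}\) and \(\gamma\) moves to position \(j+1\). The flag enforces exactly the nonconsecutive constraint on \(S\). The block-position coordinate only certifies that \(u_{j+1}\) is a legal continuation, and some legal parse tracks \(u\) because \(u\in\Pref_n(aP^*)\). If both parses are available, either one may be used, since block positions do not affect labels.

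Given this invariant, the lemma follows by synchronizing. Suppose \(\{u,v\}\) is live. Choose swap sets \(S,T\) with a common element of \(\Tball(u)\cap\Tball(v)\), and take the corresponding one-sided runs. At each step the two runs emit the same symbol, namely the corresponding position of the common truncated output, so the pair of runs is a path in \(\mathcal G_P(a,b)\) with \(n\) vertices and the required annotations. Conversely, a witness path projects to two one-sided runs with equal label sequences of length \(n-1\). By the invariant these labels lie in \(\Tball(u)\) and in \(\Tball(v)\), so the pair is live. For \(n=1\) both sets equal \(\{\epsilon\}\) and the single initial vertex is the witness.

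The main obstacle is stating the invariant precisely enough that the held symbol and flag match the swap set, especially the case where the selected swap lands at the last site \(j-1\). Once that bookkeeping is fixed, everything else is routine.
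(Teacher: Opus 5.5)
Your proof is correct and takes the same route as the paper's. A witness path is read off as two nonconsecutive swap sets: the previous-swap flag enforces nonconsecutiveness, equal settled labels give the edge condition, and legal prefixes supply the block positions; the converse extracts these decisions from two agreeing channel actions. Your one-sided invariant on the held symbol and flag simply makes the paper's terse induction explicit.

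The only slip is calling the correspondence bijective. When \(P\) is not prefix-free, several block-position sequences can track the same \(u\). As you note yourself, the argument only needs one such run in each direction, so nothing breaks.
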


\begin{proof}
A witness path determines nonconsecutive swap sets, and induction over its
edges shows that the two outputs agree in their first \(n-1\) positions.
Conversely, extract the swap decisions from two channel actions that agree in
those positions.  Nonconsecutiveness gives the update rule for
\(\beta_L,\beta_R\), equality of the settled outputs gives the edge condition,
and the legal transmitted prefixes give the block-position transitions.
\end{proof}

The lemma explains both the purpose of the graph and the role of extinction.
The graph enumerates, in compressed form, every pair of transmitted prefixes
and channel actions that could still lead to a common output.  If no path with
\(K-1\) edges exists, every collision attempt has died by depth \(K\), so the
live set at depth \(K\) is empty.  A path with \(r\) edges contains \(r+1\)
configurations and represents a live pair of length \(r+1\).

Two small block sets illustrate the remaining distinction.  For
\(P=\{00,11\}\), the initial configuration for \((00,11)\) has no outgoing
edge, because every possible channel decision settles unequal symbols.  For
\(P=\{01,1000\}\), the witness in Example~\ref{ex:gapwitness} reaches a
two-edge cycle.  Repeating the cycle preserves partial agreement for
arbitrarily long prefixes.  A cycle therefore proves infinite liveness, but
it is not accepting unless a path also reaches a co-complete configuration
with equal unresolved symbols.

\begin{proposition}[Acyclicity decides extinction]
\label{prop:acyclicity-extinction}
Let \(\mathcal G_P^{\mathrm{reach}}(a,b)\) be the subgraph reachable from
\(\kappa_1\).  Then
\[
 \ko(a,b)<\infty
 \quad\Longleftrightarrow\quad
 \mathcal G_P^{\mathrm{reach}}(a,b)\text{ is acyclic}.
\]
If this reachable graph is acyclic and \(d\) is the largest number of edges in
a path starting at \(\kappa_1\), then
\[
 \ko(a,b)=d+2.
\]
\end{proposition}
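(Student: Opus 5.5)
The plan is to derive both claims from Lemma~\ref{lem:paths-live}, together with one finiteness observation.

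First, I translate liveness into path existence. By the definition of \(\Live_n(a,b)\) and Lemma~\ref{lem:paths-live}, \(\Live_n(a,b)\ne\emptyset\) holds exactly when \(\mathcal G_P(a,b)\) contains a path with \(n\) vertices, that is \(n-1\) edges, starting at \(\kappa_1\). Such a path automatically lies in \(\mathcal G_P^{\mathrm{reach}}(a,b)\). Conversely, every path from \(\kappa_1\) in the graph is a witness path for the pair of transmitted prefixes it annotates. Those prefixes are legal elements of \(\Pref_n(aP^*)\times\Pref_n(bP^*)\), because block-position transitions append only legal symbols. Hence
\[
 \Live_n(a,b)\ne\emptyset\iff\text{some path from }\kappa_1\text{ has }n-1\text{ edges}.
\]
Any path can be shortened by taking a prefix, so this condition is monotone in \(n\). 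This is consistent with Lemma~\ref{lem:persist}.

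Second, I prove the equivalence. Suppose the reachable subgraph contains a cycle. Reach it from \(\kappa_1\) and traverse it repeatedly; this gives paths from \(\kappa_1\) of every length. Then \(\Live_n(a,b)\ne\emptyset\) for all \(n\), so \(\ko(a,b)=\infty\). Conversely, suppose the reachable subgraph is acyclic. It is finite, with at most \(4q^2N_P^2\) vertices. In a finite acyclic graph every path is simple, so path lengths are bounded by \(d\le|V|-1\), and \(\ko(a,b)<\infty\).

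Third, I obtain the exact value. In the acyclic case, paths from \(\kappa_1\) with \(n-1\) edges exist exactly for \(n-1\le d\), using prefixes of a longest path for smaller values. So \(\Live_n(a,b)\) is nonempty for \(1\le n\le d+1\) and empty for \(n=d+2\). This gives \(\ko(a,b)=d+2\).

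The only delicate point is the bookkeeping at the start and at block completion. I need to check that \(\kappa_1\) corresponds to \(n=1\) and that \(\Live_1\) is nonempty, since both truncated balls equal \(\{\epsilon\}\). I also need to check that every edge from a reachable configuration really extends legal prefixes in \(aP^*\) and \(bP^*\), including the branching when a block completes and simultaneously prefixes another block. This is exactly the content of Definition~\ref{def:one-sided-output} and Lemma~\ref{lem:paths-live}, so I expect it to be a routine verification rather than a real obstacle.
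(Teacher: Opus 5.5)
Your proposal is correct and follows essentially the same route as the paper. You translate liveness at depth \(n\) into the existence of an \((n-1)\)-edge path from \(\kappa_1\) via Lemma~\ref{lem:paths-live}, pump a reachable cycle to get live pairs at unbounded depth, and in the acyclic case read the first empty level \(d+2\) off a longest path. The only difference is cosmetic: you get monotonicity in \(n\) by truncating paths, while the paper invokes persistence (Lemma~\ref{lem:persist}), and either works here.
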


\begin{proof}
Suppose first that a directed cycle is reachable.  Follow a path from
\(\kappa_1\) to the cycle and then traverse the cycle arbitrarily many times.
Lemma~\ref{lem:paths-live} gives live pairs at unbounded depths.  If some level
were empty, persistence would force every later level to be empty as well, a
contradiction.  Hence \(\ko(a,b)=\infty\).

Conversely, if the reachable graph is acyclic, every path from \(\kappa_1\)
has at most \(d\) edges.  A longest path gives a live pair at depth \(d+1\),
whereas no path, and therefore no live pair, exists at depth \(d+2\).  Thus the
first empty level is exactly \(d+2\).
\end{proof}

\paragraph{Verification algorithm.}
To verify the whole block set, process every unordered pair
\(\{a,b\}\subseteq P\) with \(a\ne b\); there are at most
\(\binom{|P|}{2}\) such pairs.  For each pair, start at \(\kappa_1\) in
\(\mathcal G_P(a,b)\), generate legal outgoing edges on demand, and perform a
depth-first search.  Mark each reachable configuration white before it is
visited, gray while it is on the active recursion stack, and black after all
of its outgoing edges have been examined.  An edge whose head is gray is a
back edge and exhibits a reachable directed cycle.  Proposition~\ref{prop:acyclicity-extinction}
then gives \(\ko(a,b)=\infty\), so the finite-extinction criterion does not
certify \(P\).

If the search finishes without an edge to a gray configuration, the reachable
graph is acyclic.  Its DFS finishing order places every successor before its
predecessor, so a single dynamic-programming pass computes
\[
 h(\kappa)=
 \begin{cases}
  0,&\kappa\text{ has no outgoing edge},\\
  1+\displaystyle\max_{\kappa\to\kappa'}h(\kappa'),&\text{otherwise}.
 \end{cases}
\]
Thus \(h(\kappa_1)\) is the maximum number of edges in a path from the initial
configuration and
\[
 \ko(a,b)=h(\kappa_1)+2.
\]
If every first-block pair is acyclic, set
\[
 K_0(P)=\max_{\{a,b\}\subseteq P}\ko(a,b).
\]
Theorem~\ref{thm:depthK} then certifies that every \(C_n(P)\) is correcting.
A cycle for one pair has a deliberately weaker meaning: it proves that this
finite-extinction certificate fails, but it does not by itself prove a finite
codeword collision or invalidate the code.

\begin{proposition}[Complexity of extinction verification]
\label{prop:verification-complexity}
Put \(M=|P|\).  A one-sided automaton
\(\mathcal A_P(c)\) has \(O(qN_P)\) states and \(O(q^2N_P)\) candidate
transitions.  For one first-block pair, its reachable synchronized product can
be generated, tested for a directed cycle, and, when acyclic, used to compute
the exact extinction depth in
\[
 O(q^4N_P^2)\quad\text{time}
 \qquad\text{and}\qquad
 O(q^2N_P^2)\quad\text{working memory},
\]
when successors are generated on demand.  Processing the unordered distinct
first-block pairs sequentially therefore takes
\[
 O(M^2q^4N_P^2)\quad\text{time}
 \qquad\text{and}\qquad
 O(q^2N_P^2)\quad\text{working memory}.
\]
In particular, for fixed \(q\), extinction verification is polynomial in the
explicit size of the block set and is independent of the value of the depth
being certified.
\end{proposition}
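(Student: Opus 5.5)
The proof is a direct accounting of the on-demand construction, built in three layers: the one-sided automata, one synchronized product, and then the loop over first-block pairs.

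\textbf{One-sided automata.} By Definition~\ref{def:one-sided-output}, a state is a triple \((\rho,\gamma,\beta)\). There are at most \(N_P\) block positions, \(q\) held symbols and \(2\) flags, so \(|Q_P(c)|\le 2qN_P\). From any state, a transition chooses the appended symbol \(x\) (\(q\) options), one of at most two block-position successors, and one swap decision. This gives at most \(4q\) transitions per state, hence \(O(q^2N_P)\) candidate transitions. To make each legal successor computable in \(O(1)\), I would precompute a trie of all proper block prefixes, which has \(O(L_P)\) nodes. Each node stores its children and a flag marking completed blocks. Building the trie costs \(O(qL_P)\) time and memory, which is within the claimed bounds.

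\textbf{One synchronized product.} Fix \(a\ne b\). By Definition~\ref{def:configuration}, the product has at most \(|Q_P(a)||Q_P(b)|\le 4q^2N_P^2\) vertices. The DFS described in the verification algorithm generates successors on demand: for a configuration it enumerates the at most \((4q)^2\) pairs of one-sided transitions and keeps those with equal settled output labels. Reachable edges therefore number at most \(16q^2\cdot 4q^2N_P^2=O(q^4N_P^2)\). The white/gray/black DFS and the dynamic-programming pass computing \(h\) both run in time linear in vertices plus edges, so one pair costs \(O(q^4N_P^2)\) time. Correctness of the output (a cycle means \(\ko=\infty\), otherwise \(\ko=h(\kappa_1)+2\)) is Proposition~\ref{prop:acyclicity-extinction}.

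For memory, store only per-vertex data: a colour, the value \(h\), and an iterator position over the \(16q^2\) candidate pairs. Index these by the state pair, for example in an array of size \(4q^2N_P^2\), and never store edges. This gives \(O(q^2N_P^2)\) working memory. The DFS recursion depth is at most the number of vertices, so an explicit stack of iterator indices stays within the same bound.

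\textbf{All pairs.} Run the per-pair procedure sequentially over the \(\binom{M}{2}\) unordered pairs, reusing the same arrays, which can be reset in time proportional to the visited vertices. This gives \(O(M^2q^4N_P^2)\) time and \(O(q^2N_P^2)\) memory. Depth-independence follows because nothing depends on the value of \(\ko\): the work is bounded by the size of the graph, and acyclicity caps path lengths by the vertex count.

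\textbf{Main obstacle.} The key step is achieving \(O(q^2N_P^2)\) memory even though the reachable graph may have \(\Theta(q^4N_P^2)\) edges. This requires regenerating successors when computing \(h\) rather than storing adjacency. I would fold the computation of \(h\) into the DFS post-order, taking the maximum over the \(h\)-values of children as each child finishes. Then every edge is examined exactly once, with no stored adjacency lists, and both the time and memory bounds hold simultaneously.
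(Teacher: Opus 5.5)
Your proposal is correct and follows essentially the same route as the paper. It uses the same counts: $2qN_P$ one-sided states, $4q$ candidate transitions per one-sided state, $16q^2$ candidate transition pairs per configuration, and $4q^2N_P^2$ configurations. It then runs an on-demand white/gray/black DFS, computes the longest path by dynamic programming via Proposition~\ref{prop:acyclicity-extinction}, regenerates successors instead of storing edges to get $O(q^2N_P^2)$ memory, and processes the $\binom{M}{2}$ pairs sequentially. Your trie for constant-time successor generation, the explicit iterator stack, and folding $h$ into the DFS post-order are implementation details the paper leaves implicit, and they fit within its stated bounds.
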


\begin{proof}
Definition~\ref{def:configuration} gives at most \(4q^2N_P^2\)
configurations per first-block pair.  Equivalently, this is the product of two
one-sided state bounds of \(2qN_P\).  From one configuration there are \(q^2\)
choices of the next transmitted-symbol pair.  Each block-position transition
set has size at most two, giving at most four pairs of automaton successors,
and there are at most four pairs of swap decisions.  Hence there are at most \(16q^2\) candidate
outgoing transitions per configuration and at most \(64q^4N_P^2\) candidates
in the whole graph.

A depth-first traversal generates the reachable graph and detects a directed
cycle within these bounds.  Proposition~\ref{prop:acyclicity-extinction}
shows that this decides whether the extinction depth is finite.  In the
acyclic case, dynamic
programming along a reverse topological order computes a longest path and
hence the exact first empty level.  Storing colors and longest-path values for
the reachable configurations uses \(O(q^2N_P^2)\) memory when outgoing
transitions are regenerated rather than stored.  Finally, correction is
symmetric in the two first blocks, so there are at most
\(\binom{M}{2}\) first-block pairs.
\end{proof}

The graph introduced in Section~\ref{sec:config-graph} does more than verify a
finite cutoff.  It bounds every finite extinction depth, converts acyclicity
into a ranking certificate, and separates an infinite live cycle from a finite
collision.  For the exact-collision statements below we assume that \(P\) is
prefix-free, as in the exact finite-state formulation of~\cite{PriorWork2026}.
Recall from Definition~\ref{def:configuration} that a configuration has the
form
\[
 \kappa=(\sigma_L,\sigma_R),
\]
where each one-sided state is \(\sigma=(\rho,\gamma,\beta)\).  As before,
subscripts \(L,R\) indicate the coordinates of the left and right states.

For \(w\in\Sigma_q^n\), let
\[
 \comp(w)=(n_0(w),\ldots,n_{q-1}(w)),
\]
where \(n_i(w)\) counts the occurrences of \(i\), and let \(e_i\) be the
unit vector of \(\mathbb Z^q\) in the coordinate indexed by
\(i\in\Sigma_q\).
\begin{theorem}[Frontier identity and exact acceptance]
\label{thm:frontier-accept}
Let a witness path for \((u,v)\in\Sigma_q^n\times\Sigma_q^n\) end with unresolved
symbols \(\gamma_L,\gamma_R\).  Then
\[
 \comp(u)-\comp(v)=e_{\gamma_L}-e_{\gamma_R}.
\]
Moreover,
\[
 \B(u)\cap\B(v)\ne\emptyset
\]
if and only if \((u,v)\) admits a witness path with \(n\) vertices and
\(\gamma_L=\gamma_R\).
\end{theorem}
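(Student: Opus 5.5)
The plan has two parts: a conservation invariant along witness paths, and a correspondence between accepting configurations and complete common outputs.

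\textbf{Frontier identity.} I will prove, by induction on the number of vertices \(j\) of a witness path, that after the prefixes \(u_{[1,j]}\) and \(v_{[1,j]}\) have been transmitted we have
\[
 \comp(u_{[1,j]})=\comp(o_{j-1})+e_{\gamma_L},
 \qquad
 \comp(v_{[1,j]})=\comp(o_{j-1})+e_{\gamma_R}.
\]
Here \(o_{j-1}\) is the common settled output of length \(j-1\) spelled by the edge labels.

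The base case \(j=1\) is immediate, since \(o_0=\epsilon\) and \(\gamma_L=a_1\), \(\gamma_R=b_1\).

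For the inductive step, I inspect the two rows of the transition table in Definition~\ref{def:one-sided-output}. Appending \(x\) without a swap settles \(\gamma\) and holds \(x\). Appending \(x\) with a swap settles \(x\) and keeps \(\gamma\). In both cases the multiset \(\{\text{settled symbol},\text{new held symbol}\}\) equals \(\{\gamma,x\}\). So each side's identity is preserved, with the same appended settled symbol on both sides because the product edges are synchronized. Subtracting the two identities at \(j=n\) yields \(\comp(u)-\comp(v)=e_{\gamma_L}-e_{\gamma_R}\).

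\textbf{Exact acceptance, direction \((\Leftarrow)\).} Suppose a witness path with \(n\) vertices ends with \(\gamma_L=\gamma_R\). The two swap patterns recorded along the path are legal swap sets on \([1,n-1]\), and no swap site \(n\) exists. Hence each final held symbol is exactly the \(n\)-th output symbol of \(\sigma_S u\), respectively of \(\sigma_{S'} v\). The first \(n-1\) outputs agree by Lemma~\ref{lem:paths-live}, and the last ones agree by hypothesis, so the balls intersect.

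\textbf{Exact acceptance, direction \((\Rightarrow)\).} Take a common output \(w=\sigma_S u=\sigma_{S'}v\). Then \(\{u,v\}\) is live, and the constructive half of the proof of Lemma~\ref{lem:paths-live} extracts a witness path from these specific swap sets. With \(S,S'\subseteq[1,n-1]\), the held symbols after \(n\) vertices equal \(w_n\) on both sides, so \(\gamma_L=\gamma_R\).

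Alternatively, the frontier identity gives this step directly. Since \(\B(u)\cap\B(v)\neq\emptyset\) forces \(\comp(u)=\comp(v)\), we get \(e_{\gamma_L}=e_{\gamma_R}\), hence \(\gamma_L=\gamma_R\), for every witness path of full length \(n\). The second route is cleaner, though it still needs some witness path to exist, which liveness supplies. I will present the first route and remark on the second.

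\textbf{Main obstacle.} The point needing care is the bookkeeping that the held symbol is genuinely the final output symbol once the transmitted word ends. The path never selects a swap at site \(n\), since that site is only created when a further symbol is appended. So terminating at length \(n\) is consistent with \(\wt\le1\) on exactly \([n]\). I will state this explicitly, which also shows why acceptance requires co-completeness when \(u,v\) are full codewords.
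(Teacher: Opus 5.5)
Your proposal is correct and follows the paper's proof: induction along the witness path for the frontier identity, then reading each final held symbol as the \(n\)-th output symbol for both directions of the acceptance equivalence. The only difference is a tidier organization of the induction. Your one-sided invariant \(\comp(u_{[1,j]})=\comp(o_{j-1})+e_{\gamma_L}\) rests on the multiset identity \(\{\text{settled},\text{new held}\}=\{\gamma,x\}\), and it replaces the paper's case split according to whether neither, one, or both sides swap; your composition-based alternative for the forward direction is also valid.
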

\begin{proof}
For the first assertion, induct on the path length and inspect the two-row
unresolved-end update in Definition~\ref{def:one-sided-output} on both sides.  If neither
side swaps, equality of the newly settled symbols forces the old unresolved
symbols to agree, and the new composition difference is the difference of the
two appended symbols.  If exactly one side swaps, its appended symbol equals
the unresolved symbol settled on the other side, so the identity is preserved.
If both sides swap, the appended symbols agree and the unresolved symbols are
unchanged.

A witness path already realizes agreement in the first \(n-1\) output positions.
Equality of the final unresolved symbols extends this agreement to position \(n\),
producing a common ball element.  Conversely, the swap decisions in any common
output give a witness path whose final unresolved symbols are equal.
\end{proof}

We can now separate the two questions that the local criterion deliberately
conflates: whether danger survives indefinitely, and whether it ever closes at
a pair of complete codewords.

Recall from Definition~\ref{def:configuration} that a configuration is
co-complete when \(\rho_L=\rho_R=r_0\), and accepting when it is co-complete and its
two unresolved symbols are equal.  Thus the labels on both sides are complete
concatenations and their balls intersect by Theorem~\ref{thm:frontier-accept}.
Here and below, the reachable configuration subgraph is the subgraph induced
by all configurations reachable from the initial state \(\kappa_1\).

\begin{theorem}[Exact finite-state classification]
\label{thm:state-classification}
Let \(P\) be prefix-free.  For each pair of distinct first blocks, exactly one of the
following outcomes holds:
\begin{enumerate}[label=\textup{(\alph*)}]
\item the reachable graph is acyclic; equivalently, the extinction depth is
finite;
\item an accepting configuration is reachable; equivalently, a finite
collision between two distinct complete concatenations beginning with the two
blocks exists;
\item a cycle is reachable but no accepting configuration is reachable; the
first-block pair is live at every depth but has no finite collision.
\end{enumerate}
Consequently, the extinction criterion is exactly acyclicity, whereas exact
correction verification is exactly non-reachability of accepting
configurations.
\end{theorem}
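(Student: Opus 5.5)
The plan is to assemble the trichotomy from three facts already available: Proposition~\ref{prop:acyclicity-extinction} (acyclicity is equivalent to finite depth), Theorem~\ref{thm:frontier-accept} (exact acceptance), and the common-extension and persistence lemmas (Lemmas~\ref{lem:extend-collision} and~\ref{lem:persist}). I will first establish the two stated equivalences, then show that (a) and (b) cannot both hold and that (b) forces a reachable cycle. The case split then follows at once: either the reachable graph is acyclic, giving (a), or it has a cycle, in which case we are in (b) or (c) according to whether an accepting configuration is reachable. Outcomes (b) and (c) are disjoint by definition.

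The equivalence in (a) is exactly Proposition~\ref{prop:acyclicity-extinction}. For (b), take a reachable accepting configuration and read off the annotated transmitted symbols along a path to it. This gives prefixes \(u\in\Pref_n(aP^*)\) and \(v\in\Pref_n(bP^*)\) with a witness path. Co-completeness \(\rho_L=\rho_R=r_0\) means each label ends at a block boundary, so \(u\in aP^*\) and \(v\in bP^*\) are complete concatenations. Since \(\gamma_L=\gamma_R\), Theorem~\ref{thm:frontier-accept} gives \(\B(u)\cap\B(v)\ne\emptyset\).

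Distinctness is where prefix-freeness enters, and I expect this to be the only delicate point. A prefix-free set is uniquely decodable, so \(u\) has a unique factorization beginning with \(a\) and \(v\) one beginning with \(b\ne a\). Hence \(u\ne v\). Without this hypothesis the two distinct first blocks could still spell the same word. Conversely, suppose \(x\in aP^*\) and \(y\in bP^*\) are distinct, collide, and have equal length. Theorem~\ref{thm:frontier-accept} yields a witness path ending with \(\gamma_L=\gamma_R\). I must also check that the endpoint is co-complete: the block-position coordinates along the path track the chosen factorizations, which end at boundaries, so the final positions are \(r_0\) on both sides.

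For exclusivity, suppose (b) holds with collision \((x,y)\). By Lemma~\ref{lem:extend-collision}, \(xp^m\) and \(yp^m\) collide for every \(p\in P\) and \(m\ge1\). By Lemma~\ref{lem:persist}, every positive-length prefix pair is then live, so \(\Live_n(a,b)\ne\emptyset\) for all \(n\) and \(\ko(a,b)=\infty\). Proposition~\ref{prop:acyclicity-extinction} then produces a reachable cycle. This shows (b) excludes (a) and also places (b) inside the cyclic case, which gives the trichotomy.

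For the closing sentence, note that the extinction criterion of Theorem~\ref{thm:depthK} asks for finite depth at every pair, which is acyclicity by (a). For exact correction, the cancellation argument of Lemma~\ref{lem:cancel-import}, as used in the proof of Theorem~\ref{thm:depthK}, reduces any collision in \(C_n(P)\) to one with distinct first blocks. By (b), that is a reachable accepting configuration in some \(\mathcal G_P(a,b)\). Conversely, such a configuration yields a collision after prepending nothing.
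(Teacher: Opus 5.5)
Your proposal is correct and follows essentially the same route as the paper. It uses Proposition~\ref{prop:acyclicity-extinction} for (a) and Theorem~\ref{thm:frontier-accept} with prefix-freeness for (b), and it extends a finite collision by a common block to force a reachable cycle, so (b) lies inside the cyclic case. The differences are only presentational: you reach the cycle through persistence and the acyclicity proposition rather than by pigeonholing configurations along arbitrarily long colliding extensions, and you check explicitly that the endpoint is co-complete, which the paper leaves implicit.
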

\begin{proof}
Proposition~\ref{prop:acyclicity-extinction} identifies the acyclic case with
finite extinction and the cyclic case with unbounded liveness.  At a
co-complete state,
Theorem~\ref{thm:frontier-accept} identifies acceptance with intersecting
balls of two complete concatenations.  Prefix-freeness makes the two labels
distinct because their first blocks differ.  Acceptance cannot occur in the
acyclic case: after a finite collision, append the same block repeatedly;
Lemma~\ref{lem:extend-collision} gives arbitrarily long colliding extensions,
so some configuration repeats.  The remaining cyclic case is therefore
precisely case~\textup{(c)}.
\end{proof}

The same finite-state graph therefore describes liveness, finite extinction,
exact collision, and the infinite gap.  For a non-prefix-free block set, the
state must additionally record whether the two parsings have already diverged,
so that two different parsings of the same word are not mistaken for two
codewords, as in the exact formulation of~\cite{PriorWork2026}.  In particular,
Theorem~\ref{thm:state-classification} explains why exact-collision
reachability is not an additional hypothesis of the extinction criterion.

\subsection{Pumping and ranking certificates}

The first consequence of finite memory is a universal depth bound: a witness path
that exceeds the state space must repeat a configuration and can therefore be
continued forever.

\begin{theorem}[Pumping]
\label{thm:pump}
For every first-block pair,
\[
 \ko(a,b)\le 4q^2(L_P+\lmax)^2+1
 \qquad\text{or}\qquad
 \ko(a,b)=\infty.
\]
Consequently, whether a first-block pair has finite extinction depth is decidable by
reachability in its finite configuration graph.
\end{theorem}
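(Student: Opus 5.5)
The argument runs through Proposition~\ref{prop:acyclicity-extinction} together with the vertex bound of Definition~\ref{def:configuration}.

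Fix distinct first blocks \(a,b\in P\) and suppose \(\ko(a,b)<\infty\). By Proposition~\ref{prop:acyclicity-extinction}, the reachable subgraph \(\mathcal G_P^{\mathrm{reach}}(a,b)\) is then acyclic, and
\[
 \ko(a,b)=d+2,
\]
where \(d\) is the largest number of edges in a path starting at \(\kappa_1\).

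The key step is the pigeonhole observation that in an acyclic directed graph every path visits pairwise distinct vertices. Hence \(d+1\le |V(\mathcal G_P(a,b))|\). Definition~\ref{def:configuration} bounds the vertex set by the product of the one-sided bounds \(|Q_P(a)|\,|Q_P(b)|\le (2qN_P)^2=4q^2N_P^2\), with \(N_P=L_P+\lmax\). Combining the two inequalities gives
\[
 \ko(a,b)=d+2\le 4q^2(L_P+\lmax)^2+1.
\]
If instead a cycle is reachable, the same proposition gives \(\ko(a,b)=\infty\). These two cases yield the stated dichotomy.

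For decidability I would give two equivalent routes.
\begin{itemize}
\item The first is to test \(\Live_n(a,b)=\emptyset\) at the single level \(n=4q^2N_P^2+2\). By Lemma~\ref{lem:paths-live} this amounts to asking whether the finite graph contains a path of \(n-1\) edges from \(\kappa_1\). By the bound just proved, such a path exists precisely when the depth is infinite.
\item The second is to run the depth-first cycle search from the verification algorithm. This search terminates because the reachable graph is finite.
\end{itemize}

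The only point needing care is bookkeeping. I must check that the one-sided state count really is at most \(2qN_P\). That requires the block positions, namely the \(|a|-1\) forced positions \((a,j)\), the state \(r_0\), and the proper prefixes, to number at most \(N_P\). I must also keep the off-by-one conversion between vertices, edges and depth consistent with \(\ko=d+2\). No other obstacle is expected.
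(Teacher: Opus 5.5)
Your proposal is correct and is essentially the paper's argument. The paper pumps a configuration that repeats along an overlong witness path directly, whereas you invoke Proposition~\ref{prop:acyclicity-extinction}, whose cycle-implies-infinite half is exactly that pumping step, and then bound a longest path in the acyclic reachable graph by the vertex count $4q^2N_P^2$ to get $\ko(a,b)=d+2\le 4q^2N_P^2+1$. Your vertex/edge/depth bookkeeping is consistent, and both of your decidability routes are valid.
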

\begin{proof}
A witness path longer than the number of configurations repeats a configuration.  The
intervening legal label segment can then be repeated, producing live pairs at
arbitrarily large depths; persistence fills the omitted depths.  If no
configuration repeats, the path length is bounded by the number of configurations.
\end{proof}

The same path-and-cycle argument will prove the infinite-stream implication in Theorem~\ref{thm:gap}: arbitrarily long witness paths yield a reachable cycle, and repeating a cycle gives ultimately periodic labels and two infinite matchings with a common output.  For a finite-depth proof, acyclicity also has a direct mathematical witness: a rank that decreases at every step.

\begin{proposition}[Ranking certificates]
\label{prop:rank}
For a first-block pair, the following are equivalent:
\begin{enumerate}[label=\textup{(\roman*)}]
\item \(\ko(a,b)<\infty\);
\item the reachable configuration subgraph is acyclic;
\item there is \(\varphi:\{\text{reachable configurations}\}\to\mathbb N\) strictly decreasing along every transition.
\end{enumerate}
A sound certificate consists of a finite set of configurations that contains the
initial state and is closed under every legal outgoing transition, together
with ranks that decrease on all those transitions.  Once the legal successors
are generated, it suffices to check closure and the strict rank inequality once
for every legal transition.
\end{proposition}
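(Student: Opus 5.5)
The equivalence (i)\(\Leftrightarrow\)(ii) is exactly Proposition~\ref{prop:acyclicity-extinction}, so I will cite it and spend the argument on (ii)\(\Leftrightarrow\)(iii) and on the soundness of the certificate.

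For (ii)\(\Rightarrow\)(iii), I take \(\varphi=h\), the longest-path height already used in the verification algorithm. The reachable subgraph is finite, with at most \(4q^2N_P^2\) configurations. When it is acyclic, every path has bounded length, so \(h(\kappa)\) (the maximum number of edges in a path from \(\kappa\)) is a well-defined natural number. For any edge \(\kappa\to\kappa'\), the recursion gives \(h(\kappa)\ge 1+h(\kappa')>h(\kappa')\).

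For (iii)\(\Rightarrow\)(ii), suppose \(\kappa^{(0)}\to\cdots\to\kappa^{(m)}=\kappa^{(0)}\) is a reachable cycle with \(m\ge1\). Every vertex on it is reachable, so \(\varphi\) is defined there. Strict decrease along each edge would then give \(\varphi(\kappa^{(0)})>\varphi(\kappa^{(0)})\), which is impossible.

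The certificate statement needs a little more care, because the certificate set \(S\) need not equal the reachable set. I plan two steps.

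\textbf{Step 1: \(S\) contains the reachable set.} Since \(S\ni\kappa_1\) and \(S\) is closed under every legal outgoing transition, induction on path length shows \(S\) contains every configuration reachable from \(\kappa_1\).

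\textbf{Step 2: the ranks carry over.} Ranks on \(S\) that decrease along all transitions inside \(S\) restrict to a function satisfying (iii) on the reachable subgraph. By the previous paragraph this subgraph is acyclic, and Proposition~\ref{prop:acyclicity-extinction} then gives \(\ko(a,b)<\infty\). This argument also yields the explicit bound \(\ko(a,b)\le\varphi(\kappa_1)+2\). Along any path from \(\kappa_1\), \(\varphi\) drops by at least one per edge and stays nonnegative, so the path has at most \(\varphi(\kappa_1)\) edges. Hence \(d\le\varphi(\kappa_1)\).

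Checking the certificate is local. Closure and the rank inequality are both properties of single edges, so once the legal successors of each \(\kappa\in S\) are generated (at most \(16q^2\) of them, by Definition~\ref{def:configuration}), verifying each edge once suffices. No global search over paths is needed.

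The only delicate point I foresee is the closure requirement. It must range over all legal transitions, meaning all successors permitted by Definition~\ref{def:configuration}, not only those a search happened to explore. Otherwise Step 1 fails, since a reachable configuration could lie outside \(S\).
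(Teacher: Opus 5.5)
Your proof is correct and follows the paper's argument: the longest-path rank gives (ii)$\Rightarrow$(iii), strict descent around a reachable cycle is impossible for the converse, and Proposition~\ref{prop:acyclicity-extinction} gives (i)$\Leftrightarrow$(ii), which the paper re-derives inline. Your closure-then-restriction argument usefully spells out the certificate soundness that the paper leaves implicit. One small fix: the paper sets $\mathbb N=\{1,2,\ldots\}$ and $h$ vanishes at dead ends, so take $\varphi=h+1$, which is the paper's ``one plus the longest outgoing path''; your bound $\ko(a,b)\le\varphi(\kappa_1)+2$ remains valid.
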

\begin{proof}
(ii)\(\Leftrightarrow\)(iii) is the standard characterization of acyclicity in
a finite digraph: on a directed acyclic graph (DAG), rank each configuration
by one plus the length of its longest outgoing path; conversely a cycle admits no strictly decreasing
\(\varphi\) (infinite descent).
(ii)\(\Rightarrow\)(i): in an acyclic subgraph a path visits pairwise distinct configurations, so path lengths, and by Lemma~\ref{lem:paths-live} the depths of live pairs, are bounded by the number of reachable configurations.
(i)\(\Rightarrow\)(ii): a reachable cycle pumps witness paths to every sufficiently large length (proof of Theorem~\ref{thm:pump}), so \(\ko=\infty\).
\end{proof}

The general pumping bound counts block positions as well as unresolved-end records.
When all blocks have one length, synchronized block boundaries remove the
block-position coordinates and give a sharper linear cutoff.  This is the promised
upper counterpart to Proposition~\ref{prop:linear-depth}.

\begin{theorem}[Block-boundary pumping for one block length]
\label{thm:single-length-pumping}
Let \(P\subseteq\Sigma_q^\ell\), where \(\ell\ge1\), and let
\(a\ne b\in P\).  Then
\[
 \ko(a,b)\le(4q^2+1)\ell
 \qquad\text{or}\qquad
 \ko(a,b)=\infty.
\]
In particular, a finite extinction depth is at most \(17\ell\) when \(q=2\).
If every distinct first-block pair has finite extinction depth, then \(P\) satisfies
the depth-\((4q^2+1)\ell\) criterion.
\end{theorem}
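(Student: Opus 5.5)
The plan is to run the pumping argument of Theorem~\ref{thm:pump} only at block boundaries.  When every block has length \(\ell\), both transmitted prefixes in \(aP^*\) and \(bP^*\) are aligned: after \(m\ell\) symbols both sides sit exactly at a block boundary.  So along any witness path, the configurations \(\kappa_{m\ell}\) (the vertex after \(m\ell\) transmitted symbols, \(m\ge1\)) have \(\rho_L=\rho_R=r_0\).  These \emph{boundary configurations} therefore carry only the unresolved-end data \((\gamma_L,\beta_L,\gamma_R,\beta_R)\).  There are at most \((2q)^2=4q^2\) of them.

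The key step is to show that the future of a boundary configuration depends only on this data.  This holds because the block-position coordinate \(r_0\) permits any block of \(P\) next on either side, independently of the first blocks \(a,b\).  Hence, if two boundary vertices \(\kappa_{m\ell}\) and \(\kappa_{m'\ell}\) with \(m<m'\) of one witness path coincide as configurations, the segment of the path between them is a closed walk in \(\mathcal G_P^{\mathrm{reach}}(a,b)\).  That is a reachable cycle, and Proposition~\ref{prop:acyclicity-extinction} then gives \(\ko(a,b)=\infty\).  Its transmitted labels are whole blocks on each side, so no legality check is needed beyond the definition of the edges.

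For the count, suppose \(\ko(a,b)<\infty\) and let \(n\) be the length of a live pair, i.e.\ a witness path with \(n\) vertices (Lemma~\ref{lem:paths-live}).  The boundary vertices on it are \(\kappa_\ell,\kappa_{2\ell},\ldots,\kappa_{\lfloor n/\ell\rfloor\ell}\), and they must be pairwise distinct, so \(\lfloor n/\ell\rfloor\le 4q^2\).  Thus \(n<(4q^2+1)\ell\), so every live depth is at most \((4q^2+1)\ell-1\).  By persistence (Lemma~\ref{lem:persist}) the level \((4q^2+1)\ell\) is empty, giving \(\ko(a,b)\le(4q^2+1)\ell\).  Substituting \(q=2\) yields \(17\ell\).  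The final sentence follows because each \(\Live_K(a,b)\) is empty at \(K=(4q^2+1)\ell\) for every distinct pair, which is condition~(W) of Theorem~\ref{thm:depthK}.

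The main obstacle is making sure the boundary configuration really is just the end data.  There are two points to check.  First, at \(m=1\) the left state is in position \(r_0\) after the forced block \(a\) completes; this needs \(|a|=\ell\), which holds.  Second, the \(r_0\) state on the \(a\)-side automaton must have the same successors as the one on the \(b\)-side automaton.  I expect this to follow directly from Definition~\ref{def:one-sided-output}, since after the first block is completed only \(P\) matters.  One could even sharpen the count by excluding boundary configurations with \(\beta=1\) on both sides, but that is not needed.
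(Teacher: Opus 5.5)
Your proposal is correct and is essentially the paper's proof: both arguments pigeonhole the block-boundary snapshots \(\kappa_\ell,\kappa_{2\ell},\ldots\), where both block positions equal \(r_0\), so a repeat among the \(4q^2\) possible pairs of end records gives a closed walk that can be pumped. The paper argues by contrapositive from a live pair at depth \((4q^2+1)\ell\), whereas you route the same observation through acyclicity and Proposition~\ref{prop:acyclicity-extinction}, which is only a cosmetic difference. Your second ``obstacle'' is unnecessary: a repeated configuration compares left states with left states and right states with right states in the same product graph, so you never need to compare the \(r_0\) states of \(\mathcal A_P(a)\) and \(\mathcal A_P(b)\).
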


\begin{proof}
Put \(K=(4q^2+1)\ell\), and suppose that
\(\Live_K(a,b)\ne\emptyset\).  Choose a live pair
\[
 u\in\Pref_K(aP^*),\qquad v\in\Pref_K(bP^*),
\]
and, by Lemma~\ref{lem:paths-live}, choose a witness path
\(\kappa_1,\ldots,\kappa_K\) on \((u,v)\).

After each complete block, the block position is \(r_0\).  Since every block
has length \(\ell\), the two block-position coordinates at the snapshots
\[
 \kappa_{\ell},\kappa_{2\ell},\ldots,
 \kappa_{(4q^2+1)\ell}
\]
are the same two block-boundary positions.  Only the two unresolved-end records
remain.  Each side has \(2q\) possible records,
\[
 (\gamma,\beta),
 \qquad \gamma\in\Sigma_q,\quad \beta\in\{0,1\},
\]
so there are at most \(4q^2\) pairs of records.

Among the \(4q^2+1\) snapshots, two configurations coincide.  The path segment
between them appends an integral number of blocks on both sides.  Because its
initial and final configurations agree, the same labels and swap choices may
be repeated indefinitely.  The block-boundary positions preserve legal block
continuation, and the previous-swap flags preserve the nonadjacency condition across every new
seam.  Hence witness paths beginning with \((a,b)\) exist at arbitrarily large depths.
Persistence then gives \(\Live_n(a,b)\ne\emptyset\) for every \(n\), so
\(\ko(a,b)=\infty\).  Thus liveness at depth \(K\) forces infinite extinction
depth; the contrapositive gives the finite bound, and
Theorem~\ref{thm:depthK} gives the final assertion.
\end{proof}

\subsection{The infinite gap}
\label{sec:gap}

A danger can survive at every finite depth without ever closing into a finite
collision.  Compactness of the automaton turns precisely this phenomenon into
a pair of right-infinite streams.  We state the infinite object immediately
before the theorem that uses it.

\begin{definition}[Legal infinite pair]
\label{def:legal-infinite}
Right-infinite words \(U,V\) are \emph{legal for \((a,b)\)} if every finite
prefix of \(U\) belongs to \(\Pref(aP^*)\) and every finite prefix of \(V\)
belongs to \(\Pref(bP^*)\).  A right-infinite word is \emph{ultimately
periodic} if it consists of a finite head followed by repetitions of a fixed
nonempty finite period.  A matching on \(\mathbb N\) is a subset
\(S\subseteq\mathbb N\) containing no two consecutive integers.  For such a matching \(S\), \(\sigma_S\)
simultaneously swaps coordinates \(i\) and \(i+1\) for every \(i\in S\).
\end{definition}

\begin{theorem}[Infinite-collision characterization]
\label{thm:gap}
Let \(P\) be prefix-free, \(a\ne b\in P\).
The following are equivalent:
\begin{enumerate}[label=\textup{(\roman*)}]
\item \(\ko(a,b)=\infty\);
\item the configuration subgraph reachable from the initial configuration of \((a,b)\) contains a cycle \textup{(}Section~\ref{sec:config-graph}\textup{)};
\item there exist ultimately periodic legal \(U,V\) and matchings \(S,T\) on
\(\mathbb N\) with \(\sigma_S(U)=\sigma_T(V)\).
\end{enumerate}
Consequently, for a finite prefix-free correcting block set, the depth-\(K\)
extinction condition fails for every finite \(K\) exactly when the reachable
graph for some distinct first-block pair contains a cycle.  Equivalently, for some distinct first blocks \(a,b\), there are legal
right-infinite \(U,V\) and matchings \(S,T\) with
\(\sigma_S(U)=\sigma_T(V)\), while no accepting configuration is reachable in
any distinct-first-block graph.
\end{theorem}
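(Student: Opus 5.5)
The plan is to prove the cycle (i)\(\Rightarrow\)(ii)\(\Rightarrow\)(iii)\(\Rightarrow\)(i) and then read off the ``consequently'' clause from Theorem~\ref{thm:state-classification}.

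\textbf{(i)\(\Leftrightarrow\)(ii).} This equivalence is Proposition~\ref{prop:acyclicity-extinction}. Alternatively, it follows from Theorem~\ref{thm:pump}: if \(\ko(a,b)=\infty\), then Lemma~\ref{lem:paths-live} gives witness paths with more vertices than \(4q^2N_P^2\). Any such path repeats a configuration, and the repetition yields a reachable cycle.

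\textbf{(ii)\(\Rightarrow\)(iii).} Fix a path \(\kappa_1\to\cdots\to\kappa_m\) and a cycle \(\kappa_m\to\cdots\to\kappa_{m+p}=\kappa_m\). Traversing the path once and then the cycle forever gives an infinite walk, whose edge annotations define right-infinite transmitted words \(U,V\). These words are ultimately periodic, with head from the path and period from the cycle. Every finite prefix of this walk is a witness path, so Definition~\ref{def:witness-path} gives every finite prefix of \(U\) in \(\Pref(aP^*)\) and of \(V\) in \(\Pref(bP^*)\); hence \(U,V\) are legal for \((a,b)\).

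The swap decisions along the walk define sets \(S,T\subseteq\mathbb N\). These sets are matchings because the previous-swap flag \(\beta\) forbids selecting two consecutive sites on either side. The edge condition forces the settled output symbol to agree at every step. Settled position \(j\) depends only on the first \(j+1\) transmitted symbols and decisions, and it equals \((\sigma_S U)_j\), respectively \((\sigma_T V)_j\). Therefore \(\sigma_S(U)=\sigma_T(V)\) coordinatewise.

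\textbf{(iii)\(\Rightarrow\)(i).} For each \(n\), let \(u=\prefx_n(U)\) and \(v=\prefx_n(V)\), and restrict \(S\) and \(T\) to sites below \(n\). The first \(n-1\) output coordinates of \(\sigma_S(U)\) depend only on \(u\) and \(S\cap[n-1]\). More precisely, \(\prefx_{n-1}(\sigma_S U)=\prefx_{n-1}(\sigma_{S\cap[n-1]}u)\), since the restricted permutation acts on \(u\) with the final position possibly moved. This common word therefore lies in \(\Tball(u)\cap\Tball(v)\). Hence \(\{u,v\}\in\Live_n(a,b)\) for every \(n\), so \(\ko(a,b)=\infty\). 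Ultimate periodicity is not needed in this direction; it is only produced in the previous one.

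\textbf{Consequence.} For a correcting prefix-free \(P\), no accepting configuration is reachable in any distinct-first-block graph. Otherwise, by Theorem~\ref{thm:state-classification}(b), a finite collision between complete concatenations \(au\) and \(bv\) would give distinct colliding words of equal length in some \(C_n(P)\), contradicting correction. Thus each pair falls under case (a) or (c). The depth-\(K\) condition fails for every \(K\) exactly when some pair has \(\ko=\infty\), since otherwise \(K=K_0(P)\) works. By (i)\(\Leftrightarrow\)(ii)\(\Leftrightarrow\)(iii) this happens exactly when some pair has a reachable cycle, equivalently infinite streams as in (iii).

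The main obstacle is the bookkeeping in (ii)\(\Rightarrow\)(iii). One must check that the output coordinate settled at step \(j\) of the automaton coincides with coordinate \(j\) of \(\sigma_S(U)\) for the infinite matching, including at the steps where a swap pushes the held symbol rightward. I would verify this by induction using the two-row update table, as in the proof of Lemma~\ref{lem:paths-live}.
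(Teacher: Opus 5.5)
Your proposal is correct and follows essentially the same route as the paper: (iii)\(\Rightarrow\)(i) by restricting the matchings to the first \(n-1\) sites, (i)\(\Rightarrow\)(ii) by pumping (or directly by Proposition~\ref{prop:acyclicity-extinction}), (ii)\(\Rightarrow\)(iii) by following a path into a reachable cycle and repeating the cycle forever, and the consequence from Theorem~\ref{thm:state-classification}. One small point: the legality of \(U\) and \(V\) comes from the block-position coordinate of Definition~\ref{def:one-sided-output}, which admits only parser-legal symbols, and not from Definition~\ref{def:witness-path}, which already presupposes that the two prefixes are legal.
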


\begin{proof}
(iii)\(\Rightarrow\)(i) is a word-level implication.  For each \(n\), restrict
the two matchings to the first \(n-1\) swap sites.  Their outputs on the first
\(n-1\) coordinates are unchanged, so the length-\(n\) prefixes of \(U\) and
\(V\) form a live pair.

(i)\(\Rightarrow\)(ii) follows from Theorem~\ref{thm:pump}.  Conversely, a
reachable cycle supplies an ultimately periodic legal label pair and two
infinite matchings, proving (ii)\(\Rightarrow\)(iii).

For the consequence, Theorem~\ref{thm:state-classification} says that finite
extinction fails exactly when a first-block graph contains a cycle, whereas exact
correction fails exactly when an accepting state is reachable.  Thus the
certification gap consists of a cyclic distinct-first-block graph together with
non-reachability of accepting configurations in every distinct-first-block graph.
Prefix-freeness excludes \(U=V\) with distinct first blocks.
\end{proof}

The lasso construction in the preceding proof permits the two matchings to
be chosen ultimately periodic as well.  More precisely, after a finite
initial segment, the two input streams and the two indicator sequences of
the matchings repeat with one common period.  We call such a quadruple an
\emph{ultimately periodic common-output witness}.  Overlaying the two
matchings reveals the possible shapes of its periodic tail.

\begin{proposition}[Overlay trichotomy]
\label{prop:trichotomy}
Let \(P\) be prefix-free and correcting, and let \((U,V,S,T)\) be an
ultimately periodic common-output witness with distinct first blocks.  Regard
the swap sites in \(S\) and \(T\) as edges of the one-way path on
\(\mathbb N\).  The symmetric difference \(S\triangle T\) is a disjoint
union of maximal intervals whose edges alternate between the two matchings,
and \(U\) and \(V\) agree coordinatewise outside those intervals.  After a
finite initial segment, exactly one of the following occurs:
\begin{enumerate}[label=\textup{(\Alph*)}]
\item one alternating interval continues forever; on it the two matchings
are the two alternating brick walls and, if the interval begins at position
\(j\), \(x_t=U_{j+t}\), and \(y_t=V_{j+t}\), then
\[
 y_0=x_1,\qquad
 y_{2t}=x_{2t-2}\quad(t\ge1),\qquad
 y_{2t+1}=x_{2t+3}\quad(t\ge0);
\]
\item no alternating interval remains, so \(U=V\) eventually, but the two
block parsings never again share a block boundary;
\item a nonempty finite collection of alternating intervals repeats
periodically.
\end{enumerate}
In all three cases, the eventual period words on the \(U\)- and \(V\)-sides
have the same symbol multiplicities.
\end{proposition}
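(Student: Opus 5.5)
The plan is to work entirely with the overlay of the two matchings as edge sets on the path graph with vertex set \(\mathbb N\). First I will prove the structural claims. Since \(S\) and \(T\) are both matchings, every vertex meets at most one edge of each, so \(S\triangle T\) has maximum degree two and no cycles (it lives on a path). Each connected component is therefore a run of consecutive edges \(\{i,i+1\},\{i+1,i+2\},\ldots\). Consecutive edges share a vertex, so they cannot lie in the same matching; hence they alternate between \(S\setminus T\) and \(T\setminus S\). Edges in \(S\cap T\) are disjoint from these components. Outside the components, compare \(\sigma_S(U)=\sigma_T(V)\) position by position. At an untouched vertex, \(U_i=V_i\). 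At a common edge \(i\in S\cap T\), both sides swap, which gives \(U_i=V_i\) and \(U_{i+1}=V_{i+1}\). This yields coordinatewise agreement outside the alternating intervals.

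Second, I will use ultimate periodicity to obtain the trichotomy. After the head, the indicator sequences of \(S\) and \(T\) are periodic with a common period \(p\), so \(S\triangle T\) is also periodic there.
\begin{itemize}
\item[(A)] If some tail window of length \(p\) contains no boundary of an alternating interval, then one interval is infinite. Periodicity then forbids it from ending, and alternation makes \(S\) and \(T\) the two brick walls from \(j\) on.
\item[(B)] If \(S\triangle T\) is eventually empty, then \(U=V\) eventually by the first step.
\item[(C)] Otherwise, intervals that start in the tail recur every \(p\), so there is a nonempty finite family per period.
\end{itemize}
These cases are clearly exclusive. For (B), I must also show that the parsings never again share a block boundary. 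If they shared one at position \(m\) with \(U=V\) beyond it, I would truncate both words at \(m\). Restricting \(S\) and \(T\) to sites below \(m\), where the tails agree and no swap crosses \(m\) (I will pass to a boundary after the last discrepancy and check that no edge of \(S\) or \(T\) straddles it; otherwise I take the next common boundary), gives a finite collision between two distinct complete concatenations with distinct first blocks. Prefix-freeness guarantees the words are distinct. This contradicts the assumption that \(P\) is correcting.

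The brick-wall identities in (A) come from writing out the output equations directly. Say \(T\) contains \(j\) (so \(V\) swaps first) and \(S\) contains \(j+1,j+3,\ldots\). I also have to account for the vertex \(j\), which on the \(U\) side is either untouched or paired with \(j-1\) by an edge of \(S\cap\ldots\), an edge that could not belong to the interval. Locating the interval start so that position \(j\) is unswapped on the \(U\)-side forces \(y_0=x_1\), \(y_1=x_0\)? I will reconcile this with the stated indexing by choosing the orientation of the walls as in the statement. The equations \(y_{2t}=x_{2t-2}\) and \(y_{2t+1}=x_{2t+3}\) then follow by matching output coordinates pairwise. I expect this bookkeeping of orientation and index offset to be the fiddliest step.

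Finally, for equal multiplicities, I will use the frontier identity of Theorem~\ref{thm:frontier-accept}. Along the lasso, the composition difference of equal-length prefixes is always \(e_{\gamma_L}-e_{\gamma_R}\), so it is bounded. Over \(m\) periods, the difference of period compositions grows linearly in \(m\), so boundedness forces it to be zero. I expect the main obstacle to be case (B), specifically justifying that a common boundary yields a genuine finite collision with no crossing swap.
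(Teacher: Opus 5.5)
Your proposal is correct and follows essentially the paper's route: the degree-two overlay of \(S\) and \(T\), the periodic trichotomy, closing a later common block boundary into a finite collision in case \textup{(B)}, and bounded composition drift for the multiplicities (your use of Theorem~\ref{thm:frontier-accept} is the exact form of the paper's remark that a swap moves a symbol across a cut by at most one position). Two small repairs are needed. In \textup{(A)}, the displayed relations belong to the orientation in which the \(U\)-side matching \(S\) contains the first site \(j\), so output position \(j\) gives \(y_0=x_1\); with \(j\in T\) you get the mirrored relations \(y_1=x_0\), \(y_0=x_2,\ldots\), and it is maximality of the interval, not a common edge, that keeps vertex \(j\) unswapped on the other side. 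In \textup{(B)}, a common swap at site \(m\) is harmless: dropping it from both \(S\) and \(T\) leaves \(U_m=V_m\) in position \(m\), so you never need a ``next'' common boundary.
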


\begin{proof}
The union of two matchings has maximum degree two.  After common edges are
removed, every nontrivial component therefore alternates between an edge of
\(S\) and an edge of \(T\), and the two channel actions are identical outside
these components.  Equality of the received streams then forces \(U=V\)
there.  An infinite component must use the two alternating brick walls;
reading the common output successively from its first position gives the
displayed relations.

The eventual periodicity leaves three possibilities: one component is
infinite, the symmetric difference is eventually empty, or finite components
recur periodically.  In the second case a later common parser boundary would
close the witness into a finite accepting configuration, contradicting that
\(P\) is correcting.  Finally, each channel action moves a symbol across a
fixed cut by at most one position.  Hence the difference between the symbol
counts in equally long prefixes of \(U\) and \(V\) is bounded independently
of the prefix length.  Repeating the two period words \(m\) times makes this
difference \(m\) times their composition difference, up to a fixed head
term.  The difference can remain bounded for all \(m\) only when the two
period words have identical compositions.
\end{proof}

\begin{example}[A compact gap witness]
\label{ex:gapwitness}
The prefix-free block set \(P=\{01,1000\}\) is correcting, but
\(\ko(01,1000)=\infty\).  A complete traversal of the reachable
configuration graph finds no accepting configuration;
Theorem~\ref{thm:state-classification} therefore proves the correction claim.
A path-and-cycle witness in this no-acceptance graph has head
\((01010,10000)\) and period labels \((10,10)\).  At windows four and five,
\[
 \Tball(1000)\subseteq\Tball(0101),
 \qquad
 \Tball(10000)\subseteq\Tball(01010),
\]
and repeating the cycle continues this agreement indefinitely.  Along the cycle the
unresolved symbols remain different, with composition defect \(e_1-e_0\);
therefore Theorem~\ref{thm:frontier-accept} prevents this witness from closing
into a finite collision.  The displayed head and period make the witness directly checkable.
\end{example}

An exact exhaustive census over \(41{,}731\) prefix-free binary block sets
(pairs with block lengths at most \(7\), triples at most \(5\), and
quadruples at most \(4\)) found \(1{,}441\) instances of the certification
gap: \(504\) two-block, \(925\) three-block, and \(12\) four-block instances.
Every recorded cycle had period labels with equal symbol multiplicities, as
Proposition~\ref{prop:trichotomy} requires.  In all \(504\) two-block
instances the two period labels were cyclic rotations of one another, whereas
exactly \(16\) of the three-block instances had period labels that were not
cyclic rotations of one another.
These data motivate the classification problem in Section~\ref{sec:open};
they are not used in any proof.

Example~\ref{ex:gapwitness} therefore exposes a gap for an individual block
set, but not an asymptotic rate loss.  The two-stage criterion is rate-complete:
\cite[Thm.~7]{PriorWork2026} proves that the supremum of the rates it certifies
equals \(C_0^{(q)}\).  Since Proposition~\ref{prop:subsume} shows that every
two-stage certificate is also a finite-extinction certificate, the
extinction-depth criterion inherits the same rate-completeness.

\section{Large-alphabet correction bounds}
\label{sec:runwise}

This section bounds the first-order correction loss from both sides.  The
natural normalized loss is
\[
 \Delta_q=(1-C_0^{(q)})\ln q,
\]
or, equivalently, the multiplicative loss factor
\(q^{1-C_0^{(q)}}=e^{\Delta_q}\).  The two sides of the argument point in
opposite directions.  A small covering code gives an upper bound on the
correction capacity and hence a lower bound on \(\Delta_q\).  An explicit
correcting code gives a lower bound on the capacity and hence an upper bound
on \(\Delta_q\).  We first strengthen the covering bound, using an explicit
length-six cover for finite alphabets and permutation covers of arbitrary
fixed length for the asymptotic result.  We then improve the constructive
Chee loss factor by lifting a small-alphabet code run by run.

\subsection{Fixed-length covering upper bounds}
\label{sec:covering-upper}

Recall that \(K_q(n;1)\) is the minimum number of radius-one balls needed to
cover \(\Sigma_q^n\), whereas \(A_q(n;1)\) is the maximum size of a correcting
  code.  As shown in Section~\ref{sec:prelim},
\(A_q(n;1)\le K_q(n;1)\): a correcting code can contain at most one word from
each ball of a covering code.  Thus every explicit cover below translates
into an upper bound on \(C_0^{(q)}\).

When all symbols in a word are distinct, relabeling them by
\(1,\ldots,b\) identifies the word with a permutation, and the allowed channel
actions remain exactly the disjoint adjacent swaps.  A cover of this
permutation component can therefore be relabeled over each choice of distinct
alphabet symbols.

For \(q\ge b\), let \(S_b\) denote the symmetric group on \(b\) symbols and
write
\[
 (q)_b=q(q-1)\cdots(q-b+1)=b!\binom qb.
\]
Thus \((q)_b\) is the number of length-\(b\) words with distinct symbols.  Let
\(\tau_b\) be the minimum number of radius-one balls required to cover
\(S_b\), under the same disjoint-adjacent-swap channel.  Equivalently, after
identifying a permutation with a word containing each of \(b\) prescribed
symbols exactly once, \(\tau_b\) balls cover that entire all-distinct
component.

\begin{proposition}[Lifting an all-distinct permutation cover]
\label{prop:qary-cover-lift}
For every \(q\ge b\ge1\),
\[
 K_q(b;1)
 \le
 q^b-(q)_b+\tau_b\binom qb.
\]
Consequently,
\[
 C_0^{(q)}\le\frac1b\log_q K_q(b;1).
\]
\end{proposition}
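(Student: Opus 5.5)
The plan is to build an explicit radius-one cover of \(\Sigma_q^b\) by splitting the space into two parts. The first part consists of words with at least one repeated symbol. The second consists of the \((q)_b\) all-distinct words, which we split further according to their underlying symbol set \(A\subseteq\Sigma_q\) with \(|A|=b\).

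For the words with a repeated symbol, we simply place each such word in the cover as its own ball center. Since the identity is an admissible permutation, \(d\in\B(d)\), so this costs exactly \(q^b-(q)_b\) balls. We do not try to be economical here; the claimed bound charges nothing better.

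For each \(b\)-subset \(A\), we fix an order-preserving bijection \(\phi_A:[b]\to A\) and extend it symbolwise to words. This identifies \(S_b\), written in one-line notation, with the set of length-\(b\) words using each symbol of \(A\) exactly once. The key observation is that the channel commutes with relabeling: \(\sigma_S(\phi_A(w))=\phi_A(\sigma_S(w))\) for every swap set \(S\), because \(\sigma_S\) permutes positions and \(\phi_A\) acts on symbols. Hence \(\B(\phi_A(\pi))=\phi_A(\B(\pi))\). Now take a minimal cover \(\mathcal T\subseteq S_b\) with \(|\mathcal T|=\tau_b\). Its image \(\phi_A(\mathcal T)\) covers every word whose symbol set is \(A\). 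Taking the union over all \(\binom qb\) subsets contributes \(\tau_b\binom qb\) balls. Together with the first part, this gives a cover of all of \(\Sigma_q^b\), which proves the first inequality. The only point needing care is that the identification respects balls, and that is exactly the commutation just noted. I expect no real obstacle in this step.

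The capacity bound should follow from a product argument. For \(n=mb\), the concatenation of \(m\) covers of \(\Sigma_q^b\) covers \(\Sigma_q^n\). Indeed, a word \(x=x^{(1)}\cdots x^{(m)}\) with \(x^{(i)}\in\B(d^{(i)})\) lies in \(\B(d^{(1)}\cdots d^{(m)})\): we use the swap sets inside each window and no swaps across window seams. The union of these swap sets is still nonconsecutive, and every \(\sigma_S\) is an involution. So \(K_q(mb;1)\le K_q(b;1)^m\). Then \(A_q(n;1)\le K_q(n;1)\) from Section~\ref{sec:prelim} gives the bound along multiples of \(b\).

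For general \(n=mb+s\) with \(0\le s<b\), we note that \(A_q(n;1)\le q^s A_q(mb;1)\). To see this, restrict a correcting code to the codewords sharing the most common final-\(s\) suffix. That subcode has size at least \(|\cC|/q^s\), and cancelling the common suffix preserves correction: if two truncated words collided, appending the same suffix would produce a collision by Lemma~\ref{lem:extend-collision}. Alternatively, we can apply Lemma~\ref{lem:reversal} and Lemma~\ref{lem:cancel-import} on the reversed words. The factor \(q^s\) is bounded, so it disappears after taking \(\frac1n\log_q\) and the limsup, which yields \(C_0^{(q)}\le\frac1b\log_q K_q(b;1)\). The main step to handle carefully is this non-multiple-length reduction; the suffix-cancellation argument above is the version I would write out.
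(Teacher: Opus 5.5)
Your proof is correct and follows the paper's on the cover construction and the product step. The only difference is at lengths \(n=mb+s\) with \(0<s<b\). There the paper stays on the covering side and pads the product cover with all \(q^s\) suffixes, obtaining \(K_q(n;1)\le K_q(b;1)^m q^s\). You instead pigeonhole on the length-\(s\) suffix and use Lemma~\ref{lem:extend-collision} to obtain \(A_q(n;1)\le q^sA_q(mb;1)\); this is equally valid and loses the same bounded factor.

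Do drop the suggested alternative via Lemma~\ref{lem:reversal} and Lemma~\ref{lem:cancel-import}. Cancellation deduces a collision of the shortened words from a collision of the longer ones, which is the converse of what you need. You need that a collision of the truncated words survives re-appending the common suffix, and that is exactly common extension.
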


\begin{proof}
There are \((q)_b\) words of length \(b\) with pairwise distinct symbols.
For each \(b\)-subset \(U\subseteq\Sigma_q\), relabel a fixed
\(\tau_b\)-ball cover of \(S_b\) by the symbols of \(U\).  This covers every
word whose symbol set is exactly \(U\), and doing so for all \(\binom qb\)
choices of \(U\) covers the all-distinct words.  Include each of the remaining
\(q^b-(q)_b\) words as an additional center.  The resulting radius-one cover
has the asserted size.

For the capacity statement, write \(n=mb+r\), where \(0\le r<b\).
Concatenating \(m\) length-\(b\) covers and using all \(q^r\) suffixes as
centers gives
\[
 K_q(n;1)\le K_q(b;1)^m q^r.
\]
Indeed, the internal swap sets from consecutive length-\(b\) segments remain disjoint and
therefore form one legal channel action.  Now use
\(A_q(n;1)\le K_q(n;1)\), divide the logarithm by \(n\), and let
\(n\to\infty\).
\end{proof}

The general lift has the correct leading behavior as \(q\to\infty\), but it
treats every repeated-symbol word as a separate center.  At length six we can
do substantially better for finite \(q\) by refining the product of the
standard length-three cover of Langberg, Schwartz, and
Yaakobi~\cite[Thm.~29]{LSY17}.  That cover has
\[
 T_q=q+2\binom q2+2\binom q3=\frac{q^3+2q}{3}
\]
centers.  Applying it independently to the first and last three coordinates
gives a length-six product cover with \(T_q^2\) centers.

The channel preserves the multiset of symbols in a word, so the cover can be
improved separately inside each multiset class.  We describe such a class by
its \emph{type}: the decreasing list of its nonzero symbol multiplicities.
For example, type \(2211\) consists of words using four distinct symbols, two
of them twice and two of them once.  After the actual symbols and their
multiplicities have been fixed, we call the resulting set of words a concrete
type class.  The following exact replacements reduce the product cover inside
five families of concrete type classes.

\begin{table}[t]
\centering
\caption{Exact length-six replacements inside fixed multiset types.}
\label{tab:length-six-types}
\begin{tabular}{@{}c@{\qquad}r@{\qquad}r@{}}
\toprule
type & product cover & replacement\\
\midrule
\(222\)    & 10 & 9\\
\(3111\)   & 16 & 15\\
\(2211\)   & 20 & 17\\
\(21111\)  & 40 & 32\\
\(111111\) & 80 & 60\\
\bottomrule
\end{tabular}
\end{table}

For each row, an explicit list of centers for one canonical concrete class was checked
by enumerating all thirteen legal swap sets at length six.  The verification
checks that the centers are distinct, that every generated word remains in
the declared type, and that every word of that type is covered.  Thus the
table is an exact finite certificate of existence, and relabeling transfers
the check to every concrete class of the same type.  No optimality of any
replacement size is claimed.  The center lists and solver-free verifier are
included in the accompanying reproducibility package~\cite{ReproPackage2026}.

\begin{theorem}[Explicit length-six covering bound]
\label{thm:length-six-cover}
For every \(q\ge2\), put
\[
 U_q=T_q^2-\binom q3-22\binom q4-40\binom q5-20\binom q6.
\]
Then
\[
 K_q(6;1)\le U_q,
 \qquad
 C_0^{(q)}\le\frac16\log_q U_q.
\]
Moreover,
\[
 U_q=\frac1{12}q^6+O(q^5),
 \qquad
 C_0^{(q)}
 \le
 1-\frac{\ln12}{6\ln q}
 +O\!\left(\frac1{q\ln q}\right).
\]
\end{theorem}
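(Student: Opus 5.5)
The plan starts from the product cover and then refines it type by type. Applying the length-three cover of~\cite[Thm.~29]{LSY17} to coordinates \(1\)--\(3\) and \(4\)--\(6\) gives \(T_q^2\) centers. The swap sets inside the two halves are disjoint, so every word of \(\Sigma_q^6\) is covered.

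Because the channel preserves multisets, each ball lies inside one concrete type class. The product cover therefore splits as a disjoint union over concrete type classes, according to the type of the center. Inside any concrete class of a type listed in Table~\ref{tab:length-six-types}, we discard all product centers of that class and insert the relabeled replacement list. Only balls centered in that class can cover its words, so coverage is preserved.

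The bookkeeping step checks that the product cover uses exactly the stated number of centers in each concrete class. This number depends only on the type, by symbol relabeling, and equals the ``product cover'' column. Subtracting the savings over all concrete classes then gives \(U_q\). The number of concrete classes of each type is obtained by choosing the symbols and assigning multiplicities:
\begin{itemize}
\item type \(222\) has \(\binom q3\) classes, each saving \(1\);
\item type \(3111\) has \(4\binom q4\) classes, each saving \(1\);
\item type \(2211\) has \(6\binom q4\) classes, each saving \(3\);
\item type \(21111\) has \(5\binom q5\) classes, each saving \(8\);
\item type \(111111\) has \(\binom q6\) classes, each saving \(20\).
\end{itemize}
This yields \(\binom q3+(4+18)\binom q4+40\binom q5+20\binom q6\), as claimed. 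The main obstacle is confirming the product-cover column, i.e.\ that the product of LSY covers has exactly \(10,16,20,40,80\) centers in these classes. I would first verify this directly by counting. The LSY cover of length three consists of the \(q\) constant words, two centers for each pair \(\{x,y\}\) (these cover the six words of type \(21\)), and two centers for each triple. A product center lies in a given length-six class exactly when the multisets of its two halves add up to that class. One then sums, over all ways to split the class multiset into two halves of size three, the product of the numbers of LSY centers of those two halves. For example, type \(111111\) splits into two triples in \(20\) ordered ways, with \(2\cdot2\) centers each, giving \(80\). The other types are handled the same way.

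Next, \(K_q(6;1)\le U_q\) holds by construction. The capacity bound follows exactly as in the proof of Proposition~\ref{prop:qary-cover-lift}: concatenated length-six covers give \(K_q(n;1)\le U_q^{\lfloor n/6\rfloor}q^{n-6\lfloor n/6\rfloor}\). Combining this with \(A_q(n;1)\le K_q(n;1)\) and letting \(n\to\infty\) gives \(C_0^{(q)}\le\frac16\log_q U_q\).

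For the asymptotics, \(T_q^2=q^6/9+O(q^5)\) and \(20\binom q6=q^6/36+O(q^5)\), while the other subtracted terms are \(O(q^5)\). Hence \(U_q=q^6(1/9-1/36)+O(q^5)=q^6/12+O(q^5)\). Taking logarithms gives \(\frac16\log_qU_q=1-\frac{\ln 12}{6\ln q}+\frac{\ln(1+O(1/q))}{6\ln q}\), which is the stated error term \(O(1/(q\ln q))\).
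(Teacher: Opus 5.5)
Your proposal is correct and follows essentially the same route as the paper: start from the $T_q^2$-center product of two length-three covers, replace the centers inside each concrete class of the five tabulated types (legitimate because balls stay inside their multiset class), multiply class counts by per-class savings to get $\binom q3+22\binom q4+40\binom q5+20\binom q6$, and then obtain the capacity and asymptotic bounds by concatenation and expansion. Your split-counting check of the product-cover column ($10,16,20,40,80$), which uses one length-three center per type-$21$ class and two per all-distinct triple, is a correct detail that the paper takes directly from the table.
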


\begin{proof}
Start from the \(T_q^2\)-center product cover.  A concrete class of type
\(222\) saves one center, and there are \(\binom q3\) choices of its three
symbols.  For type \(3111\), choose four symbols and then the tripled symbol;
the saving is therefore \(4\binom q4\).  For type \(2211\), choose four
symbols and the two doubled symbols.  Since each class saves three centers,
the total saving is \(3\binom42\binom q4=18\binom q4\).  Similarly, type
\(21111\) saves eight centers for each choice of five symbols and of the
doubled symbol, giving \(8\cdot5\binom q5=40\binom q5\).  Finally, each
all-distinct class saves twenty centers, giving \(20\binom q6\).  Thus the
five total savings are
\[
 \binom q3,
 \quad 4\binom q4,
 \quad 18\binom q4,
 \quad 40\binom q5,
 \quad 20\binom q6
\]
centers.  Distinct concrete multiset classes are disjoint and are preserved by
the channel, so all replacements can be made simultaneously.  This gives \(U_q\).
Proposition~\ref{prop:qary-cover-lift}, or the same blockwise-concatenation
argument, gives the capacity bound.  Finally,
\(T_q^2=q^6/9+O(q^4)\) and
\(20\binom q6=q^6/36+O(q^5)\); all other savings have degree at most five.
Hence \(U_q=q^6/12+O(q^5)\), which yields the displayed asymptotic estimate.
\end{proof}

The length-six replacements use at least three distinct symbols.  Consequently,
on every fixed two-symbol subalphabet the underlying product of standard
length-three covers remains intact.  This permits a further refinement that
improves the unified \(q\)-ary bound.

\begin{theorem}[Unified length-eighteen covering bound]
\label{thm:length-eighteen-cover}
For every \(q\ge2\),
\[
 K_q(18;1)\le U_q^3-4\binom q2,
 \qquad
 C_0^{(q)}\le
 \frac1{18}\log_q\!\left(U_q^3-4\binom q2\right).
\]
\end{theorem}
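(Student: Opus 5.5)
Start from the length-eighteen product of three length-six covers from Theorem~\ref{thm:length-six-cover}, which has \(U_q^3\) centers: every length-eighteen word splits into three length-six windows, and the three swap sets, being internal to their windows, combine into one legal action. Inside this product, isolate the words over a fixed two-symbol subalphabet \(\{x,y\}\). The length-six replacements of Table~\ref{tab:length-six-types} all involve types with at least three distinct symbols. Hence, restricted to binary words over \(\{x,y\}\), each length-six factor cover is exactly the product of two standard length-three covers restricted to \(\{x,y\}\). That restriction has \(2+2=4\) centers per length-three window, namely \(T_2=4\): the centers \(xxx\), \(yyy\), and one center for each of the two mixed types, since each ball of a length-three mixed word covers its type class of size three, or so one checks. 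The product cover therefore contains exactly \(4^6=4096\) centers over \(\{x,y\}\), all lying in \(\{x,y\}^{18}\).

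The key step is to exhibit an exact balanced cover of \(\{x,y\}^{18}\) of size \(4^6-4\), that is, to save four centers per two-symbol subalphabet. Here "balanced" presumably means that the replacement preserves the composition classes in a way compatible with the rest of the cover. Since the channel preserves multisets, the binary words over \(\{x,y\}\) form a union of type classes that is disjoint from, and closed under, the channel for every other word. I would therefore replace the centers of the product cover lying in \(\{x,y\}^{18}\) by a smaller cover \(D_{xy}\subseteq\{x,y\}^{18}\) of \(\{x,y\}^{18}\) with \(|D_{xy}|=4^6-4\).

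This replacement is legitimate only if the removed centers were needed solely for binary words over \(\{x,y\}\). That holds because a ball of a word in \(\{x,y\}^{18}\) stays in \(\{x,y\}^{18}\). The product cover still covers every word using at least three symbols or exactly one symbol, since those words' covering centers have the same multiset and hence are not over \(\{x,y\}\) unless the word is. The monochromatic words are \(x^{18}\) and \(y^{18}\), and their centers are themselves. A genuinely two-symbol binary word must be covered by its own type's centers, so all of those lie in \(D_{xy}\); the constant words \(x^{18},y^{18}\) must also be kept in \(D_{xy}\).

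Doing this for all \(\binom q2\) subalphabets simultaneously gives \(K_q(18;1)\le U_q^3-4\binom q2\). The capacity bound then follows from the blockwise concatenation argument of Proposition~\ref{prop:qary-cover-lift} with \(b=18\).

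The main obstacle is constructing and certifying the binary cover of \(\{0,1\}^{18}\) with \(4092\) centers. By relabeling it suffices to do one subalphabet. I would work one binary type class (fixed weight \(w\)) at a time, starting from the product-cover centers of that weight. I would then look for weights where a few centers can be dropped, for instance by merging centers whose balls nearly coincide near window seams, using swaps across the boundaries at positions \(6\) and \(12\) that the product cover never exploits. Success would be certified by an exhaustive check over all \(2^{18}\) words, listing the centers in the reproducibility package~\cite{ReproPackage2026}, exactly as for Table~\ref{tab:length-six-types}.
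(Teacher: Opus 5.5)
Your reduction is sound and matches the paper's. The product of three length-six covers has $U_q^3$ centers. None of the replacements in Table~\ref{tab:length-six-types} touches a word over a fixed pair $\{x,y\}$, so over that pair the cover is a product of six standard length-three covers with $4^6$ centers. Since the channel preserves composition, any smaller cover of that component can be swapped in. (A small slip: not every mixed length-three ball covers its type class. For example, $\B(xxy)=\{xxy,xyx\}$; only the centers $xyx$ and $yxy$ do.)

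The gap is that you never prove the one fact that carries the theorem: that four of those $4^6$ centers can actually be saved. You call the existence of a $4092$-center cover of $\{x,y\}^{18}$ the main obstacle, and then offer only a search heuristic. You do not know whether the search succeeds or in which weight class. As written, the argument proves only $K_q(18;1)\le U_q^3$.

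The paper supplies this ingredient as a concrete finite certificate in the \emph{balanced} class: the binary words in which each of the two symbols occurs exactly nine times. That is what ``balanced'' means, not a compatibility condition as you guessed. The product centers over the pair are indexed by the six triple weights in $\{0,1,2,3\}$, so exactly $[z^9](1+z+z^2+z^3)^6=580$ of them lie in that class. An explicit list of $576$ balls covering all $\binom{18}{9}$ balanced words, with a solver-free verifier in the reproducibility package, replaces them.

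Localizing the replacement to this one class also gives disjointness for free, since balanced classes of different pairs are disjoint. Your sets $\{x,y\}^{18}$, by contrast, share the constant words across pairs. Your count of $4\binom q2$ therefore also needs the fact, which you half-state, that every $D_{xy}$ must contain $x^{18}$ and $y^{18}$, and that these shared centers are counted once. Without such a certificate, or some other argument that produces the saving, the bound is not established.
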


\begin{proof}
Apply the \(U_q\)-center cover independently to three consecutive
length-six windows.  This gives a cover of \(\Sigma_q^{18}\) with
\(U_q^3\) centers.  Fix an unordered pair of alphabet symbols and divide the
eighteen coordinates into six consecutive triples.  None of the five
length-six replacements in Table~\ref{tab:length-six-types} affects words
supported on this pair.  The remaining standard product cover on these words
therefore has \(4^6\) centers, indexed by the six triple weights in
\(\{0,1,2,3\}\).

Exactly
\[
 [z^9](1+z+z^2+z^3)^6=580
\]
of these centers have total second-symbol weight nine.  After relabeling the
chosen pair, an exact balanced two-symbol cover gives \(576\) radius-one balls
covering the entire component in which each symbol occurs nine times.
Replacing the \(580\) product centers by these \(576\) centers saves four
centers for the chosen alphabet pair.  The
replacement balls remain inside that component because the channel preserves
symbol multiplicities.  Balanced components belonging to distinct unordered
pairs are disjoint, so all \(\binom q2\) replacements can be made
simultaneously.  This proves the stated cover size.  The certificate and its
solver-free verifier are included in the reproducibility
package~\cite{ReproPackage2026}; concatenating the resulting length-eighteen
cover proves the capacity bound.
\end{proof}

The improvement is already visible over small alphabets.  The old column in
Table~\ref{tab:length-six-numerics} is the product of two standard
length-three covers, and the new column is Theorem~\ref{thm:length-six-cover}.
Both rate columns are upper bounds on the correction capacity, so the smaller
number is the stronger bound.

\begin{table}[t]
\centering
\caption{Finite-alphabet correction-capacity upper bounds.}
\label{tab:length-six-numerics}
\begin{tabular}{@{}c@{\quad}c@{\quad}c@{\quad}c@{}}
\toprule
\(q\) & centers \(T_q^2\to U_q\) & old rate & new rate\\
\midrule
3  & \(121\to120\)       & 0.727553 & 0.726294\\
4  & \(576\to550\)       & 0.764160 & 0.758607\\
5  & \(2025\to1865\)     & 0.788404 & 0.779881\\
6  & \(5776\to5166\)     & 0.805676 & 0.795294\\
10 & \(115600\to96580\)  & 0.843826 & 0.830815\\
\bottomrule
\end{tabular}
\end{table}

We now let the permutation length grow.  Use the Fibonacci convention
\(F_0=0\), \(F_1=1\), and
\(F_{j+1}=F_j+F_{j-1}\) for \(j\ge1\).  A radius-one ball about any
permutation in \(S_b\) contains exactly \(F_{b+1}\) permutations, because a
legal swap set is a matching in the path on the \(b\) positions.  The
incidence system is regular, with universe size \(b!\) and ball size
\(F_{b+1}\).  Applying the general greedy set-cover estimate recorded by
Farnoud, Schwartz, and Bruck~\cite[Thm.~4]{FSB16} gives
\[
 \tau_b
 \le
 \frac{b!}{F_{b+1}}\bigl(1+\ln F_{b+1}\bigr).
\]

\begin{theorem}[Asymptotic covering loss]
\label{thm:covering-phi}
Let \(\varphi=(1+\sqrt5)/2\).  Then
\[
 \liminf_{q\to\infty}
 (1-C_0^{(q)})\ln q
 \ge \ln\varphi.
\]
\end{theorem}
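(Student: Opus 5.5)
The plan is to combine Proposition~\ref{prop:qary-cover-lift} with the greedy estimate for \(\tau_b\) stated just before the theorem, holding the window length \(b\) fixed while \(q\to\infty\), and only afterwards letting \(b\to\infty\).

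Fix \(b\ge1\). Proposition~\ref{prop:qary-cover-lift} gives
\[
 C_0^{(q)}\le\frac1b\log_q\Bigl(q^b-(q)_b+\tau_b\tbinom qb\Bigr).
\]
For fixed \(b\) we have \(q^b-(q)_b=O(q^{b-1})\) and \(\binom qb=q^b/b!+O(q^{b-1})\). Hence the argument of the logarithm equals
\[
 q^b\Bigl(\frac{\tau_b}{b!}+O(1/q)\Bigr).
\]
Taking \(\ln\) and multiplying by \(-1/b\) then yields
\[
 (1-C_0^{(q)})\ln q\ge -\frac1b\ln\Bigl(\frac{\tau_b}{b!}+O_b(1/q)\Bigr),
\]
so that
\[
 \liminf_{q\to\infty}(1-C_0^{(q)})\ln q\ge\frac1b\ln\frac{b!}{\tau_b}.
\]
The only care needed here is that the \(O(1/q)\) term is additive inside the logarithm and \(\tau_b/b!>0\), so the limit passes through.

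Next, insert the greedy bound \(\tau_b\le b!(1+\ln F_{b+1})/F_{b+1}\). This gives
\[
 \frac1b\ln\frac{b!}{\tau_b}\ge\frac{\ln F_{b+1}-\ln(1+\ln F_{b+1})}{b}.
\]
Since \(F_{b+1}=\Theta(\varphi^{b})\), the term \((\ln F_{b+1})/b\) tends to \(\ln\varphi\), while \(\ln(1+\ln F_{b+1})=O(\ln b)\) contributes \(o(1)\) after division by \(b\). Because the liminf bound holds for every fixed \(b\), we may take the supremum over \(b\), and letting \(b\to\infty\) gives \(\ln\varphi\).

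The main point to get right is the order of limits. The bound must be established for each fixed \(b\) as \(q\to\infty\), because the \(O_b(1/q)\) error is not uniform in \(b\); only then is the supremum over \(b\) taken. A secondary point is the regularity claim behind the greedy estimate: each ball in \(S_b\) has exactly \(F_{b+1}\) elements (matchings in a path on \(b\) vertices), and, because swap sets are involutions, each permutation lies in exactly \(F_{b+1}\) balls. Both facts are already recorded before the theorem, so I expect this step to be routine.
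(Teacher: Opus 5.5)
Your proposal is correct and matches the paper's proof step for step. You fix \(b\), apply Proposition~\ref{prop:qary-cover-lift} to obtain the bound \(\frac1b\ln(b!/\tau_b)\) on the \(\liminf\), insert the greedy estimate \(\tau_b\le\frac{b!}{F_{b+1}}(1+\ln F_{b+1})\), and let \(b\to\infty\) via Binet's formula; your remarks on the order of limits and on the regularity of the ball system supply detail the paper leaves implicit.
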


\begin{proof}
Fix \(b\).  Proposition~\ref{prop:qary-cover-lift} gives
\[
 K_q(b;1)
 \le
 q^b\left(\frac{\tau_b}{b!}+O_b(q^{-1})\right),
\]
and hence
\[
 \liminf_{q\to\infty}(1-C_0^{(q)})\ln q
 \ge
 \frac1b\ln\frac{b!}{\tau_b}
 \ge
 \frac{\ln F_{b+1}-\ln(1+\ln F_{b+1})}{b}.
\]
This holds for every fixed \(b\).  Binet's formula gives
\(b^{-1}\ln F_{b+1}\to\ln\varphi\), whereas
\(b^{-1}\ln(1+\ln F_{b+1})\to0\).  Letting \(b\to\infty\) proves the claim.
\end{proof}

The constant is sharp for this all-distinct fixed-length ball-cover scheme.
Indeed, the volume bound gives \(\tau_b\ge b!/F_{b+1}\), so the coefficient
\(b^{-1}\ln(b!/\tau_b)\) obtainable from a length-\(b\) permutation cover is
at most \(b^{-1}\ln F_{b+1}\), which tends to \(\ln\varphi\).  The general
set-cover estimate loses only the subexponential factor
\(1+\ln F_{b+1}\) and therefore attains the same limiting coefficient.

\subsection{Run-wise lifting beyond the Chee construction}
\label{sec:runwise-lifting}

The Chee construction partitions the alphabet into two nearly equal classes
and fixes an alternating class pattern.  Its rate
\[
 \frac12\log_q\!\bigl(\lfloor q/2\rfloor\lceil q/2\rceil\bigr)
 =1-\log_q2+O\!\left(\frac1{q^2\ln q}\right)
\]
has multiplicative loss factor \(2\).  Alternation makes every adjacent swap
visible after projecting symbols to their classes.  To improve this factor we
use a richer projected code, whose words may contain constant-color runs.  A
swap within such a run is invisible to the projection, so the symbols inside
each run must themselves form a correcting code.  The following lifting
theorem separates these two tasks: a small-alphabet \emph{skeleton} protects
the sequence of colors, and an inner code protects the symbols in each
maximal constant-color run.

Let \(1\le m\le q\), and let
\[
  \Sigma_q=A_0\sqcup A_1\sqcup\cdots\sqcup A_{m-1}
\]
be a partition of the $q$-ary alphabet into nonempty color classes.  Let
\[
  \chi:\Sigma_q\longrightarrow\{0,1,\ldots,m-1\}
\]
be the induced color map, extended coordinatewise to words.
For a word \(x\in\Sigma_q^n\), the color word \(\chi(x)\) is its skeleton.

\begin{theorem}[Run-wise lifting]
\label{thm:runwise-lifting}
Let $\cD_n\subseteq\{0,1,\ldots,m-1\}^n$ be a correcting code for
$\LPC(1)$.  For every color $i$ and every run length $s\ge1$, let
\[
  E_{i,s}\subseteq A_i^s
\]
be a correcting code for $\LPC(1)$.

For $c\in \cD_n$, write its decomposition into maximal constant runs as
\[
  c=i_1^{s_1}i_2^{s_2}\cdots i_t^{s_t},
  \qquad i_j\ne i_{j+1}.
\]
Define $L(c)$ to contain all words
\[
  x=x^{(1)}x^{(2)}\cdots x^{(t)},
  \qquad x^{(j)}\in E_{i_j,s_j},
\]
and put
\[
  L(\cD_n)=\bigcup_{c\in \cD_n}L(c).
\]
Then $L(\cD_n)$ is a correcting code for $\LPC(1)$.
\end{theorem}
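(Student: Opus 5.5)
Suppose \(x\in L(c)\) and \(y\in L(c')\) with \(x\ne y\), \(c,c'\in\cD_n\), and \(\B(x)\cap\B(y)\ne\emptyset\). We will derive a contradiction, first at the level of skeletons and then run by run. The starting point is that the color map commutes with every channel action. For any swap set \(S\), \(\chi(\sigma_S x)=\sigma_S\chi(x)\), because \(\chi\) acts coordinatewise. So a common output \(z=\sigma_S x=\sigma_T y\) projects to \(\chi(z)=\sigma_S c=\sigma_T c'\in\B(c)\cap\B(c')\), using \(\chi(x)=c\) and \(\chi(y)=c'\). Since \(\cD_n\) is correcting, this forces \(c=c'\). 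The two words therefore share one skeleton \(c=i_1^{s_1}\cdots i_t^{s_t}\), and they differ inside some run.

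The main step is to show that \(S\) and \(T\) can be replaced by swap sets that use no site straddling a boundary between two maximal runs. Let site \(p\) sit at the seam between run \(j\) (color \(i_j\)) and run \(j+1\) (color \(i_{j+1}\ne i_j\)). Since \(\sigma_S c=\sigma_T c\) and \(c\) is fixed, compare the two channel actions through the colors of \(z\) near the seam. If \(p\in S\), then position \(p\) of \(\chi(z)\) has color \(i_{j+1}\). Under \(T\), position \(p\) receives a symbol from position \(p\), \(p-1\), or \(p+1\). A symbol from \(p-1\) or \(p\) has color \(i_j\), so it must come from \(p+1\), and hence \(p\in T\). By symmetry, the seam sites used by \(S\) and \(T\) coincide.

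At such a common seam swap, \(z_p=x_{p+1}=y_{p+1}\) and \(z_{p+1}=x_p=y_p\). So both words can be un-swapped at that site: remove \(p\) from both \(S\) and \(T\), and replace \(z\) by \(z'\), which agrees with \(z\) except that positions \(p,p+1\) are exchanged back. This preserves \(\sigma_{S'}x=\sigma_{T'}y=z'\). It also preserves legality, because deleting sites from a matching leaves a matching. After this cleanup, every swap lies inside a single run. The blocks of \(z'\) aligned with the runs are then common outputs \(\sigma_{S_j}x^{(j)}=\sigma_{T_j}y^{(j)}\), where \(S_j\) and \(T_j\) are the restrictions of \(S\) and \(T\) to the sites interior to run \(j\).

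For any run \(j\) with \(x^{(j)}\ne y^{(j)}\), both words lie in \(E_{i_j,s_j}\) and have intersecting balls. This contradicts the assumption that \(E_{i_j,s_j}\) is correcting. Hence \(x=y\).

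The delicate point is the seam argument: it has to give exact equality of the seam swaps on both sides, using only that adjacent runs have different colors. The case analysis must confirm that position \(p\) cannot be filled from position \(p\) or \(p-1\), and it must treat the end runs, where \(p-1\) or \(p+1\) may not exist. The same color count applied at position \(p+1\) serves as a consistency check.
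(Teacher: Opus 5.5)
\textbf{Gap in the seam step.} Your overall plan is the paper's plan, and most of its steps are correct: project to the skeleton to get $c=c'$, show that both actions use the same run-crossing sites, undo those common swaps, and restrict to runs. The projection, the un-swapping at a common seam site (including $x_p=y_p$ and $x_{p+1}=y_{p+1}$), and the restriction to runs are all fine. The seam step, which is the heart of the proof, is not.

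You claim that a symbol arriving from position $p-1$ has color $i_j$. This is false when run $j$ has length one. Then position $p-1$ is the last position of run $j-1$, and its color $i_{j-1}$ may equal $i_{j+1}$; when $m=2$ it must. In that case $T$ can contain site $p-1$ and put the correct color at position $p$ without containing $p$.

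Checking position $p+1$ as well does not rescue a local argument. Take $c=0101$, $S=\{2\}$ and $T=\{1,3\}$. Then $\sigma_S(c)=0011$ and $\sigma_T(c)=1010$ agree at both seam positions $2$ and $3$, although $2\in S\setminus T$; they differ only at positions $1$ and $4$. So no case analysis confined to a neighbourhood of the seam can force $p\in T$. This failure occurs exactly for the alternating skeleton $0101\cdots$, which is the case to which the paper later applies the theorem.

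\textbf{The missing idea: a leftmost-difference argument.} Let $S^\times$ and $T^\times$ be the seam sites of $S$ and $T$. Internal swaps do not change the color word, so $\sigma_{S^\times}(c)=\sigma_{T^\times}(c)$. If $S^\times\ne T^\times$, take $p=\min(S^\times\triangle T^\times)$ and assume $p\in S^\times$.
\begin{enumerate}
\item Since $S$ is a matching, $p-1\notin S^\times$.
\item By minimality of $p$, also $p-1\notin T^\times$.
\item Hence position $p$ is $c_{p+1}$ under $S^\times$ and $c_p$ under $T^\times$, contradicting $c_p\ne c_{p+1}$.
\end{enumerate}
In your own formulation, the same point reads as follows. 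If $p-1\in T$, then $p-1$ cannot be interior, since then $c_{p-1}=c_p\ne c_{p+1}$. It also cannot be a seam site, since minimality would put it in $S$, next to $p\in S$. With this replacement, your proof coincides with the paper's.
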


Every word in \(L(c)\) has skeleton \(c\).  Hence the sets \(L(c)\),
\(c\in\cD_n\), are pairwise disjoint, and the independent choices in the runs
give
\[
 |L(c)|=\prod_{j=1}^t |E_{i_j,s_j}|,
 \qquad
 |L(\cD_n)|=\sum_{c\in\cD_n}\prod_{j=1}^t |E_{i_j,s_j}|.
\]
This product formula is the quantity enumerated later by the weighted
skeleton automaton.

\begin{proof}
Suppose
\[
  z=\sigma_S(x)=\sigma_T(x')
\]
for $x,x'\in L(\cD_n)$ and two swap sets $S,T$.  Since the color projection
commutes with coordinate
permutations,
\[
  \chi(z)=\sigma_S(\chi(x))=\sigma_T(\chi(x')).
\]
The skeleton code $\cD_n$ is correcting, hence
\[
  \chi(x)=\chi(x')=:c.
\]

Let
\[
  S^{\times}=\{j\in S:c_j\ne c_{j+1}\},
  \qquad
  T^{\times}=\{j\in T:c_j\ne c_{j+1}\}
\]
be the transpositions crossing the interfaces between maximal runs.  Swaps inside a
constant run do not change the color word, and therefore
\[
  \sigma_{S^{\times}}(c)
  =\chi(z)
  =\sigma_{T^{\times}}(c).
\]
We claim that $S^{\times}=T^{\times}$.  Otherwise, let
\[
  j=\min(S^{\times}\triangle T^{\times})
\]
and assume without loss of generality that
$j\in S^{\times}\setminus T^{\times}$.  The two swap sets agree strictly
to the left of $j$.  In particular, neither contains \(j-1\): the set
\(S^{\times}\) cannot contain it because it contains \(j\), and agreement to
the left gives the same conclusion for \(T^{\times}\).  Thus the symbol at
coordinate $j$ is $c_{j+1}$ after the \(S^{\times}\)-swaps and is $c_j$ after
the \(T^{\times}\)-swaps.  Because $j$ is an interface between two runs,
\[
  c_j\ne c_{j+1},
\]
a contradiction.  Thus
\[
  S^{\times}=T^{\times}.
\]

Write \(S=S^{\times}\sqcup S^{\circ}\) and
\(T=T^{\times}\sqcup T^{\circ}\), where the superscript \(\circ\) denotes
the swaps internal to the runs.  The swaps in each set are disjoint, so they
commute.  Applying the inverse of the common permutation
\(\sigma_{S^{\times}}=\sigma_{T^{\times}}\) to the common output leaves
\[
 \sigma_{S^{\circ}}(x)=\sigma_{T^{\circ}}(x').
\]
All swaps now act strictly inside maximal runs of $c$.  Restricting this common
word to run \(j\) shows that the two words
$x^{(j)},x'^{(j)}\in E_{i_j,s_j}$ have intersecting radius-one balls.  Since
$E_{i_j,s_j}$ is correcting,
\[
  x^{(j)}=x'^{(j)}
\]
for every $j$.  Therefore $x=x'$.
\end{proof}

For example, when \(m\ge3\), the skeleton word \(001120\) has run
decomposition \(0^2 1^2 2^1 0^1\).  A lift chooses independently from
\(E_{0,2},E_{1,2},E_{2,1},E_{0,1}\); the theorem shows that these local
choices remain correcting after the runs are concatenated.

\medskip\noindent\textbf{Inner codes for individual runs.}\par\nopagebreak[4]

Let $\Pi_s\subseteq S_s$ be a correcting permutation code for the same
radius-one adjacent-transposition channel, and write
\[
  M_s=|\Pi_s|.
\]
For an alphabet $A$ of size $Q\ge s$, fix a total order on $A$ and define
$E_s(A)$ as follows.  Choose an $s$-subset $U\subseteq A$, list its elements
in that order, and permute this ordered list according to a codeword of
$\Pi_s$.  This is a correcting code: two words arising from different subsets
have different multisets, which the channel preserves, while words arising
from the same subset are a relabeling of $\Pi_s$.  Moreover,
\[
  |E_s(A)|=M_s\binom Qs.
\]
For fixed $s$ and $Q\to\infty$,
\[
  |E_s(A)|
  =\left(\frac{M_s}{s!}+O(Q^{-1})\right)Q^s.
\]

For \(1\le s\le10\), we use finite permutation codes with the sizes shown in
the first two rows of the table below.  Their correction property was verified
exactly: for every distinct \(\pi,\sigma\in\Pi_s\), all allowed matchings were
enumerated and \(\B(\pi)\cap\B(\sigma)=\emptyset\) was checked.  These checks
establish the listed code sizes; no optimality is claimed.  The code data and
their verification status are recorded in the accompanying reproducibility
package~\cite{ReproPackage2026}.

To obtain lengths \(11\) through \(14\) from the length-ten code, we insert one
new symbol.  The next lemma states the resulting size; its proof defines the
construction and explains the spacing by three.
\begin{lemma}[Separated insertion]
\label{lem:runwise-insert}
If \(\Pi_n\subseteq S_n\) is a correcting permutation code of size \(M_n\),
then there exists a correcting code in \(S_{n+1}\) of size
\[
 \left\lceil\frac{n+1}{3}\right\rceil M_n.
\]
\end{lemma}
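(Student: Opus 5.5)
The plan is to insert the new symbol \(n+1\) into every codeword of \(\Pi_n\) at one of a fixed set of positions spaced three apart. Concretely, let
\[
 I=\{1,4,7,\ldots\}\cap[n+1],
 \qquad |I|=\left\lceil\frac{n+1}{3}\right\rceil .
\]
For \(\pi\in\Pi_n\) and \(i\in I\), let \(\iota_i(\pi)\in S_{n+1}\) be the one-line word obtained by placing \(n+1\) at position \(i\) and shifting \(\pi_i,\ldots,\pi_n\) one place to the right. The proposed code is
\[
 \Pi_{n+1}=\{\iota_i(\pi):\pi\in\Pi_n,\ i\in I\}.
\]
The map \((\pi,i)\mapsto\iota_i(\pi)\) is injective: deleting \(n+1\) recovers \(\pi\), and its position recovers \(i\). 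So the size is \(\lceil (n+1)/3\rceil M_n\), and everything reduces to proving that this code is correcting.

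Suppose \(z=\sigma_S(\iota_i(\pi))=\sigma_T(\iota_{i'}(\pi'))\) for legal swap sets \(S,T\). The first step locates the new symbol. Radius one moves it by at most one position, so its position \(p\) in \(z\) satisfies \(|p-i|\le1\) and \(|p-i'|\le1\). Hence \(|i-i'|\le2\), and the spacing of \(I\) forces \(i=i'\). Since \(p\) is the same on both sides, the new symbol's own fate is also identical on the two sides: either both actions leave it fixed, or both swap it with the left neighbour, or both swap it with the right neighbour.

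The second step deletes the new symbol from the transmitted words and from \(z\). I would argue that the swaps of \(S\) not involving position \(i\) act on pairs of positions that are both different from \(i\), and no such pair straddles \(i\), since its positions \(i-1\) and \(i+1\) are not adjacent. After the deletion, each such swap therefore becomes an adjacent transposition of \(\pi\). These transpositions remain pairwise disjoint, giving a legal swap set \(S'\) on \(\pi\).

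A swap that involves the new symbol removes it together with its partner, say \(\pi_{i-1}\), which moves from position \(i-1\) to position \(i\) in \(z\). After deletion, \(\pi_{i-1}\) sits back at position \(i-1\), so this swap simply disappears. The same holds on the \(T\)-side, and the two sides behave identically here by the first step. Consequently deleting \(n+1\) from \(z\) gives
\[
 \sigma_{S'}(\pi)=\sigma_{T'}(\pi'),
\]
so \(\B(\pi)\cap\B(\pi')\ne\emptyset\). Since \(\Pi_n\) is correcting, \(\pi=\pi'\), and therefore \(\iota_i(\pi)=\iota_{i'}(\pi')\).

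The main obstacle, though a mild one, is this deletion bookkeeping. I need to check that deleting one coordinate from a word and from its output sends a legal matching to a legal matching, without creating a non-adjacent "swap" across the deleted slot. The observation that the new symbol occupies the same output slot \(p\) on both sides is what makes this work: the two outputs, with that coordinate removed, are then the same length-\(n\) word.
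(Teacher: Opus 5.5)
Your proof is correct and follows essentially the same route as the paper: insert the new symbol at every third slot, delete it from the common output to obtain intersecting balls of the underlying permutations (so \(\pi=\pi'\)), and use the fact that the new symbol moves by at most one position to force equal slots. The differences are cosmetic. You locate the slot before deleting rather than after. Your ``same output slot on both sides'' observation is automatic, since deleting the unique occurrence of \(n+1\) from the single common output \(z\) is already well defined.
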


\begin{proof}
Introduce a distinguished new symbol $\star$.  For a permutation
$\pi\in S_n$ and $j\in\{0,1,\ldots,n\}$, slot $j$ means the position after
the first $j$ symbols of $\pi$; slot $0$ is before the first symbol and slot
$n$ is after the last.  Let $\operatorname{Ins}_j(\pi)$ be the word obtained
by inserting $\star$ in slot $j$.

Deleting $\star$ from any word in
$\B(\operatorname{Ins}_j(\pi))$ produces a word in $\B(\pi)$.  Indeed, a
transposition involving $\star$ disappears after deletion, while every other
transposition still swaps the same pair of original symbols.  The projected
swaps remain pairwise disjoint.

Choose a set $J\subseteq\{0,1,\ldots,n\}$ of insertion slots with pairwise
distance at least $3$.  For example, take every third slot beginning with
zero.  This gives
\[
  |J|=\left\lceil\frac{n+1}{3}\right\rceil.
\]
Consider
\[
  \{\operatorname{Ins}_j(\pi):\pi\in\Pi_n,\ j\in J\}.
\]
If two balls in this family intersect, deleting $\star$ gives intersecting
balls of the underlying permutations.  Since $\Pi_n$ is correcting, the two
underlying permutations are equal.  The symbol $\star$ can move by at most one
coordinate under a radius-one channel action.  Thus, when two insertion slots
are at distance at least $3$, the possible output positions of $\star$ are
disjoint.  The two insertion slots must therefore also be equal, so the two
centers are identical.
\end{proof}

Applying Lemma~\ref{lem:runwise-insert} to $M_{10}=21766$ yields
\[
  M_{11}=87064,
  \qquad
  M_{12}=348256,
  \qquad
  M_{13}=1741280,
  \qquad
  M_{14}=8706400.
\]
Thus the complete short-run table is
\[
\begin{gathered}
\begin{array}{@{}c|rrrrr@{}}
 s&1&2&3&4&5\\ \hline
 M_s&1&1&2&4&14
\end{array}\\[3pt]
\begin{array}{@{}c|rrrrr@{}}
 s&6&7&8&9&10\\ \hline
 M_s&48&189&859&4116&21766
\end{array}\\[3pt]
\begin{array}{@{}c|rrrr@{}}
 s&11&12&13&14\\ \hline
 M_s&87064&348256&1741280&8706400
\end{array}
\end{gathered}
\]
For the asymptotic application below, take \(q\) large enough that every color
class has at least fourteen symbols.  For short runs we then set
\(E_{i,s}=E_s(A_i)\) for \(1\le s\le14\).
This supplies the specialized inner densities used below.  Longer runs require
a construction that remains available even when the run length exceeds
\(|A_i|\).  For every $s\ge15$, put \(Q=|A_i|\), split
\(A_i=B_{i,0}\sqcup B_{i,1}\) as evenly as possible, and let \(E_{i,s}\)
contain all words whose symbols alternate between the two classes: positions
\(1,3,5,\ldots\) use \(B_{i,0}\), and positions \(2,4,6,\ldots\) use
\(B_{i,1}\).  This code is correcting by
Theorem~\ref{thm:runwise-lifting}, applied inside \(A_i\) to the singleton
binary skeleton \(0101\cdots\) and the length-one inner codes
\(B_{i,0}\) and \(B_{i,1}\).  Its cardinality is
\[
 |B_{i,0}|^{\lceil s/2\rceil}|B_{i,1}|^{\lfloor s/2\rfloor}
 =\bigl(2^{-s}+O_s(Q^{-1})\bigr)Q^s
 \qquad\text{for each fixed }s.
\]

Thus the specialized permutation codes in the table handle run lengths up to
\(14\), and the uniform alternating construction supplies an inner code for
every run length from \(15\) onward.

\medskip\noindent\textbf{Weighted enumeration of skeleton runs.}\par\nopagebreak[4]

We take the skeleton family from the explicit prefix-free ternary block set
\(P^{(3)}_{14}\) of Theorem~\ref{thm:rates}.  If
\(p_\ell=|P^{(3)}_{14}\cap\{0,1,2\}^\ell|\), its nonzero length profile is
\[
 (p_3,p_4,p_6,p_8,p_{10},p_{11},p_{12},p_{13},p_{14})
 =(3,8,18,15,22,3,57,27,112).
\]
The complete list of its \(265\) blocks appears in
Table~\ref{tab:p3-265-blocks} of Appendix~\ref{app:blocks}.  By
Theorem~\ref{thm:rates},
\[
 \cD_n=C_n(P^{(3)}_{14})
\]
is a correcting ternary skeleton code for every \(n\).

The size formula after Theorem~\ref{thm:runwise-lifting} shows what remains to
be counted.  For a skeleton
\(c=i_1^{s_1}\cdots i_t^{s_t}\), its lift contributes the product
\(\prod_j|E_{i_j,s_j}|\), so two features matter: the colors of its runs and
their lengths.  Suppose first that \(|A_i|/q\to\alpha_i\).  For every fixed
short-run length \(s\le14\), define
\[
 \delta_s=\frac{M_s}{s!}.
\]
Then
\[
 \frac{|E_{i,s}|}{q^s}
 =\bigl(\delta_s+o(1)\bigr)\alpha_i^s,
\]
whereas a fixed long run has normalized contribution
\((2^{-s}+o(1))\alpha_i^s\).  Thus each occurrence of color \(i\) contributes
a factor \(\alpha_i\), and closing a short run of length \(s\) contributes the
additional density \(\delta_s\).  A long run instead contributes one factor
\(1/2\) for each of its positions.

We use the rational proportions
\[
 (\alpha_0,\alpha_1,\alpha_2)
 =\frac1{10^9}(97{,}779{,}134,795{,}388{,}998,106{,}831{,}868).
\]
They sum to one.  They were selected by numerical search and then fixed as the
displayed rationals; the proof below uses only exact rational arithmetic and
does not require an optimizer.

To sum the run products over every skeleton in \(\cD_n\), start with the
prefix automaton of \(P^{(3)}_{14}\).  Its nodes are the empty prefix
\(\epsilon\) and all nonempty proper prefixes of its blocks.  From a node
\(\eta\), reading a color \(a\in\{0,1,2\}\) moves to \(\eta a\) if this is a
proper block prefix, returns to \(\epsilon\) if \(\eta a\) completes a block,
and is forbidden otherwise.  Since the block set is prefix-free, a path from
\(\epsilon\) back to \(\epsilon\) labeled by \(n\) colors is exactly a word
of \(C_n(P^{(3)}_{14})\).

We augment each prefix node by the information needed to evaluate the current
run.  After at least one color has been read, a state is a triple
\[
 (\eta,c,s).
\]
Here \(\eta\) is the current prefix-automaton node, \(c\) is the last color
read, and \(s\in\{1,\ldots,15\}\) is the current run length, capped at
\(15\); the value \(s=15\) means that the run has length at least \(15\).  A
separate initial state represents the empty word.  Reading a color \(a\)
simultaneously follows the corresponding prefix-automaton edge and updates
\((c,s)\).

The transition weight records the normalized inner-code contribution.  Every
new color \(a\) contributes \(\alpha_a\).  When a color change closes a short
run of length \(s\le14\), the transition also contributes \(\delta_s\).  If a
run reaches length \(15\), the transition from \(s=14\) to \(s=15\) charges
\(2^{-15}\) for its first fifteen positions; each later same-color transition
charges one further factor \(1/2\).  Hence the complete transition table is
\[
\begin{array}{@{}ll@{}}
\toprule
\text{case}&\text{weight}\\
\midrule
\text{first color }a&\alpha_a\\
 a=c,\ s<14&\alpha_a\\
 a=c,\ s=14&\alpha_a2^{-15}\\
 a=c,\ s=15&\alpha_a/2\\
 a\ne c,\ s\le14&\delta_s\alpha_a\\
 a\ne c,\ s=15&\alpha_a\\
\bottomrule
\end{array}
\]
If a complete path ends with a short run of length \(s\), it receives the
terminal factor \(\delta_s\); a final long run receives terminal factor one.
These terminal factors ensure that every run is counted, including the last
one.

Retain the states reachable from the initial state and from which some state
with prefix component \(\epsilon\) is reachable.  Let \(V\) be this set after
the transient initial state is removed, and define \(W_{uv}\) as the sum of
the weights of all legal transitions from \(u\) to \(v\).  The resulting
nonnegative rational matrix is
\[
 W\in\mathbb Q_{\ge0}^{V\times V},
 \qquad |V|=1013,
\]
and it has \(1527\) nonzero entries.  These \(1013\) states are precisely the
feasible combinations of a proper-prefix node and a run memory that remain
after the stated reachability pruning.  For each fixed \(n\), the total weight of the
length-\(n\) paths that finish at a block boundary, including the terminal
factor, is the termwise large-\(q\) limit of \(q^{-n}|L(\cD_n)|\).  Thus the
spectral radius of \(W\) is the limiting normalized exponential growth factor
we need.

For a square matrix \(N\), write \(\rho(N)\) for its spectral radius.  An
exact-integer computation produces a vector \(h\in\mathbb Z_{>0}^{V}\) and
verifies
\[
 W^{\mathsf T}h
 >\frac{100{,}000{,}000}{182{,}560{,}995}\,h
 \qquad\text{coordinatewise}.
\]
The matrix, the vector, and the deterministic verifier are included in the
accompanying reproducibility package~\cite{ReproPackage2026}.  The
Collatz--Wielandt inequalities now give
\[
 \rho(W)>\frac{100{,}000{,}000}{182{,}560{,}995},
 \qquad
 \rho(W)^{-1}<1.82560995.
\]

It remains to connect this limiting weighted model to an actual code over each
large finite alphabet.  Choose nonempty color classes with \(|A_i|\ge14\) and
\[
 \alpha_{i,q}:=\frac{|A_i|}{q}\longrightarrow\alpha_i.
\]
For \(1\le s\le14\), put
\[
 \delta_{i,s}^{(q)}
 =\frac{|E_s(A_i)|}{|A_i|^s},
\]
and, for the balanced split
\(A_i=B_{i,0}\sqcup B_{i,1}\), put
\[
 \beta_{i,q}
 =\frac{\min\{|B_{i,0}|,|B_{i,1}|\}}{|A_i|}.
\]
Define the finite-\(q\) matrix \(W_q\) from the same state graph by replacing
\(\alpha_i\) with \(\alpha_{i,q}\), \(\delta_s\) with
\(\delta_{i,s}^{(q)}\) when a run of color \(i\) closes, \(2^{-15}\) with
\(\beta_{i,q}^{15}\), and \(1/2\) with \(\beta_{i,q}\).  These quantities
converge to their counterparts in \(W\), so \(W_q\to W\) entrywise.

For each \(c\in C_n(P^{(3)}_{14})\), lift its short runs with the codes
\(E_s(A_i)\) and its long runs with the alternating codes, and denote the union
of these lifts by \(L_{q,n}\).  Theorem~\ref{thm:runwise-lifting} makes
\(L_{q,n}\) correcting.  For a long run of color \(i\), the alternating code
has at least \((\beta_{i,q}|A_i|)^s\) words, so complete-path weights in
\(W_q\), including the short terminal factor, lower-bound
\(q^{-n}|L_{q,n}|\).  Every retained state is reachable and can reach a block
boundary.  Standard Perron--Frobenius path enumeration therefore realizes the
spectral growth of \(W_q\), up to bounded initial and terminal factors, on
complete skeleton paths.  It follows that
\[
 \limsup_{n\to\infty}\frac{|L_{q,n}|^{1/n}}q\ge\rho(W_q).
\]
Equivalently, this correcting family has rate at least
\[
 1+\log_q\rho(W_q).
\]
Thus \(\rho(W_q)^{-1}\) is the multiplicative loss factor of the finite-\(q\)
weighted construction, and it converges to \(\rho(W)^{-1}\).

\begin{theorem}[Improved large-alphabet loss factor]
\label{thm:runwise-main}
For the radius-one limited permutation channel,
\[
 C_0^{(q)}
 \ge
 1-\log_q\!\left(\frac{182{,}560{,}995}{100{,}000{,}000}\right)
 +o\!\left(\frac1{\ln q}\right).
\]
The underlying spectral loss factor is strictly below \(1.82560995\).  Thus,
relative to the bound \(1-\log_q2\) of Chee et al.~\cite{CKLNVZ16}, the
construction improves the leading multiplicative loss factor from \(2\) to a
value below \(1.82560995\).
\end{theorem}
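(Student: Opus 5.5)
The plan is to assemble the pieces already established just before the statement, with the main work being the passage from the limiting matrix \(W\) to finite \(q\). Put \(\theta=100{,}000{,}000/182{,}560{,}995\). The exact Collatz--Wielandt certificate \(W^{\mathsf T}h>\theta h\), with \(h>0\) integral, already gives \(\rho(W)>\theta\); this is the strict inequality behind the phrase ``strictly below \(1.82560995\)''. The finite-\(q\) construction \(L_{q,n}\) is correcting by Theorem~\ref{thm:runwise-lifting}. Here the skeleton code \(C_n(P^{(3)}_{14})\) is certified by Theorem~\ref{thm:rates}. The short-run inner codes \(E_s(A_i)\) come from the verified permutation codes together with Lemma~\ref{lem:runwise-insert}, and the long-run alternating codes come from a second application of the lifting theorem. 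So the remaining task is purely quantitative.

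\textbf{Step 1.} I will show \(\rho(W_q)>\theta\) for all sufficiently large \(q\). Rather than invoke continuity of the spectral radius abstractly, I will reuse the same vector \(h\). Since \(W_q\to W\) entrywise on the fixed finite support of \(1527\) entries, we have \(W_q^{\mathsf T}h\to W^{\mathsf T}h\). The strict coordinatewise inequality \(W^{\mathsf T}h>\theta h\) then persists for large \(q\). Because \(W_q\ge0\) and \(h>0\), Collatz--Wielandt gives \(\rho(W_q)>\theta\).

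For the convergence \(W_q\to W\), I need three inputs. First, choose class sizes \(|A_i|=\lfloor\alpha_i q\rfloor\), adjusting one class so the sizes sum to \(q\). Second, \(\delta^{(q)}_{i,s}=M_s\binom{|A_i|}{s}/|A_i|^s\to M_s/s!\). Third, \(\beta_{i,q}\to1/2\).

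\textbf{Step 2.} I will turn the spectral bound into a rate bound. From the path-enumeration argument preceding the statement, the rate of \(L_{q,n}\) is at least \(1+\log_q\rho(W_q)\). Since \(C_0^{(q)}\) is a limsup of the rates of the best codes, this gives \(C_0^{(q)}\ge1+\log_q\rho(W_q)>1-\log_q(1/\theta)\) for large \(q\).

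This is actually stronger than the stated bound with its \(o(1/\ln q)\) error. To present it in the stated form, I will write \(1+\log_q\rho(W_q)=1-\log_q\rho(W)^{-1}+\ln(\rho(W_q)/\rho(W))/\ln q\). By continuity of the spectral radius, \(\rho(W_q)\to\rho(W)\), so the last term is \(o(1/\ln q)\). Combined with \(\rho(W)^{-1}<1.82560995\), this yields the displayed inequality and the comparison with the Chee factor \(2\).

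\textbf{Main obstacle.} The delicate point is the Perron--Frobenius path-enumeration step: showing that complete skeleton paths realize growth \(\rho(W_q)\) and not merely a bound for some irreducible component. If \(W_q\) is not irreducible, I will restrict to the strongly connected component \(C\) carrying the Perron value. The retained states are reachable from the initial state and co-reachable to an \(\epsilon\)-state. So any closed walk in \(C\) can be completed by a fixed-length prefix and suffix with bounded positive weight. The number of length-\(n\) walks inside \(C\) grows like \(\rho(W_q)^n\) up to polynomial factors, possibly along an arithmetic progression of \(n\) if \(C\) is periodic. That suffices for the \(\limsup\).

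Finally, each complete path's weight lower-bounds its lifted count. Distinct skeletons give disjoint lifts. Therefore \(q^{-n}|L_{q,n}|\ge\) the total complete-path weight, which completes the argument.
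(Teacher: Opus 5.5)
Your proposal is correct and follows the paper's route. Correctness comes from the run-wise lifting theorem, the rate bound \(1+\log_q\rho(W_q)\) from weighted path enumeration, \(\rho(W_q)\to\rho(W)\) from entrywise convergence \(W_q\to W\), and \(\rho(W)>100{,}000{,}000/182{,}560{,}995\) from the exact Collatz--Wielandt certificate. You reuse the same vector \(h\) directly for \(W_q\), which gives the bound with no error term for large \(q\), and you handle a possibly reducible \(W_q\) explicitly through its Perron component; both are small refinements of steps the paper states more briefly.
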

\begin{proof}
By Theorem~\ref{thm:runwise-lifting}, every code \(L_{q,n}\) constructed above
is correcting.  The weighted path estimate gives an achieved rate at least
\(1+\log_q\rho(W_q)\).  Since \(W_q\to W\) entrywise,
\(\rho(W_q)=\rho(W)+o(1)\); hence
\[
 1+\log_q\rho(W_q)
 =1+\log_q\rho(W)+o(1/\ln q).
\]
The exact coordinatewise inequality for \(h\), followed by the
Collatz--Wielandt bound, gives
\[
 \rho(W)>\frac{100{,}000{,}000}{182{,}560{,}995},
\]
which yields the stated result.
\end{proof}

The covering result upper-bounds the capacity and therefore lower-bounds the
normalized loss \(\Delta_q\).  The run-wise construction lower-bounds the
capacity and therefore upper-bounds \(\Delta_q\).  Together they give the
following quantitative interval.

\begin{corollary}[Large-alphabet correction sandwich]
\label{cor:correction-sandwich}
With \(\Delta_q=(1-C_0^{(q)})\ln q\) as defined at the beginning of the
section, we have
\[
 \ln\varphi
 \le \liminf_{q\to\infty}\Delta_q
 \le \limsup_{q\to\infty}\Delta_q
 \le \ln\!\left(\frac{182{,}560{,}995}{100{,}000{,}000}\right).
\]
Equivalently,
\[
 \varphi
 \le \liminf_{q\to\infty}q^{1-C_0^{(q)}}
 \le \limsup_{q\to\infty}q^{1-C_0^{(q)}}
 \le 1.82560995.
\]
\end{corollary}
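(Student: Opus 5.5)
The corollary combines two bounds already proved, so the plan is to assemble them and translate between the additive and multiplicative forms.

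For the lower bound, I would invoke Theorem~\ref{thm:covering-phi}. It states exactly that \(\liminf_{q\to\infty}(1-C_0^{(q)})\ln q\ge\ln\varphi\), which is \(\liminf_q\Delta_q\ge\ln\varphi\).

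The middle inequality \(\liminf\le\limsup\) is trivial.

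For the upper bound, I would start from Theorem~\ref{thm:runwise-main}. It gives \(1-C_0^{(q)}\le\log_q(1.82560995)+o(1/\ln q)\). Multiplying by \(\ln q>0\) yields
\[
\Delta_q\le\ln(1.82560995)+\ln q\cdot o(1/\ln q)=\ln(1.82560995)+o(1).
\]
Taking \(\limsup\) then gives the claim.

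The only point needing care is this \(o\)-term. Its meaning in Theorem~\ref{thm:runwise-main} is that it is bounded by \(\epsilon_q/\ln q\) with \(\epsilon_q\to0\). This holds because its proof shows the achieved rate is \(1+\log_q\rho(W_q)\), and \(\rho(W_q)\to\rho(W)\). So the correction term is \((\ln\rho(W_q)-\ln\rho(W))/\ln q\), which is \(o(1)/\ln q\) by continuity of the logarithm at \(\rho(W)>0\). In fact, using \(\rho(W)^{-1}<1.82560995\) directly, \(\limsup_q\Delta_q\le-\ln\rho(W)\), which is strictly below the displayed constant.

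For the equivalent multiplicative form, I would apply the continuous increasing map \(x\mapsto e^x\). Since \(q^{1-C_0^{(q)}}=e^{\Delta_q}\) and the exponential commutes with \(\liminf\) and \(\limsup\), the first chain transfers to the second, with \(e^{\ln\varphi}=\varphi\).

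I expect no real obstacle. The only step to check is the interpretation of the \(o(1/\ln q)\) term, and it is handled directly by the limit \(\rho(W_q)\to\rho(W)\).
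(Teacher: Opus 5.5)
Your proposal is correct and follows the paper's proof exactly: the lower bound is Theorem~\ref{thm:covering-phi}, the upper bound is Theorem~\ref{thm:runwise-main}, and the multiplicative form follows from \(q^{1-C_0^{(q)}}=e^{\Delta_q}\). Your unpacking of the \(o(1/\ln q)\) term, via \(\Delta_q\le-\ln\rho(W_q)\to-\ln\rho(W)\), is a sound and slightly sharper justification of a step the paper leaves implicit.
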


\begin{proof}
The left inequality is Theorem~\ref{thm:covering-phi}.  Theorem~\ref{thm:runwise-main}
gives the right inequality.  The equivalent form follows from
\(q^{1-C_0^{(q)}}=e^{\Delta_q}\).
\end{proof}


\section{Error detection}
\label{sec:detection}

Detection is useful when a receiver can recognize that an error occurred and
request retransmission, without identifying the transmitted codeword.  Thus
two error balls may intersect away from the code, but no nontrivial channel
output may itself be another codeword.  For distinct \(u,v\), correction
forbids \(\B(u)\cap\B(v)\ne\emptyset\), whereas detection forbids
\(v\in\B(u)\).  The latter condition is naturally directed.  For the present
channel the full-word relation is symmetric because every allowed swap is an
involution, but its direction remains important when we examine truncated
prefixes.

We first determine the optimal first-order large-alphabet detection loss.  A
stable labeling of repeated symbols converts each fixed-composition class into
a subset of a permutation space; pulling back permutation codes then gives an
all-alphabet finite-length construction.  We next give explicit weak-zigzag
codes that improve the resulting fixed-alphabet bounds for \(q=3,4\).  For
block-concatenation codes, we define a directed extinction depth that measures
how far a finite certification must inspect each ordered pair of first blocks.
This yields an all-length detecting criterion, certifies a block set left
unresolved by the earlier local test, and gives an exact linear depth law for
the two-block corridor family.

\subsection{The optimal large-alphabet detection loss}
\label{sec:det-rates}

Let \(D_q(n)\) be the largest size of a detecting code in \(\Sigma_q^n\),
and put
\[
 R_{\mathrm{det},q}
 =
 \limsup_{n\to\infty}\frac1n\log_qD_q(n).
\]
As in the correction section, it is useful to normalize the gap from rate one:
\[
 \Delta_q^{\mathrm{det}}
 =(1-R_{\mathrm{det},q})\ln q,
 \qquad
 q^{1-R_{\mathrm{det},q}}=e^{\Delta_q^{\mathrm{det}}}.
\]

The full adjacent-pairing argument appears in
\cite[Sec.~7.1]{PriorWork2026}.  Its idea is short.  Partition the coordinates
into \((1,2),(3,4),\ldots\), and record only the unordered multiset in each
pair.  If two words have the same record, one is obtained from the other by
swapping some of these disjoint pairs, so a detecting code contains at most
one word with each record.  Since a pair has \(\binom{q+1}{2}\) possible
unordered contents, this gives the finite-length bound
\[
 D_q(n)
 \le
 q^{\,n-2\lfloor n/2\rfloor}
 \binom{q+1}{2}^{\lfloor n/2\rfloor}
\]
and hence
\begin{equation}
 \label{eq:det-pairing-upper}
 R_{\mathrm{det},q}
 \le
 \frac12\log_q\binom{q+1}{2}.
\end{equation}
We now construct detecting codes over every alphabet that match the
first-order term of this upper bound.

The construction has two steps.  First consider a permutation word, meaning a
word containing every label in \([n]\) exactly once.  For \(\pi\in S_n\),
write
\(\operatorname{pos}_{\pi}(a)=\pi^{-1}(a)\) for the position of label
\(a\).

After solving this permutation problem, we return to arbitrary words.  Inside
each class of words having the same symbol multiplicities, equal occurrences
are labeled from left to right.  Every effective adjacent swap then becomes an
adjacent swap of distinct labels, allowing the permutation code to be pulled
back to the original alphabet.

\begin{proposition}[The detecting permutation subproblem]
\label{prop:perm-detect}
For \(\pi,\sigma\in S_n\),
\[
 \sigma\in\B(\pi)
 \quad\Longleftrightarrow\quad
 \max_{a\in[n]}
 \left|
   \operatorname{pos}_{\pi}(a)-\operatorname{pos}_{\sigma}(a)
 \right|
 \le1.
\]
Consequently, the maximum size of a detecting permutation code is
\(n!/2^{\lfloor n/2\rfloor}\).
\end{proposition}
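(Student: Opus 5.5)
The plan is to prove the displacement characterization first, then obtain the upper bound on detecting permutation codes from the adjacent-pairing argument, and finally match it with an explicit code.

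\emph{Characterization.} Since \(\sigma\in\B(\pi)\) means \(\sigma=\pi\circ\tau\) for some product \(\tau\) of disjoint adjacent transpositions, every label moves by at most one position, which gives the forward implication. For the converse, put \(\tau=\pi^{-1}\sigma\), so that \(|\tau(i)-i|\le1\) for every \(i\). We show by induction from the left that \(\tau\) is an involution made of disjoint adjacent swaps. Position \(1\) receives its symbol from position \(1\) or \(2\). Suppose it receives it from position \(2\), so \(\tau(1)=2\). The only remaining source for position \(2\) is then position \(1\), because position \(2\) itself is already used and position \(3\) is ruled out by counting the sources available to positions \(1\) and \(2\). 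Hence \(\tau\) swaps positions \(1\) and \(2\), and we recurse on positions \(3,\dots,n\). If instead \(\tau(1)=1\), we recurse on positions \(2,\dots,n\). Because labels are distinct, the displacement of each label is exactly the displacement of its position under \(\tau\), which proves the stated equivalence.

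\emph{Upper bound.} Partition the positions into the pairs \((1,2),(3,4),\dots\), as in the argument behind \eqref{eq:det-pairing-upper}. Since labels are distinct, each record of unordered pair contents is shared by exactly \(2^{\lfloor n/2\rfloor}\) permutations. Any two of these differ by swapping some of the disjoint pairs, so they lie in each other's balls. A detecting code therefore contains at most one permutation per record, which gives the bound \(n!/2^{\lfloor n/2\rfloor}\).

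\emph{Lower bound.} We must choose one representative from every record class so that no two chosen permutations satisfy the characterization above. The natural first try, ``every pair ascending'', already fails at \(n=3\): both \(123\) and \(132\) have an ascending first pair, yet \(132\in\B(123)\) via swap site \(2\). A cyclic choice such as \(\{123,312,231\}\) does work at \(n=3\). This suggests choosing the orientation of pair \(k\) by a rule that depends on the contents of the earlier pairs and on a parity or cyclic statistic of the current content, and then arguing at the leftmost swap site of \(\tau\). If that site lies inside a pair, all earlier contents and the current content agree, but the orientation is flipped, contradicting the rule. If the site is a straddling site \(2k\), the first symbol of pair \(k\) is unchanged while its partner changes, and the rule must forbid this. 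I would try to handle the straddling case with a chain argument over a maximal run of straddling swaps \(2i,2i+2,\dots,2j\), choosing the rule so that the resulting ascent/descent inequalities along the run close into a cyclic contradiction, much as in the \(n=3\) example. If no closed rule emerges, I would fall back on induction on \(n\), extending a code for \(S_{n-2}\) by a final pair and assigning orientations so that conflicts through swap site \(n-2\) are avoided, or on an explicit orientation function verified by the leftmost-difference argument.

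I expect this straddling-swap case, that is, the choice of the orientation rule together with the contradiction along runs of straddling swaps, to be the main obstacle. The characterization and the upper bound are routine.
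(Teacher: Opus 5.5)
\textbf{Genuine gap: the lower bound.} Your characterization and your pairing upper bound are correct. The upper bound is the same clique-partition argument the paper uses for $D_q(n)$: for permutation words each record class has exactly $2^{\lfloor n/2\rfloor}$ members, which gives at most $n!/2^{\lfloor n/2\rfloor}$ codewords.

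The gap is achievability. You never exhibit a detecting permutation code of size $n!/2^{\lfloor n/2\rfloor}$, so the ``Consequently'' clause is only half proved. Moreover, the step you rely on cannot work in the form you state.

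Suppose the orientation of pair $k$ is a function of the earlier contents and the current content. At a leftmost straddling site $2k$, the two words agree in positions $1,\ldots,2k-1$, and pair $k$ reads $(x,y)$ in one word and $(x,z)$ in the other. For the rule at pair $k$ itself to forbid this, each label of the remaining set $R$ could be placed first for at most one $2$-subset of $R$. That is impossible once $|R|\ge4$, because $\binom{|R|}{2}>|R|$. So already for $k=1$ and $n\ge4$, the contradiction must come from later pairs. That propagation is exactly your unspecified chain argument, and it is not automatic. For example, orient each pair so that every even-length prefix has an even number of inversions. This is a valid transversal of the record classes, yet for $n=6$ it contains both $135642$ and $153462$, which differ by the swap sites $\{2,4\}$.

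\textbf{How the paper obtains it.} The paper does not construct the code by hand. After the characterization, it notes that $\pi\mapsto\pi^{-1}$ sends a permutation word to its vector of label positions. Hence detecting permutation codes are exactly permutation arrays of minimum Chebyshev distance $2$. It then invokes \cite[Thm.~7]{BHMS24}, which gives $n!/2^{\lfloor n/2\rfloor}$ as the exact maximum size of such arrays; inverting an optimal array yields the code.

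To close your proof, you need either this translation together with the cited theorem, or an explicit orientation rule with a complete verification of the straddling case.
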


\begin{proof}
A matching of adjacent position swaps moves every label by at most one
position, proving the forward implication.  Conversely, define the
old-to-new position map \(\alpha\in S_n\) by
\[
 \alpha\bigl(\operatorname{pos}_{\pi}(a)\bigr)
 =\operatorname{pos}_{\sigma}(a)
 \qquad(a\in[n]).
\]
The displayed distance condition says \(\abs{\alpha(i)-i}\le1\) for every
position \(i\).  Scan positions from left to right.  At the first unprocessed
position \(i\), either \(\alpha(i)=i\), or \(\alpha(i)=i+1\); in the latter
case bijectivity and the displacement bound force
\(\alpha(i+1)=i\).  Removing the fixed point or this adjacent transposition
and continuing proves that \(\alpha\) is a product of pairwise disjoint
adjacent transpositions.  Hence \(\sigma\in\B(\pi)\).

The map \(\pi\mapsto\pi^{-1}\) sends a permutation word to its vector of label
positions.  By the equivalence just proved, a detecting permutation code is
therefore exactly a permutation array whose distinct elements have Chebyshev
distance at least two.

Bereg, Haghpanah, Malouf, and Sudborough proved that the maximum size of a
permutation array of minimum Chebyshev distance two is
\(n!/2^{\lfloor n/2\rfloor}\)~\cite[Thm.~7]{BHMS24}.  Inverting an optimal
array gives a detecting code of this size, and the same equivalence proves
optimality.
\end{proof}

We next transfer such a permutation code to words with repeated symbols.  A
composition vector \(\boldsymbol\nu=(\nu_0,\ldots,\nu_{q-1})\) records how
many times each alphabet symbol occurs; its type class has
\(n!/(\nu_0!\cdots\nu_{q-1}!)\) words.

\begin{theorem}[All-alphabet detecting sandwich]
\label{thm:detqary}
For every \(q\ge2\) and \(n\ge1\),
\begin{equation}
\label{eq:det-finite-lower}
 D_q(n)
 \ge
 \sum_{\substack{\nu_0+\cdots+\nu_{q-1}=n\\ \nu_i\ge0}}
 \left\lceil
 \frac{n!}
 {\nu_0!\cdots\nu_{q-1}!\,2^{\lfloor n/2\rfloor}}
 \right\rceil
 \ge
 \frac{q^n}{2^{\lfloor n/2\rfloor}}.
\end{equation}
Hence
\begin{equation}
  1-\frac12\log_q2
  \le
  R_{\mathrm{det},q}
  \le
  1-\frac12\log_q2
  +\frac12\log_q\!\left(1+\frac1q\right).
\end{equation}
In particular, as \(q\to\infty\),
\begin{equation}
\label{eq:det-optimal-loss}
 R_{\mathrm{det},q}
 =
 1-\frac{\ln2}{2\ln q}
 +O\!\left(\frac1{q\ln q}\right)
 =
 1-\log_q\sqrt2
 +O\!\left(\frac1{q\ln q}\right).
\end{equation}
Equivalently,
\[
 \Delta_q^{\mathrm{det}}=\ln\sqrt2+O(q^{-1}),
 \qquad
 q^{1-R_{\mathrm{det},q}}=\sqrt2\bigl(1+O(q^{-1})\bigr).
\]
Thus \(\sqrt2\) is the optimal first-order large-alphabet loss factor for
detection.
\end{theorem}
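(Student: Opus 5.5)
The construction works one composition class at a time, followed by a counting argument and then the asymptotics.

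\emph{Construction.} Fix a composition \(\boldsymbol\nu\) with \(\sum_i\nu_i=n\). Label each word \(w\) of this type class stably. Symbol \(i\) gets the labels \(N_i+1,\ldots,N_i+\nu_i\), where \(N_i=\nu_0+\cdots+\nu_{i-1}\). Its occurrences receive these labels from left to right. This gives an injection \(\lambda_{\boldsymbol\nu}\) from the type class into \(S_n\). The image consists exactly of the permutation words in which, for each \(i\), the labels of symbol \(i\) appear in increasing order.

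Now let \(w'=\sigma_S(w)\) with \(w'\ne w\), and discard the sites of \(S\) that swap equal symbols; the output is unchanged. Every remaining site swaps two distinct symbols. Such a swap does not change the left-to-right order among equal symbols. Hence \(\lambda_{\boldsymbol\nu}(w')\) equals the same reduced swap set applied to \(\lambda_{\boldsymbol\nu}(w)\). Therefore \(w'\in\B(w)\) implies \(\lambda_{\boldsymbol\nu}(w')\in\B(\lambda_{\boldsymbol\nu}(w))\). It follows that the preimage of a detecting permutation code is a detecting code inside the class. Distinct classes cannot interfere, because the channel preserves composition. So the union over all \(\boldsymbol\nu\) of per-class detecting codes is detecting in \(\Sigma_q^n\).

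\emph{Counting.} Let \(\Pi\) be a detecting permutation code of size \(n!/2^{\lfloor n/2\rfloor}\), which exists by Proposition~\ref{prop:perm-detect}. Its translates \(\Pi\tau\), for \(\tau\in S_n\), are also detecting, since relabeling preserves the ball relation. Averaging \(|\lambda_{\boldsymbol\nu}(\text{class})\cap \Pi\tau|\) over \(\tau\) gives \(|\text{class}|\cdot|\Pi|/n!\). So some translate achieves at least the ceiling of \(n!/(\nu_0!\cdots\nu_{q-1}!\,2^{\lfloor n/2\rfloor})\). Summing over \(\boldsymbol\nu\), dropping the ceilings, and applying the multinomial theorem gives \(q^n/2^{\lfloor n/2\rfloor}\). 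I expect this averaging step to be the only delicate point: it must use a \emph{single} translate per class, and those translates are chosen independently across classes. That independence is legitimate exactly because of the class separation noted above.

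\emph{Asymptotics.} The lower bound on \(R_{\mathrm{det},q}\) is \(\frac1n\log_q(q^n2^{-n/2})\to1-\frac12\log_q2\). For the upper bound, use \eqref{eq:det-pairing-upper} together with \(\binom{q+1}2=\frac{q^2}{2}(1+\frac1q)\). This gives \(\frac12\log_q\binom{q+1}{2}=1-\frac12\log_q2+\frac12\log_q(1+\frac1q)\). Since \(\log_q(1+1/q)=O(1/(q\ln q))\), the expansion \eqref{eq:det-optimal-loss} follows. Multiplying by \(\ln q\) yields \(\Delta_q^{\mathrm{det}}=\ln\sqrt2+O(q^{-1})\), and exponentiating gives the loss-factor form.
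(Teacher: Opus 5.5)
Your proof is correct and follows the paper's argument essentially step for step: stable left-to-right labeling inside each type class, reduction to the effective swaps so that a detecting permutation code pulls back to a detecting subset of the class, averaging over relabelings to select one good translate per class, the union over classes, and the pairing bound for the upper estimate. One notational point: the translates must relabel values, i.e.\ they must be \(\{\tau\circ\pi:\pi\in\Pi\}\), as in the paper's \(gA\); composing on the right (\(\pi\circ\tau\)) permutes positions and need not preserve detection, though your word ``relabeling'' shows you intended the value-relabeling version.
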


\begin{proof}
Fix a composition vector
\(\boldsymbol\nu=(\nu_0,\ldots,\nu_{q-1})\), and let
\[
 \cT_{\boldsymbol\nu}
 =\{x\in\Sigma_q^n:\text{symbol }a\text{ occurs }\nu_a
   \text{ times for every }a\}
\]
be its type class.  Distinguish equal occurrences stably from left to right:
the \(k\)-th occurrence of symbol \(a\) receives the label \((a,k)\).  For
example, \(1011\) is labeled by the sequence
\(((1,1),(0,1),(1,2),(1,3))\).  Choose once and for
all a bijection from the \(n\) labels of this type to \([n]\).  Let
\(\ell_{\boldsymbol\nu}(x)\in S_n\) be the resulting permutation word and put
\[
 \Omega_{\boldsymbol\nu}
 =\ell_{\boldsymbol\nu}(\cT_{\boldsymbol\nu}).
\]
Thus the stable lift is a bijection
\[
 \ell_{\boldsymbol\nu}:\cT_{\boldsymbol\nu}
 \longrightarrow
 \Omega_{\boldsymbol\nu}\subseteq S_n.
\]
If \(y=\sigma_S(x)\), define
\[
 S_{\mathrm{eff}}=\{i\in S:x_i\ne x_{i+1}\}.
\]
The deleted swaps exchange equal symbols and hence do not change the word.
Every remaining swap exchanges unequal symbols, so it preserves the
left-to-right order of all occurrences carrying the same symbol.  Therefore
\[
 \ell_{\boldsymbol\nu}(y)
 =
 \sigma_{S_{\mathrm{eff}}}
 \bigl(\ell_{\boldsymbol\nu}(x)\bigr).
\]
If \(x\ne y\), then \(S_{\mathrm{eff}}\ne\varnothing\), so the two stable
lifts are distinct and are related by a nonempty matching of adjacent
transpositions.  Consequently, any detecting subset of
\(\Omega_{\boldsymbol\nu}\) pulls back to a detecting subset of the type
class.

Let \(A\subseteq S_n\) be a detecting permutation code of size
\(n!/2^{\lfloor n/2\rfloor}\), supplied by
Proposition~\ref{prop:perm-detect}.  For \(g\in S_n\), let
\(gA=\{g\circ\pi:\pi\in A\}\); this changes the names of the labels but not
their positions.  Value relabeling commutes with coordinate swaps, hence every
\(gA\) remains detecting.  Count the pairs
\((g,\rho)\) with \(\rho\in\Omega_{\boldsymbol\nu}\cap gA\).  For each
\(\rho\in\Omega_{\boldsymbol\nu}\) and \(\pi\in A\), exactly one relabeling,
namely \(g=\rho\circ\pi^{-1}\), satisfies \(g\circ\pi=\rho\).  Dividing this
count by the \(n!\) choices of \(g\) gives
\[
 \frac1{n!}\sum_{g\in S_n}
 \abs{\Omega_{\boldsymbol\nu}\cap gA}
 =
 \frac{
   \abs{\Omega_{\boldsymbol\nu}}\abs A
 }{n!}
 =
 \frac{
   \abs{\cT_{\boldsymbol\nu}}
 }{2^{\lfloor n/2\rfloor}}.
\]
Choose a translate attaining at least the average and pull its intersection
back through \(\ell_{\boldsymbol\nu}\).  Since its size is an integer, it is
at least the ceiling in \eqref{eq:det-finite-lower}.

The channel preserves type, so these subsets may be united over all
\(\boldsymbol\nu\).  Summing the type-class sizes gives \(q^n\), proving
the finite-length lower bound and then
\(R_{\mathrm{det},q}\ge1-\frac12\log_q2\).
Finally, rewrite \eqref{eq:det-pairing-upper} as
\[
 \frac12\log_q\binom{q+1}{2}
 =
 1-\frac12\log_q2
 +\frac12\log_q\!\left(1+\frac1q\right).
\]
The last term is \(O(1/(q\ln q))\), which proves the sandwich and
\eqref{eq:det-optimal-loss}.
\end{proof}

Every correcting code is also detecting, so
Theorem~\ref{thm:runwise-main} already supplies detecting codes with asymptotic
loss factor below \(1.82560995\).  The detection-specific type-class lifting
improves that factor to \(\sqrt2\), to first order, and the pairing upper bound
proves that the coefficient \(\ln2/2\) is optimal.

\subsection{Explicit fixed-alphabet detection}

The type-class argument is uniform and first-order sharp, but it selects a
code by averaging.  For small alphabets a simple local rule gives stronger
explicit bounds: alternate the direction of a weak inequality, so that every
effective adjacent swap reverses one required inequality.

\begin{theorem}[Weak-zigzag construction]
\label{thm:zigzag}
Order \(\Sigma_q=\{0,1,\ldots,q-1\}\), and define
\[
 Z_{q,n}=\{x\in\Sigma_q^n:x_1\le x_2\ge x_3\le x_4\ge\cdots\}.
\]
Then \(Z_{q,n}\) is detecting for every \(n\), and
\begin{equation}
\label{eq:zigzag-rate}
 R_{\mathrm{det},q}\ge
 \log_q\!\left(\frac{1}{2\sin\!\bigl(\pi/(4q+2)\bigr)}\right).
\end{equation}
In particular,
\[
 R_{\mathrm{det},3}\ge0.736917768,
 \qquad
 R_{\mathrm{det},4}\ge0.762880412.
\]
\end{theorem}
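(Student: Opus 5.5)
The proof has two independent parts: the detection property, proved by a local argument at the leftmost effective swap, and the rate bound, obtained from a transfer-matrix count of weak zigzags.

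\textbf{Detection.} Let \(x\in Z_{q,n}\), let \(S\) be a swap set, and put \(y=\sigma_S(x)\ne x\). Keep only the effective swaps, \(S_{\mathrm{eff}}=\{i\in S:x_i\ne x_{i+1}\}\); this set is nonempty, and \(y=\sigma_{S_{\mathrm{eff}}}(x)\). Let \(i\) be the smallest element of \(S_{\mathrm{eff}}\). Then \((y_i,y_{i+1})=(x_{i+1},x_i)\) with \(x_i\ne x_{i+1}\). The zigzag pattern imposes a weak inequality between positions \(i\) and \(i+1\): \(\le\) if \(i\) is odd and \(\ge\) if \(i\) is even. Since \(x\) satisfies this weak inequality and \(x_i\ne x_{i+1}\), it holds strictly in \(x\). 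Swapping the two entries reverses it strictly in \(y\), so \(y\) violates the constraint at \(i\) and \(y\notin Z_{q,n}\). Swaps at other sites cannot interfere, because a swap set contains no consecutive sites and so no other swap touches positions \(i\) or \(i+1\). Hence no codeword lies in the ball of another distinct codeword.

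\textbf{Rate.} I would count \(|Z_{q,n}|\) with a transfer matrix over consecutive pairs. Let \(J\) be the \(q\times q\) matrix with \(J_{ab}=1\) if \(a\le b\) and \(0\) otherwise. A word of length \(2m\) is counted by \(\mathbf 1^{\mathsf T}(JJ^{\mathsf T})^{m-1}J\mathbf 1\), so the growth rate per symbol is \(\sqrt{\rho(JJ^{\mathsf T})}\), which is the largest singular value \(s_{\max}(J)\). Since \(J\) is invertible, \(1/s_{\max}(J)^2\) is the smallest eigenvalue of \(J^{-\mathsf T}J^{-1}\), where \(J^{-1}=I-N\) and \(N\) is the superdiagonal shift. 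The matrix \((I-N)^{\mathsf T}(I-N)\) is tridiagonal with diagonal \((1,2,\ldots,2)\) and off-diagonal entries \(-1\), a path Laplacian with one Dirichlet end and one Neumann end. Its eigenvalues are the known values \(4\sin^2\!\bigl((2k-1)\pi/(4q+2)\bigr)\), \(k=1,\ldots,q\). Taking \(k=1\) gives \(s_{\max}(J)=1/\bigl(2\sin(\pi/(4q+2))\bigr)\).

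To go from this count to the rate: \(J\) is nonnegative and \(JJ^{\mathsf T}\) is positive, so Perron--Frobenius gives \(|Z_{q,n}|^{1/n}\to s_{\max}(J)\) along even \(n\), with bounded prefactors. Therefore \(R_{\mathrm{det},q}\ge\log_q s_{\max}(J)\). Evaluating at \(q=3,4\) gives the stated numerics.

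I expect the one real obstacle to be the eigenvalue step, namely identifying the mixed-boundary tridiagonal spectrum. I would verify it directly with the ansatz \(v_j=\sin\bigl((2j-1)\theta\bigr)\) or \(\cos\)-type vectors, checking both boundary rows, where \(\theta=\pi/(4q+2)\). If that becomes messy, an alternative route is to use the known count of alternating words, or Chebyshev polynomial recurrences.
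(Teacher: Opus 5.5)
Your proposal is correct and follows essentially the same route as the paper: detection because one effective swap reverses a zigzag inequality that must be strict, and the rate via the transfer matrix \(U\) (your \(J\)), whose Gram matrix \(UU^{\mathsf T}\) has a tridiagonal mixed-boundary inverse with eigenvalues \(4\sin^2((2j-1)\pi/(4q+2))\). Your factorization \(J^{-1}=I-N\) and the cosine ansatz \(v_j=\cos((2j-1)\theta)\) supply details the paper only states, choosing the leftmost effective swap is harmless but not needed, and using even \(n\) alone suffices because \(R_{\mathrm{det},q}\) is a \(\limsup\).
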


\begin{proof}
Let \(x\in Z_{q,n}\) and \(y=\sigma_S(x)\ne x\).  Some selected site
\(i\in S\) is effective, so \(x_i\ne x_{i+1}\).  No other selected site
touches positions \(i,i+1\).  If the zigzag condition there is
\(x_i\le x_{i+1}\), it is therefore strict, whereas the swapped output has
\(y_i>y_{i+1}\).  The case of a required \(\ge\) inequality is symmetric.
Thus \(y\notin Z_{q,n}\), which proves detection.  Weak inequalities are
essential here because swapping two equal adjacent symbols does not produce an
error.

For the rate, let \(U\) be the \(q\times q\) matrix with, for
\(a,b\in\Sigma_q\), \(U_{ab}=1\) when \(a\le b\) and \(U_{ab}=0\)
otherwise, and let \(\boldsymbol 1\) be the all-one column vector.  Thus
\(U\) records one upward weak step and \(U^{\mathsf T}\) records one downward
weak step.  Multiplying these transition matrices and summing over the first
and last symbols gives, for even lengths \(2m\) with \(m\ge1\) and odd
lengths \(2m+1\) with \(m\ge0\), respectively,
\[
 |Z_{q,2m}|=\boldsymbol 1^{\mathsf T}(UU^{\mathsf T})^{m-1}U\boldsymbol 1,
 \qquad
 |Z_{q,2m+1}|=\boldsymbol 1^{\mathsf T}(UU^{\mathsf T})^m\boldsymbol 1.
\]
The inverse of \(UU^{\mathsf T}\) is tridiagonal, with off-diagonal entries
\(-1\), diagonal entries \(2\), and one endpoint entry \(1\).  Its eigenvalues
are
\[
 4\sin^2\!\left(\frac{(2j-1)\pi}{4q+2}\right),
 \qquad 1\le j\le q.
\]
The matrix \(UU^{\mathsf T}\) is positive.  Perron--Frobenius shows that its
spectral radius controls the exponential growth per two symbols.  Since
\(\rho(UU^{\mathsf T})=\|U\|_2^2\), where \(\|U\|_2\) is the largest singular
value of \(U\), the growth per symbol is
\[
 \lim_{n\to\infty}|Z_{q,n}|^{1/n}
 =\|U\|_2
 =\frac{1}{2\sin\!\bigl(\pi/(4q+2)\bigr)},
\]
which proves~\eqref{eq:zigzag-rate}.
\end{proof}

Table~\ref{tab:det-rates} records the stronger of our two lower bounds for each
of \(q=3,4,5\), together with the pairing upper bound.  Weak zigzags are better
for \(q=3,4\), whereas type-class lifting is better for \(q=5\).

\begin{table}[!t]
\caption{Detecting-rate bounds, outward-rounded to six decimal places.}
\label{tab:det-rates}
\centering
\small
\setlength{\tabcolsep}{5pt}
\begin{tabular}{@{}cclc@{}}
\toprule
\(q\) & lower bound & construction & pairing upper bound\\
\midrule
3 & 0.736917 & weak zigzag & 0.815465\\
4 & 0.762880 & weak zigzag & 0.830483\\
5 & 0.784661 & type-class lifting & 0.841304\\
\bottomrule
\end{tabular}
\end{table}

\subsection{Directed extinction}

We now return from unrestricted fixed-length codes to a fixed block set
\(P\).  The task is offline certification: from the finite description of
\(P\), we verify once that no effective channel error maps one word of
\(C_n(P)\) to another, simultaneously for every \(n\).  After common
initial blocks are cancelled, the verifier follows possible prefixes beginning
with two distinct first blocks.  Detection is directed: one prefix is treated
as the transmitted word, and the other as the candidate codeword that the
channel output might equal.

The depth introduced below is the number of transmitted symbols that this
all-length certificate may have to inspect before such a directed ambiguity
disappears.  It is therefore a complexity parameter of code certification;
decoder complexity is a separate question.

\begin{definition}[Directed live sets and extinction depth]
\label{def:dlive}
For \(n\ge1\) and \(u,v\in\Sigma_q^n\), the ordered pair \((u,v)\) is
\emph{directed-live} if
\[
 \prefx_{n-1}(v)\in\Tball(u),
\]
and \emph{directed-extinct} otherwise.  Equivalently, some allowed channel
output from \(u\) agrees with the candidate word \(v\) in the first \(n-1\)
positions.  The final position is not yet tested because a future adjacent
swap across the right edge of the current window may still change it.

For distinct \(a,b\in P\), put
\[
 \Live_n^{\to}(a,b)=
 \{(u,v):u\in\Pref_n(aP^*),\ v\in\Pref_n(bP^*),
          \ \prefx_{n-1}(v)\in\Tball(u)\},
\]
where these prefix sets contain all legal length-\(n\) continuations beginning
with \(a\) and \(b\), even when the window ends inside a later block.  Define
\[
 \dko(a,b)=\min\{n\ge1:\Live_n^{\to}(a,b)=\emptyset\},
\]
 with value \(\infty\) if the set never empties.  The detecting criterion below
 needs the shorter-first orientation for unequal block lengths and both
 orientations for equal lengths.  When all these relevant depths are finite,
 write
\[
 K_{\mathrm{det},0}(P)=
 \max_{\substack{a,b\in P,\ a\ne b\\|a|\le|b|}}\dko(a,b).
\]
\end{definition}

Liveness is persistent under taking shorter prefixes.  Indeed, a witness at
depth \(m\) gives, by Lemma~\ref{lem:window-import}, a witness for the first
\(n\) symbols for every \(n\le m\).  Consequently, once
\(\Live_n^{\to}(a,b)\) is empty, every later level is empty as well, and
\(\dko(a,b)\) is the first window length that certifies this ordered
first-block pair.  For one fixed equal-length pair, directed extinction is
precisely the same-length detecting test of
\cite[Def.~13]{PriorWork2026}.

\begin{theorem}[Depth-\(K\) detecting criterion]
\label{thm:detdepthK}
Let \(P\subseteq\Sigma_q^*\) be a finite nonempty set of nonempty blocks, and let
\(K\ge1\).  Assume that, for every ordered pair of distinct blocks
\((a,b)\) with \(|a|\le|b|\),
\[
 \tag{DW}
 \Live_K^{\to}(a,b)=\emptyset.
\]
Then \(C_n(P)\) is detecting for every \(n\).
\end{theorem}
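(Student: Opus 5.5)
The plan is to follow the proof of Theorem~\ref{thm:depthK}, replacing ball intersection by directed membership throughout. Suppose, for contradiction, that distinct \(c,c'\in C_n(P)\) satisfy \(c'\in\B(c)\). Fix block factorizations of both words and cancel every common initial block using the directed first-block reduction of Lemma~\ref{lem:cancel-import}. This produces equal-length words
\[
 x\in aP^*,\qquad y\in bP^*,\qquad y\in\B(x),\qquad a\ne b,\qquad |x|=|y|=n'>0 .
\]
Equal lengths are preserved because each cancellation removes the same block from both sides. Moreover \(n'>0\), since otherwise \(c=c'\).

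To keep the argument self-contained, I will re-derive the cancellation step directly. Write \(c=wt\) and \(c'=ws\) with \(|w|=m\), and take \(c'=\sigma_S(c)\).
\begin{itemize}
\item If \(m\in S\), then \(c'_m=c_{m+1}\). But also \(c'_m=w_m=c_m\), so the swap at site \(m\) exchanges equal symbols. We can delete \(m\) from \(S\) without changing the output.
\item Once no swap straddles the seam at site \(m\), every remaining swap lies inside \(w\) or inside \(t\). The swaps inside \(w\) must fix \(w\), so the swaps inside \(t\) give \(s\in\B(t)\).
\end{itemize}

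Next comes the orientation step. Condition (DW) is only assumed for \(|a|\le|b|\), so we may need to swap the roles of the two sides. Every \(\sigma_S\) is an involution, so \(y\in\B(x)\) if and only if \(x\in\B(y)\). After exchanging \((x,a)\) with \((y,b)\) if necessary, we may therefore assume \(|a|\le|b|\), with \(x\) the transmitted word and \(y\) the received candidate. When the two blocks have equal length, both orientations are covered by (DW), so no choice is needed.

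We then reach the window, splitting into two cases.
\begin{itemize}
\item If \(n'\ge K\), apply Lemma~\ref{lem:window-import} with \(c=x\), \(c'=y\) and \(\ell=K\). This gives \(\prefx_{K-1}(y)\in\Tball(\prefx_K(x))\). The prefixes \(\prefx_K(x)\in\Pref_K(aP^*)\) and \(\prefx_K(y)\in\Pref_K(bP^*)\) therefore form a pair in \(\Live_K^{\to}(a,b)\), contradicting (DW).
\item If \(n'<K\), choose \(p\in P\) and \(m\ge1\) with \(n'+m|p|\ge K\). The directed part of Lemma~\ref{lem:extend-collision} gives \(yp^m\in\B(xp^m)\). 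These words still lie in \(aP^*\) and \(bP^*\), so the first case applies and yields the same contradiction.
\end{itemize}

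The main obstacle is the directed cancellation, and specifically the swap at the seam. The argument above handles it, since a seam swap would have to exchange equal symbols. The orientation reduction also needs care: it is exactly what justifies checking only the shorter-first ordered pairs.
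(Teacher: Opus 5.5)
Your proof is correct and follows the paper's argument step for step. You cancel common initial blocks, use the involution property to reorient so that \(|a|\le|b|\), pad both sides with a common power \(p^m\) when \(n'<K\), and apply Lemma~\ref{lem:window-import} at window length \(K\) to obtain an element of \(\Live_K^{\to}(a,b)\). The only addition is your direct seam-swap derivation of the directed cancellation: a swap at the seam must exchange equal symbols and can be dropped. This is a valid self-contained substitute for citing Lemma~\ref{lem:cancel-import}.
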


For equal-length blocks, the hypothesis includes both orders \((a,b)\) and
\((b,a)\).  For unequal lengths, only the shorter-first order is needed: the
full-word relation \(v\in\B(u)\) is symmetric because every allowed channel
permutation is an involution, even though the truncated live relation is
directed.

\begin{proof}
Suppose \(c'\in\B(c)\) for distinct codewords.  Fix block factorizations and
cancel all common initial blocks using the directed reduction in
Lemma~\ref{lem:cancel-import}.  We obtain equal-length words
\[
 x\in aP^*,\qquad y\in bP^*,\qquad y\in\B(x),
\]
where \(a\ne b\) and their common length \(n'\) is positive.  If
\(|a|>|b|\), exchange \(x\) and \(y\); this preserves the channel relation
because the swaps are involutions.  We may therefore assume
\(|a|\le|b|\).

If \(n'\ge K\), set \(X=x\) and \(Y=y\).  If \(n'<K\), choose any
\(p\in P\) and append the same power \(p^m\) until the common length is at
least \(K\); set \(X=xp^m\) and \(Y=yp^m\).  Then
Lemma~\ref{lem:extend-collision} gives \(Y\in\B(X)\).  In either case,
Lemma~\ref{lem:window-import} at window length \(K\) yields
\[
 \prefx_{K-1}(Y)\in\Tball(\prefx_K(X)).
\]
The two length-\(K\) prefixes belong to \(\Pref_K(aP^*)\) and
\(\Pref_K(bP^*)\), respectively, so they form an element of
\(\Live_K^{\to}(a,b)\), contradicting \textup{(DW)}.
\end{proof}

The earlier two-stage detecting criterion
\cite[Def.~14 and Thm.~17]{PriorWork2026} can be stated in the present
language without importing any additional notation.  It tests an ordered
equal-length pair at depth \(|a|=|b|\).  For \(|a|<|b|\), it first tests the
length-\(|a|\) prefixes and, if they are still live, tests every legal
continuation at depth \(|b|\).  Thus the old test requires each pair to become
extinct no later than \(\max\{|a|,|b|\}\).  Theorem~\ref{thm:detdepthK}
removes this pairwise cutoff: a pair may survive across further block
boundaries, provided that it becomes extinct by one finite common horizon
\(K\).

\medskip\noindent\textbf{Beyond the two-stage cutoff.}
Let \(P_{\mathrm{det}}\) denote the explicit 50-block detecting set listed in
\cite[Sec.~7.2]{PriorWork2026}.  For unequal block lengths, retain only the
shorter-first orientation; for equal lengths, retain both orientations.  This
gives \(1482\) canonical ordered first-block pairs.  An exact exhaustive
traversal finds that all of them become extinct by depth
\(11=\lmax\)~\cite{ReproPackage2026}.  Ten ordered pairs survive beyond the
cutoff used by the old test---two equal-length tests and eight unequal-length
tests---which is why that criterion leaves the set unresolved.  The new depth
criterion follows those pairs farther, certifies the entire all-length family,
and recovers its detecting rate \(0.756707\).  This gives a concrete strict
extension of the previous local certificate.

\subsection{Linear depth on the corridor}
\label{sec:detlin}

Recall the binary family from Theorem~\ref{thm:corridor}:
\[
\begin{aligned}
 a_k&=00(10)^{k-3}1111, & |a_k|&=2k,\\
 b_k&=(01)^{k-1}1110,   & |b_k|&=2k+2,
\end{aligned}
\qquad P_k=\{a_k,b_k\}.
\]
The long alternating parts of \(a_k\) and \(b_k\) occur in opposite phases.
They form a corridor in which local swaps can keep the transmitted prefix
consistent with the target prefix across more than one block boundary.  For
example,
\[
 a_4=00101111,\qquad b_4=0101011110.
\]

The next theorem determines both directed depths for every \(k\ge3\).  This is
stronger than a finite-range computation.  Since \(|a_k|<|b_k|\), the
shorter-first orientation is the one used by the all-length criterion.  On
the range \(3\le k\le60\), where Theorem~\ref{thm:corridor} gives correction
depth \(2k^2+4k-1\), its exact detection-certification depth is only \(4k\).  The difference
comes from the obstruction being one-sided: correction compares two
independently permuted channel outputs, whereas detection compares the output
of one transmitted word with one fixed target codeword.

\begin{theorem}[Family detection law]
\label{thm:detlaw}
For every \(k\ge3\),
\[
 \dko(a_k,b_k)=4k,
 \qquad
 \dko(b_k,a_k)=4k+1.
\]
Consequently, \(K_{\mathrm{det},0}(P_k)=4k\), and
Theorem~\ref{thm:detdepthK} shows that \(C_n(P_k)\) is detecting for every
\(n\).
\end{theorem}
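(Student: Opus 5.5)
The plan is to prove both depth formulas by a direct word-level analysis that is uniform in \(k\), rather than by traversing the configuration graph. The guiding observation is that directed liveness is far more rigid than undirected liveness, because the target \(v\) is a fixed word and not a second channel output. Suppose we scan positions from left to right, holding one unresolved symbol of \(u\) and the previous-swap flag, as in Definition~\ref{def:one-sided-output}. At each step the settled output symbol must equal the corresponding symbol of \(v\). Hence at every position at most two swap decisions survive, and when \(u_i\ne v_i\) (after accounting for the held symbol) the swap at site \(i\) is forced. I would first record this as a small lemma: for fixed \(u\in\Pref_n(aP^*)\) and \(v\in\Pref_n(bP^*)\), the set of admissible partial swap patterns is computed by this deterministic-up-to-equal-symbols scan. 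I would also note that swapping two equal symbols never helps, so we may assume that every selected site is effective.

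Next I would run the scan on the first blocks. Write \(a_k=0\,0\,(10)^{k-3}\,1111\) and \(b_k=(01)^{k-1}1110\). The alternating part of \(a_k\), read from position \(2\), is the alternating part of \(b_k\) shifted by one position. So for \((u,v)\) with \(u\) starting with \(a_k\) and \(v\) starting with \(b_k\), the scan forces the chain of effective swaps at sites \(2,4,6,\ldots\) through the corridor. At the tail \(1111\) against \(\ldots 1110\), the held-symbol record ends in one specific state, essentially \((\gamma,\beta)\) with a pending \(0\) demanded by \(v\) that \(u\) has not yet supplied. The same computation is done with the roles reversed for the orientation \((b_k,a_k)\).

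At this point both sides enter their second blocks, which may be \(a_k\) or \(b_k\) independently, and in the reversed orientation also their third blocks. This gives only a bounded number of continuation combinations: \(a_ka_k\), \(a_kb_k\), \(b_ka_k\), \(b_kb_k\) on the two sides, truncated at the relevant depth. For each combination I would continue the forced scan and locate the first position where no decision is admissible. The offsets between the two parsings are \(2k\) against \(2k+2\). After the first block boundary the misalignment becomes \(2\), and the corridor can survive only if the new alternating stretches realign with the pending held symbol. I expect exactly one combination for \((a_k,b_k)\) to survive through depth \(4k-1\), giving an explicit live pair at depth \(4k-1\). Every combination should then die at window \(4k\), meaning that the \((4k-1)\)st settled output symbol disagrees. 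For \((b_k,a_k)\) the same bookkeeping should give survival through \(4k\) and death at \(4k+1\). Each claim is checked by reading off explicit symbols of the words \(a_k,b_k\) at indices expressed in \(k\), which is what makes the argument valid for all \(k\ge3\). The small case \(k=3\), where \((10)^{k-3}\) is empty, should be checked separately.

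Finally, persistence under prefixes, noted after Definition~\ref{def:dlive}, converts ``live at \(4k-1\), empty at \(4k\)'' into \(\dko(a_k,b_k)=4k\), and similarly \(\dko(b_k,a_k)=4k+1\). Since \(|a_k|<|b_k|\), only the shorter-first orientation enters \(K_{\mathrm{det},0}(P_k)\), so \(K_{\mathrm{det},0}(P_k)=4k\), and Theorem~\ref{thm:detdepthK} gives detection for every \(n\). I expect the main obstacle to be the boundary bookkeeping just past the first and second block ends. There the truncated-window convention, which leaves the last position untested, the held symbol, and the previous-swap flag all interact with the \(1111/1110\) tails. Ruling out every alternative continuation uniformly in \(k\) requires care, particularly for the off-by-one difference between the two orientations. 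I would first verify the pattern by hand for \(k=3,4\) to fix the exact surviving combination, and then write the general indices.
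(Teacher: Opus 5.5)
Your plan is correct and is essentially the paper's proof: a left-to-right scan that is deterministic once ineffective equal-symbol swaps are deleted, a case split on the second blocks \(x,y\in\{a_k,b_k\}\), explicit surviving swap sets one position before extinction, a separate check at \(k=3\), and persistence under prefixes. Your predicted outcomes match the paper's case table. For \((a_k,b_k)\) the continuations die at depths \(2k+2\) (\(x=a_k\)), \(4k-1\) (\(x=b_k,\ y=a_k\)) and \(4k\) (\(x=y=b_k\)); for \((b_k,a_k)\) they die at \(2k+3\) (\(y=a_k\)), \(4k\) (\(x=a_k,\ y=b_k\)) and \(4k+1\) (\(x=y=b_k\)).
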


\begin{proof}
Write \(a=a_k\) and \(b=b_k\).  For a block \(x\in\{a,b\}\), the notation
\(x^\infty=xxx\cdots\) denotes its infinite periodic repetition; below we use
only finite prefixes of this word.  After the first block on each side, only
the choice \(x\) of the next transmitted block and the choice \(y\) of the
next target block can affect the windows under consideration.

A directed-live pair at depth \(n\) requires a swap set on the transmitted
prefix whose output agrees with the fixed target in positions
\(1,\ldots,n-1\).  Ineffective swaps of equal symbols may be deleted.  The
remaining decisions can be scanned from left to right.  If site \(i-1\) was
selected, then output position \(i\) is already forced to use input position
\(i-1\).  Otherwise one may either leave position \(i\) unchanged, producing
the current input symbol, or select site \(i\), producing the next input
symbol and consuming both positions.  When exactly one choice matches the
target symbol, that choice is forced; when neither choice matches, the pair
becomes extinct at depth \(i+1\).

For \(k=3\), this forced scan gives the first empty depth in each of the eight
continuation cases:
\[
\begin{array}{@{}c|cccc@{}}
\text{first blocks}\backslash(x,y)
 &(a,a)&(a,b)&(b,a)&(b,b)\\ \hline
(a,b)&8&8&11&12\\
(b,a)&9&12&9&13
\end{array}
\]
The row maxima are \(12=4k\) and \(13=4k+1\), respectively.  We henceforth
assume \(k\ge4\).  The same scan has the following six outcomes; the remainder
of the proof verifies each row and supplies a surviving swap set one position
before extinction.
\[
\begin{array}{@{}ccl@{}}
\text{first blocks}&\text{continuation condition}&\text{first empty depth}\\ \hline
(a,b)&x=a&2k+2\\
(a,b)&x=b,\ y=a&4k-1\\
(a,b)&x=b,\ y=b&4k\\
(b,a)&y=a&2k+3\\
(b,a)&x=a,\ y=b&4k\\
(b,a)&x=b,\ y=b&4k+1
\end{array}
\]

Below, saying that a displayed swap set gives survival through depth \(d\)
means that it makes the transmitted output agree with the target in the first
\(d-1\) positions, and hence proves that the corresponding directed live set
at depth \(d\) is nonempty.

\medskip\noindent\emph{The order \((a_k,b_k)\).}
At depth \(4k\), the relevant prefixes of every legal pair beginning with
\((a_k,b_k)\) are represented by
\[
 \bigl(\prefx_{4k}(a_kx^\infty),
       \prefx_{4k}(b_ky^\infty)\bigr),
 \qquad x,y\in\{a_k,b_k\}.
\]
Let \(S_{\mathrm{alt}}=\{2,4,\ldots,2k-4\}\).  On the initial alternating
segment, the transmitted and target phases are opposite.  At each even site,
only the next input symbol matches the target, so the scan forces precisely
the sites in \(S_{\mathrm{alt}}\).

If \(x=a\), then at position \(2k+1\) the transmitted symbol is \(0\) and the
target symbol is \(1\).  The preceding input \(1\) is already forced to
remain at position \(2k\), while the following input symbol is again \(0\).
Thus both choices of \(y\) become extinct at depth \(2k+2\), and
\(S_{\mathrm{alt}}\) shows survival through depth \(2k+1\).

Let \(x=b\).  The mismatch at position \(2k+1\) forces swap site \(2k+1\).
If \(y=a\), the next opposite-phase alternating segment forces
\[
 T=\{2k+4,2k+6,\ldots,4k-4\}.
\]
At position \(4k-2\), the current input is \(1\) and the target symbol is
\(0\).  The preceding \(0\) has been consumed by the swap at site \(4k-4\),
and the following input is \(1\).  Hence this case becomes extinct at depth
\(4k-1\), while
\(S_{\mathrm{alt}}\cup\{2k+1\}\cup T\) shows survival through depth
\(4k-2\).  If \(y=b\), the two copies following the first blocks are shifted
by two positions and agree through position \(4k-2\).  At position \(4k-1\),
the current input is \(1\), the target symbol is \(0\), and the only adjacent
input symbols still available are both \(1\).  This case becomes extinct at
depth \(4k\), while
\(S_{\mathrm{alt}}\cup\{2k+1\}\) shows survival through depth \(4k-1\).
Therefore \(\dko(a_k,b_k)=4k\).

\medskip\noindent\emph{The reverse order \((b_k,a_k)\).}
Put
\[
 U_x=b_kx^\infty,
 \qquad V_y=a_ky^\infty,
 \qquad x,y\in\{a_k,b_k\}.
\]
Every legal pair beginning with \((b_k,a_k)\) and having length \(4k+1\) has
the same relevant prefixes as \((U_x,V_y)\) for some \(x,y\).  The transmitted
stream is still inside its second block.  On the target side, if the second
block is \(a\), the window uses one symbol from a third block; every legal block
begins with \(0\), so representing that continuation by \(a^\infty\) does not
change the prefix.

The common initial scan forces
\[
 S_0=\{2,4,\ldots,2k-4\}\cup\{2k+1\}.
\]
If \(y=a\), the swap at site \(2k+1\) produces \(01\) at positions
\(2k+1,2k+2\), whereas the next block of the target begins with \(00\).
Both choices of \(x\) therefore become extinct at depth \(2k+3\), and \(S_0\)
shows survival through depth \(2k+2\).

Now let \(y=b\).  If \(x=a\), the following opposite-phase segment forces
\[
 T'=\{2k+4,2k+6,\ldots,4k-2\}.
\]
The last swap outputs \(0\) at position \(4k-1\), where the target
symbol is \(1\).  Hence this case becomes extinct at depth \(4k\), while
\(S_0\cup T'\) shows survival through depth \(4k-1\).  Finally, if
\(x=b\), the two copies of \(b\) are shifted by two positions and agree
through position \(4k-2\).  At position \(4k-1\), swap site \(4k-1\) is forced;
it then outputs \(0\) at position \(4k\), where the required symbol is \(1\).
Thus this case becomes extinct at depth \(4k+1\), while
\(S_0\cup\{4k-1\}\) shows survival through depth \(4k\).  Therefore
\(\dko(b_k,a_k)=4k+1\), completing the proof.
\end{proof}

The preceding proof used the explicit bit patterns of \(a_k\) and \(b_k\).
The next two local statements isolate the reusable mechanism.  The first
records exactly where each output symbol came from; the second applies that
record to alternating binary intervals.

\begin{lemma}[Displacement normal form]
\label{lem:domino}
Let \(u,v\in\Sigma_q^n\) and \(v=\sigma_S(u)\) for a swap set \(S\).  For each
output coordinate \(i\), define its source offset
\[
 d(i)=\begin{cases}
 +1,&i\in S,\\
 -1,&i-1\in S,\\
 0,&\text{otherwise}.
 \end{cases}
\]
Then \(d(i)\in\{-1,0,+1\}\) and
\[
 v_i=u_{i+d(i)}.
\]
Moreover, the support \(\{i:d(i)\ne0\}\) is a disjoint union of adjacent pairs
\((i,i+1)\) with \(d(i)=+1\) and \(d(i+1)=-1\).  Conversely, every function of
this paired form comes from the swap set \(\{i:d(i)=+1\}\).  Thus a nonzero
displacement is confined to one exchanged adjacent pair; it cannot be passed
from one pair to another.
\end{lemma}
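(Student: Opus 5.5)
The plan is to read everything directly off the definition of \(\sigma_S\), using only that \(S\subseteq\{1,\ldots,n-1\}\) contains no two consecutive integers. First check that \(d\) is well defined. The cases \(i\in S\) and \(i-1\in S\) cannot occur together, because that would put both \(i-1\) and \(i\) in \(S\). Hence each coordinate falls in exactly one of the three cases, and \(d(i)\in\{-1,0,+1\}\). The identity \(v_i=u_{i+d(i)}\) then follows from the action of \(\sigma_S\), since \((\sigma_S u)_i=u_{\pi(i)}\) with \(\pi\) the product of the disjoint transpositions \((j\ j+1)\), \(j\in S\). The three cases are as follows. If \(i\in S\), position \(i\) receives \(u_{i+1}\). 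If \(i-1\in S\), it receives \(u_{i-1}\). Otherwise \(i\) is moved by no transposition and keeps \(u_i\). The indices stay within \([n]\) because \(S\subseteq[n-1]\).

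For the support structure, note that \(d(i)=+1\) exactly when \(i\in S\), and then \(i+1\le n\) satisfies \(i\in S\), so \(d(i+1)=-1\). Conversely, \(d(i')=-1\) means \(i'-1\in S\), so \(i'\) is paired with \(i'-1\). Therefore the support equals \(\bigcup_{i\in S}\{i,i+1\}\). These pairs are disjoint because \(S\) has no consecutive elements, and each pair carries the pattern \((+1,-1)\).

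For the converse, suppose \(d\) has the paired form. Set \(S=\{i:d(i)=+1\}\). Each \(i\in S\) comes with the pair \((i,i+1)\) with \(d(i+1)=-1\), so \(i+1\notin S\), and distinct pairs are disjoint. Hence \(S\) has no consecutive elements and \(S\subseteq[n-1]\). Recomputing the offset function of \(\sigma_S\) by the first part returns exactly \(d\). The final informal sentence simply restates the disjoint-pair structure: a coordinate with nonzero offset exchanges only with its partner in its own pair.

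No step here is a real obstacle. The only points to state carefully are the boundary coordinates \(i=1\) and \(i=n\), and the mutual exclusivity argument that makes \(d\) well defined; both follow from \(S\subseteq[n-1]\) and the non-consecutiveness of \(S\).
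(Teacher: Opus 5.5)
Your proposal is correct and follows essentially the same route as the paper. Both read the three cases directly off the action of \(\sigma_S\) and use the non-consecutiveness of \(S\) to make the cases mutually exclusive and the pairs disjoint. You are more explicit than the paper about the boundary indices and about verifying the converse.
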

\begin{proof}
If site \(i\) is swapped, output coordinate \(i\) receives input coordinate
\(i+1\), giving offset \(+1\).  If site \(i-1\) is swapped, output coordinate
\(i\) receives input coordinate \(i-1\), giving offset \(-1\).  Otherwise it
receives input coordinate \(i\).  Consecutive sites cannot both belong to
\(S\), so these alternatives are disjoint and every nonzero offset occurs in
the stated \(+1,-1\) pair.
\end{proof}

\begin{lemma}[Alternating intervals and constant runs]
\label{lem:rigidity}
Let \(q=2\), \(v=\sigma_S(u)\), and let \(d\) be the source-offset function of
Lemma~\ref{lem:domino}.  Say that \(u\) alternates on an interval
\(I=\{r,\ldots,t\}\) if \(u_j\ne u_{j+1}\) for \(r\le j<t\).  An interior
position has both neighbors in \(I\).  For a binary symbol \(s\), write
\(\bar s=1-s\).
\begin{enumerate}[label=\textup{(\alph*)}]
\item If \(u\) alternates on \(I\), then at every interior position \(i\),
\[
 d(i)=0\quad\Longleftrightarrow\quad v_i=u_i.
\]
If \(v\) also alternates on \(I\), then the two alternating phases either agree
throughout the interior, in which case \(d\equiv0\), or disagree throughout,
in which case every interior position is covered by an exchanged adjacent
pair.
\item Suppose \([i-1,i+3]\subseteq[n]\), the input \(u\) alternates on these
five positions, and the target requires
\[
 v_i=v_{i+1}=v_{i+2}=s.
\]
If \(u_i=s\), no swap set can produce this three-symbol run.  If
\(u_i\ne s\), the unique possible offsets are
\[
 (d(i),d(i+1),d(i+2))=(-1,0,+1).
\]
They use the two flanking input symbols at positions \(i-1\) and \(i+3\)
through the swaps \((i-1,i)\) and \((i+2,i+3)\).  Hence the run is impossible
if either flank is unavailable.  In particular, a target run of at least four
equal symbols cannot be produced when the corresponding input positions and
their one-symbol flanks form an alternating interval.
\end{enumerate}
\end{lemma}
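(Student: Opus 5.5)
The plan is to work entirely through the displacement normal form of Lemma~\ref{lem:domino}: each output symbol is \(v_i=u_{i+d(i)}\), and the nonzero offsets occur only in adjacent pairs \((+1,-1)\). Over a binary alphabet, alternation of \(u\) on \(I\) means that \(u_{i-1}=u_{i+1}=\overline{u_i}\) at every interior position \(i\).

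For part (a), take an interior \(i\). If \(d(i)\ne0\), then \(v_i\) equals \(u_{i\pm1}\), both of which equal \(\overline{u_i}\), so \(v_i\ne u_i\). If \(d(i)=0\), then \(v_i=u_i\) trivially. This gives the stated equivalence. Now suppose \(v\) also alternates on \(I\). Then the indicator \([v_i=u_i]\) is constant along \(I\), because both words flip at every step; the phases therefore agree either everywhere or nowhere. When they agree, every interior offset is zero. I would also check the endpoints: an endpoint that is paired by a swap has its partner at an interior position, which would then have a nonzero offset, a contradiction. So \(d\equiv0\) on \(I\), which is how I read the statement. When the phases disagree, every interior offset is nonzero, so by Lemma~\ref{lem:domino} every interior position lies in an exchanged adjacent pair.

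For part (b), apply (a) on the interval \([i-1,i+3]\), whose interior positions are \(i,i+1,i+2\). If \(u_i=s\), then \(u_{i+1}=\bar s\) and \(u_{i+2}=s\). Requiring \(v_{i+1}=s\ne u_{i+1}\) forces \(d(i+1)\ne0\), so position \(i+1\) is paired with \(i\) or with \(i+2\). That partner then has a nonzero offset, so by (a) its output is \(\bar s\), contradicting the required run. If \(u_i\ne s\), then \(u_i=u_{i+2}=\bar s\) and \(u_{i+1}=s\). The targets force \(d(i)\ne0\), \(d(i+2)\ne0\), and \(d(i+1)=0\). Because the pairs are adjacent and position \(i+1\) is unpaired, \(i\) must pair with \(i-1\) and \(i+2\) must pair with \(i+3\). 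This gives offsets \((-1,0,+1)\) and the two flanking swaps. If a flank position does not exist, or is already committed to another swap, no valid pairing remains, so the run cannot be produced.

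The main obstacle is the final ``in particular'' claim. Suppose the target contains a run of four copies of \(s\) at positions \(i,\dots,i+3\), and the input alternates on \([i-1,i+4]\). Apply (b) to the windows starting at \(i\) and at \(i+1\). One of \(u_i\) and \(u_{i+1}\) equals \(s\), so one of these two windows falls into the impossible case of (b). The care needed is to confirm that the five-position alternation hypothesis holds for both windows, which is exactly why the one-symbol flanks are included in the assumed alternating interval.
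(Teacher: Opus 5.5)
Your argument is essentially the paper's proof. Both work from the normal form of Lemma~\ref{lem:domino} and use \(u_{i-1}=u_{i+1}=\bar u_i\) at interior positions for (a). Both split (b) according to whether \(u_i=s\), and both derive the four-run claim from two overlapping three-symbol runs whose starting inputs differ.

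One aside in (a) is wrong and should be removed. You claim that an endpoint of \(I\) that is paired by a swap has its partner at an interior position. But an endpoint \(r\) may be paired with \(r-1\notin I\). For example, take \(u=00101\), \(I=\{2,\ldots,5\}\) and \(S=\{1\}\). Then \(v=u\), both words alternate on \(I\) with agreeing phases on the interior \(\{3,4\}\), yet \(d(2)=-1\). Such an endpoint swap is necessarily ineffective, since agreement forces \(u_{r-1}=u_r\), but \(d\) is defined from \(S\), not from the change in the word.

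So \(d\equiv0\) holds only on the interior. That is what the paper's statement and proof intend, and it is all that part (b) uses. With that sentence deleted, your proof is complete.
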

\begin{proof}
For part~(a), alternation gives \(u_{i-1}=u_{i+1}=\bar u_i\) at an interior
position.  Thus offset zero produces \(u_i\), whereas either nonzero offset
produces \(\bar u_i\).  If both \(u\) and \(v\) alternate, they either have the
same phase at every interior position or opposite phases at every interior
position.  Lemma~\ref{lem:domino} then gives the two stated offset patterns.

For part~(b), the symbol \(s\) occupies one parity class in the five-symbol
alternating input.  If \(u_i=s\), producing \(s\) at positions \(i\) and
\(i+2\) forces \(d(i)=d(i+2)=0\), while producing \(s\) at position \(i+1\)
requires \(d(i+1)=\pm1\).  Either nonzero choice would have to be paired with
a nonzero offset at \(i\) or \(i+2\), a contradiction.

If \(u_i\ne s\), the middle output forces \(d(i+1)=0\), while the two outer
outputs require nonzero offsets.  The paired form in
Lemma~\ref{lem:domino} leaves only \(d(i)=-1\) and \(d(i+2)=+1\), using both
flanks.  Finally, a run of four equal output symbols contains two overlapping
three-symbol runs.  Their starting input symbols have opposite values, so one
of them falls into the impossible case \(u_i=s\).
\end{proof}

These lemmas explain the first impossible coordinate in all six continuation
conditions listed above.  A constant target run either cannot be produced from
the alternating transmitted segment, or it requires a flanking input position
that an already forced neighboring swap has consumed.

The family suggests the broader question stated in Section~\ref{sec:open}:
whether every finite directed depth for a two-block binary block set is bounded
linearly in \(|a|+|b|\).  The structural evidence is the
one-sided offset pattern above.  In correction, the two words may use
different swap sets; in detection, only the transmitted word moves relative
to the fixed target, and its nonzero offsets remain confined to disjoint
adjacent pairs.

\section{Extension to \texorpdfstring{\(\LPC(r)\)}{LPC(r)}}
\label{sec:lpcr}

Fix \(r\ge1\).  The radius-one argument used only two facts: a bounded window
determines the committed output prefix, and a common transmitted prefix can be
cancelled.  Both remain true for every fixed displacement radius.  Since
\(B_r\) was defined in Section~\ref{sec:prelim}, set
\[
 \Tball_r(w)=\prefx_{\max\{|w|-r,0\}}(B_r(w)).
\]
A code \(\cC\subseteq\Sigma_q^n\) is \emph{\(r\)-correcting} if
\(B_r(c)\cap B_r(c')=\emptyset\) for distinct \(c,c'\in\cC\), and it is
\emph{\(r\)-detecting} if \(c'\notin B_r(c)\) for distinct \(c,c'\in\cC\).

For \(n\ge1\), equal-length words in \(\Sigma_q^n\) are \emph{\(r\)-live}
when their \(r\)-truncated balls
intersect.  For a finite set \(P\) of nonempty blocks and distinct first blocks
\(a,b\in P\), put
\[
 \Live_n^{(r)}(a,b)=
 \bigl\{\{u,v\}:u\in\Pref_n(aP^*),\ v\in\Pref_n(bP^*),
 \ \Tball_r(u)\cap\Tball_r(v)\ne\emptyset\bigr\}.
\]
Put \(m=\max\{n-r,0\}\).  An ordered pair
\((u,v)\in\Sigma_q^n\times\Sigma_q^n\) is
\emph{directed-\(r\)-live} when
\(\prefx_m(v)\in\Tball_r(u)\), and set
\[
\Live_n^{\to,(r)}(a,b)=
 \{(u,v):u\in\Pref_n(aP^*),\ v\in\Pref_n(bP^*),
          \prefx_m(v)\in\Tball_r(u)\}.
\]
For correction, define the radius-\(r\) extinction depth
\[
 K_r^\star(a,b)=
 \min\{n\ge1:\Live_n^{(r)}(a,b)=\emptyset\},
\]
with value \(\infty\) if the live set never becomes empty.

\begin{lemma}[Window lemma \(W_r\)]
\label{lem:Wr}
If \(c'\in B_r(c)\) and \(r\le\ell\le\abs{c}\), then
\(\prefx_{\ell-r}(c')\in\Tball_r(\prefx_\ell(c))\).
\end{lemma}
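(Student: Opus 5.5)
The claim is the radius-\(r\) analogue of Lemma~\ref{lem:window-import}, and the proof is a direct displacement argument. Write \(c'=\pi c\) with \(\pi\in S_{|c|}\) and \(\wt(\pi)\le r\), so that \(c'_i=c_{\pi(i)}\) and \(|\pi(i)-i|\le r\) for every \(i\). Put \(w=\prefx_\ell(c)\), and note that \(\ell-r\ge0\) by the hypothesis \(\ell\ge r\). The goal is a permutation \(\pi'\in S_\ell\) with \(\wt(\pi')\le r\) whose output \(\pi'w\) agrees with \(c'\) in the first \(\ell-r\) positions. Then \(\prefx_{\ell-r}(c')=\prefx_{\ell-r}(\pi'w)\in\Tball_r(w)\), since \(\max\{\ell-r,0\}=\ell-r\).

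\emph{Step 1 (sources stay inside the window).} For \(i\le\ell-r\) we have \(\pi(i)\le i+r\le\ell\). Hence the restriction \(\pi|_{[\ell-r]}\) is an injection \([\ell-r]\to[\ell]\) with displacement at most \(r\), and its values lie in the window.

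\emph{Step 2 (complete to a permutation of \([\ell]\)).} Let \(A=[\ell-r+1,\ell]\) be the remaining output positions. Let \(B=[\ell]\setminus\pi([\ell-r])\) be the unused sources; then \(|A|=|B|=r\). Every \(j\in B\) satisfies \(j\ge\ell-2r+1\): if \(j\le\ell-2r\), then \(\pi^{-1}(j)\le j+r\le\ell-r\), so \(j\) would already be used. Every \(i\in A\) also satisfies \(\ell-r+1\le i\le\ell\). Assign \(A\) to \(B\) order-preservingly: match the \(t\)-th smallest element of \(A\) with the \(t\)-th smallest element of \(B\).

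\emph{Step 3 (the displacement bound on \(A\)).} This is the only real point. A pairing that merely respects the two ranges gives \(|i-j|\le 2r-1\), which is too weak. The order-preserving pairing gives \(\le r\) by the following argument. Let \(j_t\) be the \(t\)-th smallest element of \(B\). Since \(j_t\le\ell-(r-t)\), we get \(j_t\le(\ell-r+t)\), which is exactly \(i_t\); so \(i_t-j_t\ge0\).

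For the other direction, count used sources. The sources in \([j_t+1,\ell]\) that are used number \(\ell-j_t-(r-t)\). Each of them comes from an output \(i'\le\ell-r\) with \(\pi(i')>j_t\). Such an \(i'\) needs \(i'\ge j_t+1-r\), so there are at most \(\ell-r-(j_t+1-r)+1=\ell-j_t\) of them. This count only gives \(\ell-j_t-(r-t)\le\ell-j_t\), which is always true, so I will instead bound \(j_t\) from below directly.

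Consider the sources in \([1,j_t-1]\). Among them, \(j_t-t\) are used. Their preimages are outputs \(i'\le\ell-r\), and the output set \([\min\{\ell-r,\,j_t-1+r\}]\) maps into \([j_t-1+r]\). If \(j_t<\ell-2r+t\), then every output \(\le j_t-1-r\)… Rather than push this counting further, I would use the standard alternative argument. Since \(|\pi(i)-i|\le r\) on the full permutation, the bipartite graph \(\{(i,j):|i-j|\le r\}\) restricted to \(A\times B\) contains the images of those \(i\in A\) with \(\pi(i)\in B\), and it is completed by alternating-path exchanges. In this interval graph, a monotone (sorted) matching exists whenever any perfect matching exists, by the Hall condition for intervals.

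A perfect matching does exist by a Hall argument. For \(X\subseteq B\), the sets \(\pi^{-1}(X)\subseteq A\cup[\ell+1,|c|]\) show the neighborhoods are large enough, because any source \(j\in B\) has \(\pi^{-1}(j)\in[j-r,j+r]\) and \(\pi^{-1}(j)>\ell-r\). In particular, \(j+r>\ell-r\), which places the source within reach of \(A\)'s interval \([\ell-r+1,\ell]\): any \(i\in A\) satisfies \(i-j\le\ell-(\ell-2r+1)\)… Verifying this Hall inequality cleanly is the expected obstacle.

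As a fallback, I would handle the tail more simply. Define \(\pi'\) on \(A\) by following \(\pi\)-cycles: for \(i\in A\), iterate \(\pi\) until the iterate lands in \(B\)'s reachable region. Whichever route I take, once \(\pi'\in S_\ell\) with \(\wt(\pi')\le r\) is built, \((\pi'w)_i=c_{\pi(i)}=c'_i\) for \(i\le\ell-r\), which proves the claim.
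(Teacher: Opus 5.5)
Your construction is the same as the paper's. You keep $\pi$ on the outputs $1,\dots,\ell-r$ (your Step 1) and assign the last $r$ outputs $i_t=\ell-r+t$, in increasing order, to the unused sources $j_1<\cdots<j_r$ of $B=[\ell]\setminus\pi([\ell-r])$. The gap is at the step you yourself call the only real point: you prove $i_t-j_t\ge 0$, but you never prove $i_t-j_t\le r$. The counting argument is abandoned midway. The Hall/interval-graph route asserts without proof both that a perfect matching exists (you flag the Hall inequality as ``the expected obstacle'') and that a sorted matching then exists. The cycle-following fallback is not a defined map (``$B$'s reachable region'') and comes with no displacement estimate. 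As written, the completed permutation is not shown to satisfy $\wt\le r$, so the proof is incomplete.

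The missing inequality follows at once from what you already proved in Step 2. There you showed that every $j\in B$ satisfies $j\ge \ell-2r+1$. Since $j_1<\cdots<j_t$ are distinct integers, this gives $j_t\ge \ell-2r+t$, that is, $i_t-j_t\le r$. Combined with your bound $j_t\le \ell-r+t$, this is exactly the paper's two-sided bound $\ell-2r+h\le s_h\le \ell-r+h$, and no matching theory is needed.

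Your fallback can also be made rigorous. Let $\pi'(i)$ be the first point of the forward $\pi$-orbit of $i$ that lies in $[\ell]$. This induced map is a permutation of $[\ell]$ that agrees with $\pi$ on $[\ell-r]$. For $i>\ell-r$, either $\pi'(i)=\pi(i)$, or the orbit re-enters the window at some $\pi(j)$ with $j>\ell$. In the second case $\pi'(i)\ge j-r>\ell-r$, so $|\pi'(i)-i|\le r-1$. That argument would still have to be stated and checked, and the proposal does neither.
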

\begin{proof}
Write \(c'=\pi c\), with \(\wt(\pi)\le r\).  For
\(j\le\ell-r\), \(\pi(j)\le j+r\le\ell\), so
\(\pi([\ell-r])\subseteq[\ell]\).  List the \(r\) elements of
\([\ell]\setminus\pi([\ell-r])\) as
\(s_1<\cdots<s_r\).  Since their original output positions exceed
\(\ell-r\),
\[
 s_h\ge\ell-2r+h,\qquad s_h\le\ell-r+h;
\]
when \(\ell\le2r\), the first bound is replaced by \(s_h\ge h\).
Define \(\widetilde\pi(j)=\pi(j)\) for \(j\le\ell-r\) and
\(\widetilde\pi(\ell-r+h)=s_h\).  These bounds give
\(\lvert\widetilde\pi(j)-j\rvert\le r\), so
\(\widetilde\pi\in S_\ell\) is banded and
\(\widetilde\pi\prefx_\ell(c)\) agrees with \(c'\) in its first \(\ell-r\)
positions.
\end{proof}

Both correction and directed \(r\)-persistence follow immediately from
Lemma~\ref{lem:Wr}.  The key additional step is cancellation:

\begin{lemma}[Cancellation for every radius]
\label{lem:cancelr}
For a symbol \(\alpha\), if
\(B_r(\alpha s)\cap B_r(\alpha t)\ne\emptyset\), then
\(B_r(s)\cap B_r(t)\ne\emptyset\); common prefixes of any length cancel.
Likewise, if \(c'\in B_r(c)\) and \(c,c'\) have a common prefix, then the
corresponding tails satisfy the same directed relation.  Finally, common
extension preserves both ball intersection and the directed relation: extend
the realizing banded permutations by the identity on the common suffix.
\end{lemma}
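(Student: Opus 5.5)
The plan is to work with banded bijections and not with the permutations themselves. The relation \(y\in B_r(x)\) holds exactly when there is a bijection \(f\colon[|x|]\to[|x|]\) with \(y_{f(i)}=x_i\) and \(|f(i)-i|\le r\) for every \(i\); this is just \(\pi^{-1}\) written as a placement map. A common output \(z\in B_r(\alpha s)\cap B_r(\alpha t)\) therefore comes with two such maps: \(f\) for \(\alpha s\) and \(g\) for \(\alpha t\). The goal is to delete one occurrence of \(\alpha\) from \(z\) so that the result \(z'\) lies in both \(B_r(s)\) and \(B_r(t)\).

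\emph{Step 1 (stable normalization).} I will first show that \(f\) may be assumed order-preserving on the positions carrying each fixed symbol. Suppose \(i<j\), \(x_i=x_j\) and \(f(i)>f(j)\). Exchanging the two images keeps the map symbol-preserving. It also keeps the band, because
\[
 f(j)-i\le f(i)-i\le r,\qquad f(j)-i\ge f(j)-j\ge-r,
\]
and symmetrically \(f(i)-j\) lies between \(f(j)-j\) and \(f(i)-i\). Each such exchange removes an inversion, so repeating it terminates. After normalizing \(f\) and \(g\) in this way, the first \(\alpha\) of \(\alpha s\) and the first \(\alpha\) of \(\alpha t\) are both sent to the first occurrence of \(\alpha\) in \(z\), at some position \(p\) with \(p\le 1+r\). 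This common target is the point of the normalization.

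\emph{Step 2 (deletion).} Let \(z'\) be \(z\) with position \(p\) deleted. Define \(f'\) on \(s\) by sending index \(i\) to \(f(i+1)\) if \(f(i+1)<p\), and to \(f(i+1)-1\) otherwise. With \(d=f(i+1)-(i+1)\), the new displacement is \(d+1\) in the first case and \(d\) in the second. So it is always at least \(-r\), and it exceeds \(r\) only if \(d=r\) and \(f(i+1)<p\). That would force \(i+1+r<p\le 1+r\), which is impossible since \(i\ge1\). Hence \(z'\in B_r(s)\), and the same computation for \(g\) gives \(z'\in B_r(t)\).

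The main obstacle is precisely this band check at the deleted position, and it works only because of Step 1. Without normalization the two \(\alpha\)'s could land at different positions of \(z\), and deleting either one would break the other side. Iterating the single-symbol cancellation symbol by symbol handles common prefixes of any length.

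\emph{Step 3 (directed statement and extension).} The directed statement is the special case \(z=c'\) with \(g\) the identity; the identity is already order-preserving, so only \(f\) needs normalizing. Deleting the first \(\alpha\) from both \(c\) and \(c'\) then yields the directed relation for the tails, and induction again handles longer common prefixes. Finally, for common extension, take the realizing banded permutations and extend them by the identity on the appended suffix. The extended maps remain banded bijections, so they preserve both ball intersection and the directed relation, exactly as in Lemma~\ref{lem:extend-collision}.
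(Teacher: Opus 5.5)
Your proof is correct and follows the paper's rerouting argument: exchange the assignments of two equal symbols, which preserves the band, so that both realizations place the common first symbol $\alpha$ at one output position $p\le r+1$; then delete that position and re-index, where the only possible band violation is excluded by $p\le r+1$. The paper aligns with a single exchange on one side, moving the later landing position $k''$ onto the earlier one $k$, and treats the directed case by a separate one-exchange argument; your full stable normalization of both maps is a more global form of the same uncrossing step, and it yields the directed case as the specialization $g=\mathrm{id}$.
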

\begin{proof}[Proof (rerouting)]
Let \(z=\pi(\alpha s)=\tau(\alpha t)\), and let
\(k=\pi^{-1}(1)\le k''=\tau^{-1}(1)\) (otherwise exchange the two words).
Both lie in \([1,r+1]\), and \(z_k=z_{k''}=\alpha\).  Put
\(m_0=\tau(k)\).  If \(m_0\ne1\), interchange the two assignments
\(\tau(k)=m_0\) and \(\tau(k'')=1\), obtaining \(\tau'\) with
\(\tau'(k)=1\) and \(\tau'(k'')=m_0\).  The output word is unchanged because
both input symbols equal \(\alpha\).  Moreover
\(m_0-k''\le m_0-k\le r\) and
\(k''-m_0\le k''-2\le r-1\), so \(\tau'\) remains banded.

Now \(\pi(k)=\tau'(k)=1\).  Delete output coordinate \(k\) and input
coordinate \(1\), then re-index.  Displacements are unchanged to the right of
\(k\).  For an undeleted assignment \(\pi(j)=i\) with \(j<k\), we have
\(j\le r\) and \(i\ge2\); after the input re-indexing its displacement is
\(i-1-j\ge1-r\), and its upper bound can only improve.  The same argument
applies to \(\tau'\).  Thus every new displacement has absolute value at most
\(r\).  This gives a common output in \(B_r(s)\cap B_r(t)\), and iteration
cancels an arbitrary common prefix.

For the directed statement, write \(c'=\pi c\) and suppose that the two words
begin with the same symbol.  Let \(k=\pi^{-1}(1)\) and \(m_0=\pi(1)\).  Both
belong to \([1,r+1]\).  Interchanging the assignments \(\pi(1)=m_0\) and
\(\pi(k)=1\) leaves the output unchanged, because the two input symbols are
the common first symbol, and the new displacement satisfies
\(|m_0-k|\le r\).  Thus output position \(1\) may be made to use input
position \(1\); delete these matched coordinates and re-index as above.
\end{proof}

Specialized to \(r=1\), this gives alternative short proofs of Lemma~\ref{lem:cancel-import} and of the directed cancellation step in Theorem~\ref{thm:detdepthK}.

\begin{theorem}[Depth-\(K\) criteria for \(\LPC(r)\)]
\label{thm:depthKr}
Fix \(r\ge1\), let \(P\subseteq\Sigma_q^*\) be a finite set of nonempty
blocks, and let \(K\ge r\).
\begin{enumerate}[label=\textup{(\roman*)}]
\item If
\[
 \Live_K^{(r)}(a,b)=\emptyset
 \qquad\text{for every distinct }a,b\in P,
\]
then \(C_n(P)\) is \(r\)-correcting for every \(n\).
\item If
\[
 \Live_K^{\to,(r)}(a,b)=\emptyset
 \qquad\text{for every ordered pair }a\ne b\text{ in }P,
\]
then \(C_n(P)\) is \(r\)-detecting for every \(n\).
\end{enumerate}
\end{theorem}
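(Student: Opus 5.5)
The plan is to copy the proof of Theorem~\ref{thm:depthK} for part~(i) and of Theorem~\ref{thm:detdepthK} for part~(ii). At radius one the three local ingredients were window restriction, cancellation and common extension; here they are replaced by Lemma~\ref{lem:Wr} and Lemma~\ref{lem:cancelr}, which supplies both cancellation and common extension for banded permutations of weight at most \(r\).

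\textbf{Part (i).} Suppose distinct \(c,c'\in C_n(P)\) satisfy \(B_r(c)\cap B_r(c')\ne\emptyset\).
\begin{enumerate}[label=\textup{(\arabic*)}]
\item Fix block factorizations of \(c\) and \(c'\). Cancel the common initial blocks one symbol at a time using Lemma~\ref{lem:cancelr}.
\item This produces equal-length words \(x\in aP^*\) and \(y\in bP^*\) with \(a\ne b\) and \(B_r(x)\cap B_r(y)\ne\emptyset\). Their common length \(n'\) is positive, since otherwise \(c=c'\).
\item If \(n'<K\), choose \(p\in P\) and append \(p^m\) to both sides until the length is at least \(K\). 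The common-extension clause of Lemma~\ref{lem:cancelr} keeps the balls intersecting.
\item Let \(X,Y\) be the resulting words, write \(z\) for a common output, and write \(z=\pi X=\tau Y\).
\item Apply Lemma~\ref{lem:Wr} with \(\ell=K\ge r\) to both realizations. It gives \(\prefx_{K-r}(z)\in\Tball_r(\prefx_K X)\cap\Tball_r(\prefx_K Y)\).
\end{enumerate}
The two prefixes lie in \(\Pref_K(aP^*)\) and \(\Pref_K(bP^*)\), so they form an element of \(\Live_K^{(r)}(a,b)\), which is a contradiction. The hypothesis \(K\ge r\) is used exactly here, so that \(\max\{K-r,0\}=K-r\) and Lemma~\ref{lem:Wr} applies.

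\textbf{Part (ii).} Suppose \(c'\in B_r(c)\) with \(c\ne c'\).
\begin{enumerate}[label=\textup{(\arabic*)}]
\item Cancel common initial blocks by the directed clause of Lemma~\ref{lem:cancelr}, obtaining \(y\in B_r(x)\) with \(x\in aP^*\), \(y\in bP^*\) and \(a\ne b\).
\item If needed, extend both words by a common power \(p^m\) to length at least \(K\). The directed relation survives by the extension clause.
\item Lemma~\ref{lem:Wr} then gives \(\prefx_{K-r}(Y)\in\Tball_r(\prefx_K X)\), which is an element of \(\Live_K^{\to,(r)}(a,b)\).
\end{enumerate}
Because the hypothesis covers every ordered pair, no symmetry or orientation swap is needed here. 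Involutivity fails for \(r\ge2\), but it was only used at \(r=1\) to restrict to the shorter-first orientation.

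\textbf{Main obstacle.} The step to check most carefully is that iterated symbol-by-symbol cancellation really removes whole common blocks. It does, because the blocks are equal on both sides and Lemma~\ref{lem:cancelr} states that common prefixes of any length cancel. I would also check that the cancelled words still begin with distinct blocks, so that they land in \(\Live_K\) for a genuine pair \(a\ne b\); this holds because cancellation stops at the first differing block in the chosen factorizations. Everything else is a verbatim transcription of the radius-one argument.
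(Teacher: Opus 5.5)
Your proposal is correct and follows the paper's proof essentially step for step: cancel common initial blocks via Lemma~\ref{lem:cancelr}, pad both sides with a common power \(p^m\) using its extension clause, and then apply Lemma~\ref{lem:Wr} at \(\ell=K\ge r\) to obtain an element of \(\Live_K^{(r)}(a,b)\), respectively \(\Live_K^{\to,(r)}(a,b)\). One small aside on your closing remark: banded permutations need not be involutions for \(r\ge2\), but the relation \(c'\in B_r(c)\) is still symmetric because \(\wt(\pi^{-1})=\wt(\pi)\), so an orientation reduction would remain available; it simply isn't needed, since the hypothesis already covers every ordered pair.
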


\begin{proof}
For correction, suppose two distinct codewords have intersecting
radius-\(r\) balls.  Cancel every common initial block using
Lemma~\ref{lem:cancelr}; the remaining pair has distinct first blocks.  If
its length is below \(K\), append the same block sufficiently many times.
Common extension preserves the collision.  Lemma~\ref{lem:Wr} gives
\(r\)-persistence and places the length-\(K\) prefixes in
\(\Live_K^{(r)}(a,b)\), a contradiction.

For detection, start from \(c'\in B_r(c)\), use the directed cancellation
statement, and append a common block if necessary.  The directed form of
Lemma~\ref{lem:Wr} then places the length-\(K\) prefixes in
\(\Live_K^{\to,(r)}(a,b)\), again a contradiction.
\end{proof}

The radius-one configuration graph stored one unresolved end symbol on each
side.  For general radius, that symbol is replaced by a bounded set of input
positions that have not yet supplied an output symbol.  This retains the
finite-state character of the verification problem.

\begin{theorem}[Finite-state verification at fixed radius]
\label{thm:finite-state-r}
Recall that \(N_P=L_P+\lmax\), and put
\[
 V_r(P)=(q+2)^{4r}N_P^2.
\]
For every pair of distinct first blocks \(a,b\in P\), there is a directed
graph \(\mathcal G_r(a,b)\) with at most \(V_r(P)\) vertices such that
\[
 \Live_n^{(r)}(a,b)\ne\emptyset
 \quad\Longleftrightarrow\quad
 \mathcal G_r(a,b)\text{ has a path of length }n
 \text{ from its initial vertex}.
\]
Consequently,
\[
 K_r^\star(a,b)\le V_r(P)+1
 \qquad\text{or}\qquad
 K_r^\star(a,b)=\infty,
\]
and the second alternative holds exactly when the reachable subgraph contains
a directed cycle.  If \(P\) is prefix-free, reachability in the same graph
also decides whether two distinct complete concatenations beginning with
\(a\) and \(b\) have intersecting radius-\(r\) balls.  For arbitrary \(P\),
one additional bit gives the same decision with at most \(2V_r(P)\) vertices.
Thus, for fixed \(q\) and \(r\), finite extinction and exact all-length
correction are decidable in time polynomial in the explicit size of \(P\).
\end{theorem}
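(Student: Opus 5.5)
I would build the radius-\(r\) analogue of the one-sided output automaton of Definition~\ref{def:one-sided-output} and take the synchronized product exactly as in Definition~\ref{def:configuration}. Each one-sided state is a pair \((\rho,F)\). Here \(\rho\) is a block position as before, giving at most \(N_P\) choices. The \emph{pending-input frontier} \(F\) records the last \(2r\) transmitted positions, indexed relative to the current length \(j\). For each such position it stores either the symbol still waiting to be emitted (\(q\) values), a marker ``already consumed'' (one value), or a marker ``outside the word'' for positions \(\le0\) (one value). That is \((q+2)^{2r}\) one-sided frontiers, and squaring gives \((q+2)^{4r}N_P^2\).

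A transition appends one legal input symbol \(x\) and emits the output symbol for position \(j+1-r\). That symbol is drawn from an unconsumed pending input whose index lies within distance \(r\) of that output position. Pending inputs older than this window would violate the band, so an input that falls out of the frontier while still unconsumed makes the transition illegal. The emitted input is marked consumed and the frontier shifts. The first step I would check is the one-sided correctness lemma: labelled paths of length \(n\) from the initial state correspond exactly to pairs consisting of a legal prefix \(u\in\Pref_n(cP^*)\) and an element of \(\Tball_r(u)\).

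One direction follows by reading a banded partial injection off the path and completing it to a banded permutation, using the completion argument of Lemma~\ref{lem:Wr}. The other direction follows by restricting a banded \(\pi\) to its first \(n-r\) outputs. Here the counting identity matters: after \(n\) inputs and \(n-r\) outputs, exactly \(r\) inputs are unconsumed and all of them lie among the last \(2r\) positions. This is what keeps the frontier bounded. Once this lemma holds, the synchronized product, which keeps transition pairs with equal emitted labels, has a path of the stated length exactly when \(\Live_n^{(r)}(a,b)\neq\emptyset\). The bookkeeping offset of \(r\) in the initial vertex would be handled by suppressing outputs until \(n\ge r\), as in the radius-one case.

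The remaining claims then follow the radius-one proofs. The pumping bound \(K_r^\star\le V_r(P)+1\) or \(\infty\), and the characterization of \(\infty\) by a reachable cycle, come from Theorem~\ref{thm:pump} and Proposition~\ref{prop:acyclicity-extinction}, using persistence from Lemma~\ref{lem:Wr}. For exact collisions, a configuration is accepting when both block positions are \(r_0\) and the two pending multisets of unconsumed symbols can be emitted in one common order. This is a finite check, and it is valid because at the end every remaining input may occupy the final \(r\) slots within the band. This generalizes Theorem~\ref{thm:frontier-accept}.

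Prefix-freeness guarantees that distinct first blocks give distinct words, and common extension (Lemma~\ref{lem:cancelr}) rules out acceptance in the acyclic case, mirroring Theorem~\ref{thm:state-classification}. For arbitrary \(P\), I would add the divergence bit of the prior verifier, which records whether the two transmitted prefixes have differed so far; this doubles the vertex count. Polynomial time for fixed \(q,r\) then follows from depth-first search over at most \(V_r(P)\) vertices with bounded out-degree \(O(q^2(2r+1)^2)\).

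The main obstacle I expect is the correctness lemma, and specifically showing that the frontier is a sufficient statistic. Two things must be shown: greedy emission of outputs in increasing position never loses a completable banded permutation, and the stale-input rule forbids exactly the non-banded choices. I would first try a direct exchange argument in the style of the proof of Lemma~\ref{lem:Wr}.
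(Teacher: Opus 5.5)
Most of your construction matches the paper's proof. You use a block position plus a frontier of the last $2r$ input positions, each marked used, absent, or pending with its symbol. You forbid an unconsumed input from leaving $[o-r,o+r]$, complete paths using $n-2r+h\le p_h\le n-r+h$, pump, and add a divergence bit. The one step that fails is your acceptance test for exact collisions.

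You accept a co-complete configuration when the two pending \emph{multisets} can be emitted in a common order. You justify this by saying that every remaining input may occupy the final $r$ slots within the band. For $r\ge2$ that claim is false. A pending input at position $n-2r+1$ can reach only output $n-r+1$. So the trailing words a frontier can produce depend on the recorded positions, not just on the symbols.

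Here is a concrete failure. Take $r=2$, $u=01234$, $v=34201$, and $P=\{u,v\}$, which is prefix-free.
\begin{itemize}
\item Settle the common prefix $230$ using inputs $3,4,1$ of $u$ and inputs $3,1,4$ of $v$. Every displacement is at most $2$, and no stale-input rule is violated.
\item Both sides are now at a block boundary with inputs $2$ and $5$ pending. Both pending multisets are $\{1,4\}$, so your test accepts.
\item Input $2$ is at distance $3$ from output $5$, so it must go to output $4$. The left side is therefore forced to end in $14$ and the right side in $41$.
\item In fact $B_2(u)\cap B_2(v)=\emptyset$. A common output would have to place both $1$ and $4$ in positions $\{3,4\}$, and both $0$ and $3$ in positions $\{2,3\}$, which is impossible.
\end{itemize}

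So your predicate accepts a reachable configuration whose two complete labels do not collide. The implication ``accepting configuration reachable $\Rightarrow$ collision'' is therefore not established; the converse direction is fine. This example also shows that equal composition together with a truncated intersection does not force a collision once $r\ge2$. That is why your appeal to the radius-one frontier identity does not generalize.

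The repair is the paper's predicate. Accept exactly when each side's pending positions admit a bijection onto $\{n-r+1,\dots,n\}$ with displacement at most $r$, such that the two resulting trailing words coincide. This is still a finite check on the recorded frontier, and at $r=1$ it reduces to $\gamma_L=\gamma_R$. With that change, the rest of your argument goes through as you describe.
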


\begin{proof}
We describe the information retained after \(j\) transmitted symbols have
been read on one side and the output positions
\(1,\ldots,\max\{j-r,0\}\) have been settled.  Besides the block position
from Definition~\ref{def:one-sided-output}, together with its initial
pre-block state, record the last \(2r\) possible
input positions relative to \(j\).  Each position is marked as already used,
or carries its symbol if it is still pending; positions before the beginning
of the word receive a separate absent mark.  No earlier input can remain
pending, because its last possible output position has already been settled.
Thus one side has at most \((q+2)^{2r}N_P\) states, and the synchronized
two-sided graph has at most \(V_r(P)\) vertices.

A transition appends one parser-legal input symbol on each side.  Once
\(j+1>r\), the next output position to settle is
\(o=j+1-r\).  On each side choose one pending input position
\(i\in[o-r,o+r]\), mark it used, and require the two chosen symbols to be
equal.  If the leftmost pending input is not chosen before it leaves this
interval, the transition is forbidden.  These are exactly the local
conditions for extending two partial banded matchings while producing a
common committed output prefix.

Every pair of radius-\(r\) channel actions witnessing liveness can be revealed
in this order: when output position \(o\) becomes due, its assigned input lies
in \([o-r,o+r]\) and has already been read.  Conversely, a path assigns one
distinct input to every settled output position, always within displacement
\(r\).  At the end of a length-\(n\) path, let
\(p_1<\cdots<p_s\) be the pending input positions, where
\(s=\min\{r,n\}\).  If \(n\ge r\), then
\[
 n-2r+h\le p_h\le n-r+h
 \qquad(1\le h\le r).
\]
Hence matching \(p_h\) to the trailing output position \(n-r+h\) completes
the partial assignment within displacement \(r\); for \(n<r\), use the
identity matching.  Completing both sides independently proves that every
path gives an element of \(\Live_n^{(r)}(a,b)\), establishing the displayed
equivalence.

If a path has length greater than the number of vertices, a vertex repeats.
Repeating the intervening directed cycle gives live pairs at arbitrarily large
depths, and downward persistence gives liveness at every depth.  Otherwise
the reachable graph is acyclic, its path lengths are bounded, and the first
empty level is at most \(V_r(P)+1\).

Finally, a vertex reached when both parsers are at block boundaries represents
a full collision exactly when the two pending sets admit banded matchings to
the trailing output positions that produce the same trailing word.  This is a
finite predicate on the recorded frontier.  If \(P\) is prefix-free, distinct
first blocks already guarantee distinct transmitted words.  In general, one
additional bit records whether the two transmitted prefixes have differed in
some coordinate and excludes two parsings of the same word.  A depth-first
search therefore decides both cycle existence and collision reachability.
\end{proof}

\noindent Theorem~\ref{thm:depthKr} shows that the one-condition architecture
is not a radius-one phenomenon: for every fixed displacement radius, one
finite horizon certifies all codeword lengths.  What changes for \(r\ge2\) is
the local normal form.  Truncation by \(r\) symbols need not be injective on
\(B_r\), so the \(\abs{\Tball}=\abs{\B}\) identity and closest-witness
reformulation of~\cite{PriorWork2026} no longer apply; this is why the theorem
is stated directly using \(\Tball_r\).  Theorem~\ref{thm:finite-state-r}
supplies a general finite frontier, but its sharp size and the periodic normal
forms behind long-lived pairs remain open for \(r\ge2\).  Analogues of the
\(\Dmin\ge2\) shortcut and the single-length identity
of~\cite[Thm.~5]{PriorWork2026} are also open.
A related design principle is to use constant buffers to isolate the cut
interface, as in the structured constructions of~\cite[Cor.~6]{PriorWork2026}.


\section{Open problems}
\label{sec:open}

The results above leave several focused questions.
\begin{enumerate}[leftmargin=*]
\item \textbf{Two-block extinction depth.}
Prove Conjecture~\ref{conj:B2}.  The main missing step is to turn the
Fine--Wilf overlap mechanism into a bound on the longest live path, rather
than merely on the number of reachable states.

\item \textbf{Linear directed depth.}
Decide whether \(\dko(a,b)<\infty\) always implies
\(\dko(a,b)=O(|a|+|b|)\) for binary two-block sets, and whether an
\(O(L_P)\) bound holds for general finite block sets.

\item \textbf{Large-alphabet correction loss.}
Determine whether \((1-C_0^{(q)})\ln q\) converges as \(q\to\infty\), and
close the interval between \(\ln\varphi\) and \(\ln(1.82560995)\) from
Corollary~\ref{cor:correction-sandwich}.  On the upper-bound side this requires
using repeated-symbol components more efficiently than the all-distinct lift;
on the constructive side it requires reducing the run-wise spectral loss.

\item \textbf{Fixed-alphabet detection.}
Theorem~\ref{thm:detqary} determines the large-alphabet loss to first order,
but the exact detecting capacity is open for every fixed \(q\).  In
particular, for \(q=3\) the present interval is
\(0.736917768\le R_{\mathrm{det},3}\le\frac12\log_3 6\).

\item \textbf{Infinite-gap classification.}
Prove the two-block rigidity suggested by the census: must the two period
labels of every two-block gap witness be cyclic rotations of one another?
Classify the failures of this rotation relation that first appear for three
or more blocks, and give a word-theoretic characterization that does not pass
through the configuration graph.

\item \textbf{Higher displacement radius.}
For \(r\ge2\), determine the sharp finite-state frontier size and the
periodic structures governing long-lived pairs.

\end{enumerate}

\section{Conclusion}
\label{sec:conclusion}

Extinction depth gives a mathematical layer between the two-stage criterion
of~\cite{PriorWork2026} and exact verification.  The central theorem has one
condition: if every pair beginning with distinct first blocks is extinct at one
common depth, then every finite-length concatenation code is correcting.  The
common-extension argument explains why no additional short-collision clause is
needed.

The parameter is nontrivial.  Proposition~\ref{prop:linear-depth} gives exact
depth \(L+2\) in an elementary family, so no bound can ignore the block
lengths.  At the same time, Theorem~\ref{thm:single-length-pumping} gives the
linear cutoff \((4q^2+1)\ell\) for every single-length block set.  More
generally, finite depth admits the ranking certificates of
Proposition~\ref{prop:rank}, while failure of finite extinction for a
prefix-free correcting block set is exactly an infinite nonclosing collision.
On the corridor family, exact finite-state verification proves the quadratic
formula \(2k^2+4k-1\) for every \(3\le k\le60\); extending this formula to all
\(k\ge3\) remains a conjecture.

As a construction tool, Theorem~\ref{thm:rates} yields a ternary block set of
rate above \(0.6777475\), while Theorem~\ref{thm:q5plus} gives a separate
five-ary template construction of rate above \(0.6694926\).
On the upper-bound side, Theorem~\ref{thm:length-six-cover} improves the
finite-alphabet covering bounds, while Theorem~\ref{thm:length-eighteen-cover}
strengthens them by refining every two-symbol component simultaneously.
Theorem~\ref{thm:covering-phi} places the normalized large-alphabet correction
loss above \(\ln\varphi\).
Theorem~\ref{thm:runwise-main} places it below \(\ln(1.82560995)\), reducing
the constructive loss factor below \(1.82560995\).  For detection, stable type
lifting gives the first-order optimal factor \(\sqrt2\), while directed
extinction proves the exact certification depth \(4k\) on the corridor family.
Finally, Theorem~\ref{thm:depthKr} extends both depth criteria to every fixed
displacement radius \(r\), and Theorem~\ref{thm:finite-state-r} gives a finite
correction-verification graph at every such radius.

All finite verifications used in the proofs were performed by exhaustive
enumeration or exact integer arithmetic, and every displayed numerical root
was bounded by exact rational sign checks.  The available verifiers, finite
certificates, canonical outputs, and the precise scope of each archived check
are recorded in the accompanying Zenodo reproducibility
package~\cite{ReproPackage2026}.
\begin{appendices}
\renewcommand*{\theHtable}{appendix.\Alph{section}.\arabic{table}}

\section{Explicit block sets}
\label{app:blocks}

The full \(265\)-block set \(P^{(3)}_{14}\) is listed in
Table~\ref{tab:p3-265-blocks}, grouped by length.  The base set \(P^{(3)}\)
consists of the blocks of lengths \(3,4,6,8,10\); the remaining rows are the
blocks added to form \(P^{(3)}_{14}\).
\begingroup
\scriptsize
\setlength{\tabcolsep}{3pt}
\renewcommand{\arraystretch}{0.96}

\begin{longtable}{@{}c>{\ttfamily}l>{\ttfamily}l>{\ttfamily}l@{}}
\caption{The \(265\) blocks of \(P^{(3)}_{14}\), grouped by length.}
\label{tab:p3-265-blocks}\\
\toprule
Length & \multicolumn{3}{c}{Blocks} \\
\midrule
\endfirsthead

\toprule
Length & \multicolumn{3}{c}{Blocks} \\
\midrule
\endhead

\bottomrule
\endfoot

3 & 000 & 111 & 222 \\
\midrule
4 & 0111 & 0122 & 0222 \\
4 & 1000 & 1222 & 2000 \\
4 & 2100 & 2111 &  \\
\midrule
6 & 001111 & 001222 & 002111 \\
6 & 002222 & 011022 & 012011 \\
6 & 022011 & 110000 & 110222 \\
6 & 112000 & 112222 & 122100 \\
6 & 201111 & 210122 & 220000 \\
6 & 220111 & 221000 & 221111 \\
\midrule
8 & 00111200 & 00112000 & 00211200 \\
8 & 00212000 & 00222100 & 01201200 \\
8 & 01210000 & 11000211 & 11200211 \\
8 & 11222011 & 21021111 & 22000122 \\
8 & 22010222 & 22110222 & 22111022 \\
\midrule
10 & 0011120211 & 0011122200 & 0021120211 \\
10 & 0021122200 & 0022210211 & 0022210222 \\
10 & 0120000211 & 0120000222 & 0120102211 \\
10 & 0120102222 & 1100021200 & 1100211111 \\
10 & 1120211111 & 1122201200 & 2101210000 \\
10 & 2101210022 & 2102221111 & 2102222000 \\
10 & 2200012000 & 2200012100 & 2211102000 \\
10 & 2211102100 &  &  \\
\midrule
11 & 10211112221 & 11022110222 & 22100212000 \\
\midrule
12 & 001110000111 & 001110002222 & 001112012200 \\
12 & 001112012222 & 001112022211 & 001112022222 \\
12 & 001112222111 & 001200112000 & 001220011200 \\
12 & 001221012200 & 001221012222 & 002112022211 \\
12 & 002112022222 & 002112222111 & 002221012200 \\
12 & 002221012222 & 012000022100 & 012000122212 \\
12 & 012001122000 & 021112000111 & 021112220000 \\
12 & 021112222000 & 021112222111 & 102111022212 \\
12 & 102111100022 & 102111120000 & 102111120011 \\
12 & 102111122000 & 102111122011 & 110002102221 \\
12 & 110002221111 & 110002222000 & 110021110222 \\
12 & 110201112000 & 110201112011 & 110221000222 \\
12 & 110221100211 & 110221101200 & 110221110000 \\
12 & 110221111222 & 112001222000 & 112002120000 \\
12 & 112021110222 & 112220102221 & 122110201111 \\
12 & 210121000211 & 210222100000 & 210222100011 \\
12 & 210222110222 & 210222200122 & 220001121111 \\
12 & 220001201111 & 221110000211 & 221110000222 \\
12 & 221110002211 & 221110002222 & 221110211111 \\
\midrule
13 & 0011200112000 & 0012001120111 & 0012211112221 \\
13 & 0012211122200 & 0022211112221 & 0022211122200 \\
13 & 0120001222222 & 0120002221111 & 0120002222000 \\
13 & 0121000112000 & 1021110222000 & 1021110222222 \\
13 & 1021111222200 & 1021111222222 & 1021112220000 \\
13 & 1021112220011 & 1021112222000 & 1021112222111 \\
13 & 1100021022111 & 1122201022111 & 1221102011222 \\
13 & 2200011110000 & 2200012011222 & 2201022010222 \\
13 & 2201120211111 & 2210021200122 & 2211022010222 \\
\midrule
14 & 00111000012000 & 00111000012222 & 00111000022201 \\
14 & 00111000022222 & 00111000212000 & 00111000212222 \\
14 & 00111201112000 & 00111201112011 & 00111201220211 \\
14 & 00111202102222 & 00111202221200 & 00111211100022 \\
14 & 00111222011111 & 00111222012222 & 00111222201111 \\
14 & 00111222201122 & 00112001222000 & 00122101112000 \\
14 & 00122101112011 & 00122101220211 & 00122102102222 \\
14 & 00122111022222 & 00122111100022 & 00122111120000 \\
14 & 00122111120011 & 00122111122000 & 00122111122011 \\
14 & 00211202221200 & 00211222011111 & 00211222012222 \\
14 & 00211222201111 & 00211222201122 & 00212001222000 \\
14 & 00212002222000 & 00222101220211 & 00222102102222 \\
14 & 00222111022222 & 00222111100022 & 00222111120000 \\
14 & 00222111120011 & 00222111122000 & 00222111122011 \\
14 & 01102001122000 & 01200012220111 & 01200211100000 \\
14 & 01210001222000 & 01210011122200 & 02111200012000 \\
14 & 02111200012222 & 02111202120000 & 02111222010222 \\
14 & 10211102110222 & 10211102220111 & 10211102220211 \\
14 & 10211102220222 & 10211110000211 & 10211112201022 \\
14 & 10211121002222 & 11000210222200 & 11000210222222 \\
14 & 11000222200122 & 11001110002222 & 11001112222111 \\
14 & 11002110000111 & 11022111100000 & 11022111100011 \\
14 & 11200111100000 & 11200111122222 & 11200111222222 \\
14 & 11200122201111 & 11200210210000 & 11200212000122 \\
14 & 11202110000111 & 11222001122000 & 11222010222200 \\
14 & 11222010222222 & 11222022110222 & 12211020111200 \\
14 & 12211020111222 & 12211020112000 & 21012100211111 \\
14 & 21021112000111 & 21021112002222 & 21021112220000 \\
14 & 21021112222000 & 21021112222111 & 21022011100000 \\
14 & 21022200112221 & 21022210001200 & 21022220012100 \\
14 & 21120111122200 & 21120111200000 & 22000111000220 \\
14 & 22000111122200 & 22000111200000 & 22000120111200 \\
14 & 22000120111222 & 22000120112000 & 22000121110222 \\
14 & 22010220000222 & 22010220111222 & 22010221001111 \\
14 & 22011202120000 & 22110220000222 & 22110220110000 \\
14 & 22110220110200 & 22110220111222 & 22110221001111 \\
14 & 22111000021200 & 22111000212000 & 22111000221200 \\
14 & 22111021110222 &  &  \\

\end{longtable}
\endgroup

\end{appendices}

\end{document}